\newtheorem{theorem}{Theorem}[section]
\newtheorem{lemma}[theorem]{Lemma}
\newtheorem{corollary}[theorem]{Corollary}
\newtheorem{conjecture}[theorem]{Conjecture}
\newtheorem{claim}{Claim}
\theoremstyle{definition}
\theoremstyle{remark}
\newcommand{\Oh}{\mathcal{O}}
\newcommand{\inst}{\mathcal{I}}
\newcommand{\flow}{\mathcal{P}}
\newcommand{\witnessflow}{\widehat{\flow}}
\newcommand{\resreach}[1]{\mathrm{RReach}(#1)}
\newcommand{\lastreach}[2]{\mathrm{LastReach}(#1,P_{#2})}
\newcommand{\leader}[3]{\mathrm{leader}_{#1}(#2,#3)}
\newcommand{\closure}[1]{\mathrm{cl}(#1)}
\newcommand{\block}{\mathbf{B}}
\newcommand{\longbackarcs}{\mathbf{L}}
\newcommand{\ellthreshold}{\ell^{\mathrm{big}}}
\newcommand{\leftblock}{b_\leftarrow}
\newcommand{\rightblock}{b_\rightarrow}
\newcommand{\corecut}[1]{\mathrm{core}(#1)}
\newcommand{\corecutG}[2]{\mathrm{core}_{#2}(#1)}
\newcommand{\cutlb}{\kappa}
\newcommand{\bundles}{\mathcal{B}}
\newcommand{\weight}{\omega}
\newcommand{\grantthankstext}{This research is a part of a project that have received funding from the European Research Council (ERC) under the European Union's Horizon 2020 research and innovation programme
Grant Agreement 714704 (M. Pilipczuk). Eun Jung Kim is supported by the grant from French National Research Agency under JCJC program (ASSK: ANR-18-CE40-0025-01).}
\newcommand{\grantthanks}{\parbox[t]{0.78\linewidth}{\grantthankstext{}}\ \parbox[t]{0.18\linewidth}{~\\[-3mm]\includegraphics[height=30px]{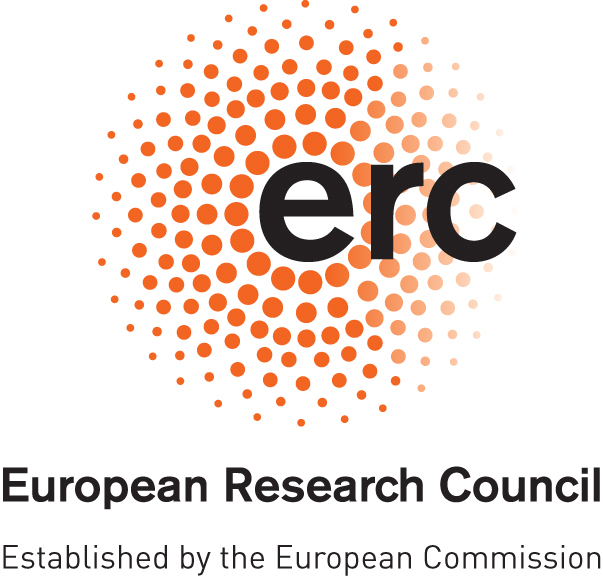}\ \includegraphics[height=30px]{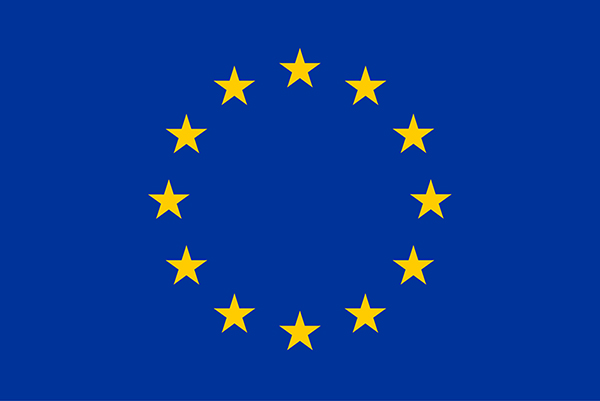}}}
\title{Flow-augmentation I: Directed graphs%
\thanks{\grantthanks{}\\A preliminary version of this work was presented at STOC 2022~\cite{dfl-stoc}.}}
\date{}
\author[1]{Eun Jung Kim}
\author[2]{Stefan Kratsch}
\author[3]{Marcin Pilipczuk}
\author[4]{Magnus Wahlstr\"{o}m}
\affil[1]{Universit\'{e} Paris-Dauphine, PSL Research University, CNRS, UMR 7243, LAMSADE, 75016, Paris, France.}
\affil[2]{Humboldt-Universit\"at zu Berlin, Germany}
\affil[3]{University of Warsaw, Warsaw, Poland}
\affil[4]{Royal Holloway, University of London, TW20 0EX, UK}
\begin{document}

\begin{titlepage}
\def\thepage{}
\thispagestyle{empty}

\maketitle

\begin{abstract}
We show a flow-augmentation algorithm in directed graphs:
There exists a randomized polynomial-time algorithm that,
given a directed graph $G$, two vertices $s,t \in V(G)$, and an integer $k$, 
adds (randomly) to $G$ a number of arcs such that for every minimal $st$-cut $Z$ in $G$
of size at most $k$, with probability $2^{-\mathrm{poly}(k)}$ the set $Z$ becomes a
\emph{minimum} $st$-cut in the resulting graph. 
We also provide a deterministic counterpart of this procedure.

The directed flow-augmentation tool allows us to prove fixed-parameter tractability
of a number of problems parameterized by the cardinality of the deletion set,
whose parameterized complexity status was repeatedly posed as open problems:
\begin{enumerate}
\item \textsc{Chain SAT}, defined by Chitnis, Egri, and Marx [ESA'13, Algorithmica'17], %
\item a number of weighted variants of classic directed cut problems, such as \textsc{Weighted $st$-Cut} or \textsc{Weighted Directed Feedback Vertex Set}.
\end{enumerate}
By proving that \textsc{Chain SAT} is FPT, we confirm a conjecture of Chitnis, Egri, and Marx
that, for any graph $H$, if the \textsc{List $H$-Coloring} problem is polynomial-time solvable,
then the corresponding vertex-deletion problem is fixed-parameter tractable.

\end{abstract}

\end{titlepage}

\section{Introduction}
The study of graph separation problems has been one of the more vivid areas of parameterized
complexity in the recent 10--15 years. 
The term ``graph separation problems'' is here used widely, and captures a number of classic graph problems where,
given a (undirected or directed) graph $G$ with a cut budget $k$
(and possibly some annotations, such as terminal vertices), one aims at obtaining some
separation via at most $k$ edge or vertex deletions. For example, 
the classic \textsc{$st$-Cut} problem asks to delete at most $k$ edges so that 
there is no $s$-to-$t$ path in the resulting graph
and the \textsc{Feedback Vertex Set} asks to remove at most $k$ vertices
so that the resulting graph does not contain any cycles (i.e., is a forest 
in the undirected setting or a DAG in the directed setting). 
In all these problems, the cardinality of the deletion set is a natural parameter to study. 

In 2004, Marx introduced the notion of \emph{important separators}~\cite{Marx04,Marx06}
that turned out to be the key to fixed-parameter tractability of \textsc{Multiway Cut}
and \textsc{Directed Feedback Vertex Set}~\cite{ChenLLOR08}, among many other examples. 
In subsequent years, the study of graph separation problems resulted in a rich toolbox
of algorithmic techniques, such as \emph{shadow removal}~\cite{MarxR14}, 
treewidth reduction~\cite{MarxOR13}, or randomized contractions~\cite{ChitnisCHPP16,CyganLPPS19,CyganKLPPSW21}.

This resulted in a relatively mature, but not fully complete, 
landscape of the parameterized complexity of graph separation problems in undirected graphs.
The remaining questions tackled mostly weighted variants of the problems 
(e.g., the weighted \textsc{Multicut} problem) or more intricate variants
of classic problems (e.g., the \textsc{Coupled Mincut} problem\footnote{
In this problem, we are given an undirected graph $G$ with distiguished terminals $s,t \in E(G)$
and some edges coupled up in pairs. The question is to separate $s$ from $t$ at minimum
cost under the following conditions: at cost $1$ one can delete an unpaired edge or both
edges from a pair, but for every undeleted pair $(e,e')$, one cannot leave
both $e$ and $e'$ reachable from $s$ in the remaining graph.}).

For directed graphs, the chartered landscape is much less complete. 
The notion of important separators and related technique of shadow removal
generalizes to directed graphs, leading to fixed-parameter tractability of
\textsc{Directed Feedback Vertex Set}~\cite{ChenLLOR08},
\textsc{Directed Multiway Cut}~\cite{ChitnisHM13}, and \textsc{Directed Subset Feedback Vertex Set}~\cite{ChitnisCHM15}.
A number of problems whose undirected counterparts are FPT turned out to be $W[1]$-hard in
the directed setting, including \textsc{Directed Multicut}~\cite{ChitnisHM13,PilipczukW18a} or
\textsc{Directed Odd Cycle Transversal}~\cite{LokshtanovR0Z20}. 

However, these results are far from satisfactorily charting the parameterized complexity of
directed graph separation problems. 
Arguably, we seem to lack algorithmic techniques. Most notably, the powerful
treewidth reduction theorem~\cite{MarxOR13}, stating that (in undirected graphs) all separations
of size at most $k$ between two fixed terminals live in a part of the graph with treewidth
bounded by $2^{\Oh(k)}$, seems not to have any meaningful counterpart in directed graphs. 
As a result, essentially all known FPT algorithms 
for graph separation problems in directed graphs rely in some part on important separators,
which is a greedy argument bounding the number of cuts between two terminals of bounded size that have
inclusion-wise maximal set of vertices reachable from one of the terminals. 
For problems where such a ``greedy'' aspect of the solution cannot be assumed,
such as problems with weights or annotations, this method fails to apply.

On the other hand, despite efforts in the last years, we were not able to prove lower bounds
for FPT algorithms for many directed graph separation problems. 
This suggests that maybe there are still more algorithmic techniques to be explored,
leading to more positive tractability results. 

In this work, we provide such a technique, which we call \emph{flow-augmentation}.
\begin{theorem}\label{thm:intro}
There exists a randomized polynomial-time algorithm that, 
given a directed graph $G$, two vertices $s,t \in V(G)$, and an integer $k$,
outputs a set $A \subseteq V(G) \times V(G)$ such that the following holds:
for every minimal $st$-cut $Z \subseteq E(G)$ of size at most $k$,
with probability $2^{-\Oh(k^4 \log k)}$
$Z$ remains an $st$-cut in $G+A$ and, furthermore, $Z$ is a \emph{minimum} $st$-cut in $G+A$.
\end{theorem}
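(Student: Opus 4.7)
The plan is to prove Theorem~\ref{thm:intro} via a randomized augmentation loop that appends arcs to $G$ until the minimum $st$-cut reaches $|Z|$, at which point $Z$ --- being an $st$-cut of size exactly the min-cut value --- is automatically minimum. At each step I would compute a maximum $st$-flow and extract the unique \emph{source-side minimum $st$-cut} $C^* = \delta^+(R^*)$, where $R^*$ is the inclusion-wise minimal $s$-side of a minimum cut; well-definedness follows from submodularity of the cut function. Let $R$ denote the $s$-side of $Z$ in the current graph. Candidate arcs for the next step are restricted to ordered pairs of vertices incident to $C^*$ together with $\{s,t\}$, a set of $O(k^2)$ pairs, and the ``safety'' requirement that $Z$ remain an $st$-cut forces the added arc to lie within $R$, within $V(G)\setminus R$, or to go from $V(G)\setminus R$ into $R$. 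A uniformly random choice from this candidate pool matches any given good pair with probability $\Omega(k^{-2})$.

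The heart of the argument, and the main obstacle, is a structural lemma asserting that whenever $|C^*| < |Z|$, there exists a candidate pair $(u,v)$ such that adding the arc $(u,v)$ preserves $Z$ as an $st$-cut and either strictly increases the minimum $st$-cut size or strictly decreases a carefully chosen secondary potential that is bounded by $k^{O(1)}$ (and not by $n$). Natural candidates for the secondary potential include the number of arcs of $C^*$ that do not lie in $Z$, or a measure of how ``far'' $R^*$ is from being a subset of $R$, counted only at vertices incident to $C^*$. The proof of the lemma should rely on the submodular inequality $|\delta^+(R^* \cap R)| + |\delta^+(R^* \cup R)| \le |\delta^+(R^*)| + |\delta^+(R)|$, combined with a residual-graph analysis showing that an appropriate arc linking a vertex of $R \cap R^*$ with a vertex of $R \setminus R^*$ (or handling the symmetric case when $R^*\not\subseteq R$ via the sink-side cut) either creates an augmenting path through the current flow or refines the source-side min cut to agree with $Z$ on one more arc.

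For the probability and running-time accounting: the min-cut value grows by at least one each time it grows and is capped at $|Z| \le k$, yielding $O(k)$ ``outer'' progress steps; between consecutive outer steps the secondary potential, bounded by $k^{O(1)}$, strictly shrinks, giving a total of $k^{O(1)}$ steps. Each step succeeds with probability $k^{-O(1)}$, so the overall success probability is $k^{-k^{O(1)}} = 2^{-O(k^4 \log k)}$ once the exponents are balanced. The main technical difficulty is (a) defining a secondary potential bounded purely in $k$ so that random guesses can be confined to the $O(k)$ vertices incident to $C^*$ (rather than all of $V(G)$), and (b) proving the structural lemma that such a restricted guess always makes progress; both rest on submodularity of directed cuts and the canonical lattice structure of minimum $st$-cuts. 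A deterministic counterpart would follow by exhaustive enumeration of the $k^{O(1)}$ candidate choices at each step.
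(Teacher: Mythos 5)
Your high-level plan — iteratively add compatible arcs until $\lambda_{G+A}(s,t)$ reaches $|Z|$, with random guesses restricted to a $k^{O(1)}$-size candidate pool, and track a secondary potential to cap the number of iterations — does capture the overall shape of the problem, but the part you flag as the "heart of the argument and the main obstacle" is left unproven, and in fact your specific choices there would not work. There are two concrete gaps.

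First, the candidate pool of $O(k^2)$ pairs incident to the single source-side min-cut $C^*$ (plus $\{s,t\}$) is too small. The paper has to argue about a whole \emph{sequence} of min-cuts $C_1,\dots,C_\ell$ (the "$H$-sequence") whose length $\ell$ can be as large as $n$, and the arcs it ultimately adds connect vertices of \emph{different} cuts $C_a, C_b$ in this sequence, sometimes far apart. Because $\ell$ is not bounded by $k$, a uniformly random index cannot be guessed; the paper copes by a color-coding step (a random biased subset $\Gamma\subseteq\{0,\dots,\ell\}$ intended to isolate the "touched" indices where $Z$ interacts with the sequence, Claim~\ref{cl:bound-touched}), which is a qualitatively different mechanism than sampling a pair from a fixed $O(k^2)$ pool. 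There is no obvious argument that progress can always be made with an arc whose both endpoints live on $C^*$ itself: the edges of a minimal $st$-cut $Z$ of size $k>\lambda$ can be distributed along the flow paths far from $C^*$, and a submodularity argument alone (via $R^*\cap R$ and $R^*\cup R$) tells you that the uncrossed cuts are still min-cuts but does not produce a local arc that increases $\lambda$ or provably shrinks your potential.

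Second, neither of your two suggested potentials — "number of arcs of $C^*$ not in $Z$" or "distance of $R^*$ from $R$ at vertices of $C^*$" — is shown to be monotone under a restricted augmentation, and it is unclear how to make either work. The potential the paper actually uses (in the regime where $\lambda$ stays fixed) is the edge count $|E(H)|$ of a "reachability pattern" $H$: a directed graph on the $\lambda$ flow paths recording which pairs of paths can reach each other through the residual graph. This is $\le \lambda^2 \le k^2$ but is a \emph{global} quantity encoding how the paths interweave, not a local quantity at $C^*$; establishing that it strictly decreases after the appropriate contraction/recursion requires the "leader" and "mincut $H$-subsequent to $C$" machinery (Lemmas~\ref{lem:proper-pattern}, \ref{lem:trans-pattern}, Claims after Claim~\ref{cl:late-leader}). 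Moreover, the paper's algorithm is not an arc-addition loop but a recursion that contracts the $s$-side and $t$-side of chosen cuts and, in the "long $\ell$" case, splits the instance into several disjoint subinstances whose budgets sum to $k$; this partitioning is essential for the $2^{-O(k^4\log k)}$ bound and has no analogue in your iterative scheme. Even the base case (reachability pattern with only self-loops) needs a separate argument (Lemma~\ref{lem:base-case-structure}, Lemma~\ref{lem:base-case}) that any set of one bottleneck edge per flow path is a min-cut — none of this is replaced by a submodularity argument. In short, you have correctly isolated \emph{what} needs to be shown but not \emph{how}; the "structural lemma" you posit is where essentially all the work resides, and the natural guesses you offer for it do not go through.
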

Here, a set $Z \subseteq E(G)$ is an \emph{$st$-cut} if there is no path from $s$ to $t$ in $G-Z$;
it is \emph{minimal} if no proper subset of $Z$ is an $st$-cut and \emph{minimum}
if it is of minimum possible cardinality. 
By $G+A$ we mean the graph $G$ with all elements of $A$ added as infinity-capacity arcs. 

The proof of Theorem~\ref{thm:intro} is presented in Section~\ref{sec:flow-augmentation}. 
(There, we actually prove a slight generalization that is handy in some applications.)
We also provide a deterministic counterpart.
\begin{theorem}\label{thm:intro-det}
There exists an algorithm that,
given a directed graph $G$, two vertices $s,t \in V(G)$, and an integer $k$,
in time $2^{\Oh(k^4 \log k)} |V(G)|^{\Oh(1)}$ 
outputs a set $\mathcal{A} \subseteq 2^{V(G) \times V(G)}$ of size $2^{\Oh(k^4 \log k)} (\log n)^{\Oh(k^3)}$
such that for every minimal $st$-cut $Z \subseteq E(G)$ of size at most $k$
there exists $A \in \mathcal{A}$
such that $Z$ remains an $st$-cut in $G+A$ and, furthermore, $Z$ is a \emph{minimum} $st$-cut in $G+A$.
\end{theorem}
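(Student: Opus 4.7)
My plan is to derandomize the algorithm behind Theorem~\ref{thm:intro}. The starting observation is that a randomized algorithm with success probability $2^{-\Oh(k^4 \log k)}$ can be turned into a deterministic procedure producing a family of size roughly $1/(\text{success probability})$ times an overhead, provided the algorithm's random choices can be decomposed into pieces each amenable to a standard derandomization tool. The final bound $|\mathcal{A}| = 2^{\Oh(k^4 \log k)} (\log n)^{\Oh(k^3)}$ already suggests that the random bits should split into two classes: a $k$-depending part enumerated by brute force, and an $n$-depending part handled by a combinatorial tool producing $(\log n)^{\Oh(k^3)}$ options.

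Concretely, I would inspect the randomized construction and isolate two groups of coin flips. The first group consists of \emph{combinatorial randomness}: random bijections, random orderings, and small-alphabet labels indexed by $k$-bounded objects, with total state space of size $2^{\Oh(k^4 \log k)}$. Such randomness is derandomized by exhaustive enumeration of the state space. The second group consists of \emph{scale randomness}: $\Oh(k^3)$ independent draws from a domain of size $\Oh(\log n)$, arising naturally from guessing, for each of $\Oh(k^3)$ critical objects associated with the hypothetical minimal cut $Z$, the (rounded-logarithm of a) relevant cardinality or threshold. This group is derandomized by enumerating all $(\log n)^{\Oh(k^3)}$ joint outcomes. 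Multiplying the two contributions gives the claimed size of $\mathcal{A}$, and composing with a polynomial-time run of the underlying algorithm for each enumerated setting yields the claimed running time.

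The main obstacle is that this clean two-level decomposition is unlikely to be immediate from the randomized algorithm as stated: random bits may be interleaved with data-dependent branching or recursive subcalls, and some sources of randomness might be more naturally phrased as sampling a subset of $V(G)$ that must cover an unknown $\Oh(k)$- or $\Oh(k^3)$-sized set. To cope with this, I would rewrite the randomized algorithm so that every random bit is drawn from either (i) a discrete set whose size depends only on $k$, or (ii) a range of size $\Oh(\log n)$, grouping the bits into disjoint buckets of the required total sizes. Where random subset selection is unavoidable, I would replace the sampling step by an $(n, \Oh(k^3))$-perfect hash family or an $(n,k^3,k^3)$-splitter of Naor--Schulman--Srinivasan, whose sizes are still within the stated bound. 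Once the decomposition is in place, correctness is inherited directly from Theorem~\ref{thm:intro}: for any minimal $st$-cut $Z$ of size at most $k$, the randomized algorithm succeeds for some joint setting of the random bits, and by the defining property of the enumeration/derandomization objects that setting corresponds to some $A \in \mathcal{A}$.
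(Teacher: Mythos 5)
Your high-level plan matches the paper's: replace each $k$-bounded random guess by exhaustive branching, and replace the random-separation and color-coding steps by Naor--Schulman--Srinivasan-type splitter/universal families (the paper's derandomization lemmas are exactly these objects), then track a $\log n$ factor per recursion level over recursion depth $\Oh(k^3)$. So the ``two buckets'' picture is the right intuition.

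However, you correctly name the real obstacle --- randomness interleaved with data-dependent branching and recursion --- and then resolve it only by fiat (``I would rewrite the randomized algorithm so that every random bit is drawn from \ldots grouping the bits into disjoint buckets of the required total sizes''). This is where the proposal has a genuine gap. The randomized algorithm does not follow a single path down a decision tree: in the large-$\ell$ case it spawns several recursive calls, one per excellent index $\alpha$, and for the fixed hypothetical cut $Z$ it must be correct on \emph{all} of the calls corresponding to actual blocks simultaneously. A naive derandomization would take the family $\mathcal{A}^\alpha$ returned by each subcall and form the cross product, and without further control the sizes multiply in a way that depends on the full width of the recursion tree, not merely its depth, which would destroy the claimed $2^{\Oh(k^4 \log k)} (\log n)^{\Oh(k^3)}$ bound. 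The paper closes this with a budgeting argument: each subcall has parameter $k^\alpha$ and flow value $\lambda^\alpha = |D^\alpha|$, and the invariant $\sum_{\iota} \bigl(2k_Z^\iota - |L_Z^{\iota-1} \setminus L_Z^\iota|\bigr) \le 2k - \lambda$ bounds the combined complexity of all subcalls by the complexity of the parent; the families $\mathcal{A}^\alpha$ are then stitched together by iterating over \emph{aligned} index tuples $(i_\iota)_{\iota \in J}$ --- one index per $\iota \in J$, reused across all $\alpha$ with $\iota(\alpha)=\iota$ --- rather than over an unrestricted cross product. This aligned-tuple iteration plus the potential-function inequality is the non-trivial content of the derandomization, and your proposal does not contain it; without it the claimed size of $\mathcal{A}$ does not follow. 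A smaller imprecision: the per-step $\Oh(\log n)$ factor does not arise from ``guessing a rounded logarithm of a cardinality,'' it is simply the size of the splitter/universal family over an $n$-element universe.
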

We remark that
\begin{equation}\label{eq:intro-det}
 (\log n)^{\Oh(k^3)} = 2^{\Oh(k^3 \log \log n)} \leq 2^{\Oh(k^4 + (\log \log n)^4)} = 2^{\Oh(k^4)} n^{o(1)}. 
\end{equation}
However, in some applications in~\cite{csp-arxiv}, we stack a bounded-in-$k$ number
of usages of Theorem~\ref{thm:intro-det} on top of each other, and for the sake of the analysis there
it is more
convenient to keep the part of the running time bound that depends on $n$
in the form of $(\log n)^{f(k)}$, as opposed to $f(k) \cdot n^{o(1)}$ in~\eqref{eq:intro-det}.

\paragraph{Applications}
To illustrate the applicability of the directed flow-augmentation, let us first
consider the \textsc{Weighted $st$-Cut} problem. Here, we are given
a directed graph $G$ with two terminals $s,t\in V(G)$, a weight function
$\weight:E(G) \to \mathbb{Z}_+$, and two integers $k,W \in \mathbb{Z}_+$,
and we ask for an $st$-cut $Z$ of cardinality at most $k$ and total weight at most $W$.
This problem is known to be NP-hard and FPT when parameterized by $k+W$~\cite{KratschLMPW20}.

By using directed flow-augmentation, we can ensure that the sought solution $Z$
is actually a \emph{minimum} $st$-cut (i.e., of minimum cardinality).
Then, a solution can easily be found in polynomial-time:
take $M := 1 + \sum_{e \in E(G)} \weight(e)$, set the capacity of every 
edge $e$ as $\weight(e) + M$, and find an $st$-cut of minimum capacity. 
This yields the following.
\begin{theorem}
\textsc{Weighted $st$-Cut} can be solved in time $2^{\Oh(k^4 \log k)} n^{\Oh(1)}$.
\end{theorem}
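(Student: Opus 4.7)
The plan is to combine the flow-augmentation result of Theorem~\ref{thm:intro} with the capacity-inflation trick sketched in the paragraph preceding the statement. First, I observe that we may assume the sought solution $Z$ is an inclusion-minimal $st$-cut: given any $Z'$ with $|Z'|\le k$ and $\weight(Z')\le W$, an arbitrary inclusion-minimal $st$-cut $Z\subseteq Z'$ still satisfies both bounds. Hence Theorem~\ref{thm:intro} applied to $(G,s,t,k)$ produces, with probability $2^{-\Oh(k^4\log k)}$, an augmentation set $A$ in which $Z$ is a \emph{minimum} $st$-cut of $G+A$. We treat the arcs of $A$ as having infinite capacity, so no finite-capacity $st$-cut of $G+A$ contains them.

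Next, on $G+A$ I would run a single max-flow / min-cut computation with modified capacities: every original edge $e\in E(G)$ gets capacity $M+\weight(e)$ where $M:=1+\sum_{e\in E(G)}\weight(e)$, while arcs of $A$ keep infinite capacity. Since any $st$-cut of cardinality exactly $c$ has total modified capacity in the interval $[cM,(c+1)M)$, the minimum-capacity cut $Z^\star\subseteq E(G)$ returned by the algorithm is lexicographically smallest in $(|Z^\star|,\weight(Z^\star))$. In particular, whenever the good event of Theorem~\ref{thm:intro} occurs for some admissible solution $Z$, we have $|Z^\star|\le|Z|\le k$ and $\weight(Z^\star)\le\weight(Z)\le W$, so $Z^\star$ certifies a \yes{} answer.

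To boost the success probability to a constant, I would repeat the whole procedure $2^{\Oh(k^4\log k)}$ times and output \no{} only if no iteration produces a valid cut; the resulting running time is $2^{\Oh(k^4\log k)} n^{\Oh(1)}$. The analogous reduction applied to every set $A\in\mathcal{A}$ returned by Theorem~\ref{thm:intro-det} yields a deterministic algorithm with the same asymptotic running time. I do not foresee a genuine obstacle here: the weight-to-capacity transformation is classical and every nontrivial ingredient has been packaged into Theorem~\ref{thm:intro} (resp.\ Theorem~\ref{thm:intro-det}).
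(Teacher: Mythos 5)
Your proof is correct and follows essentially the same route as the paper: reduce to the case where the sought cut is inclusion-minimal, apply flow-augmentation so that it becomes a minimum $st$-cut with probability $2^{-\Oh(k^4\log k)}$, and then use the capacity-shift $\weight(e)\mapsto M+\weight(e)$ (with arcs of $A$ at infinite capacity) so that a single max-flow computation returns the lexicographically smallest cut in $(|Z|,\weight(Z))$. The paper's exposition is just a one-line sketch; you have correctly filled in the minimality reduction, the interval argument $[cM,(c+1)M)$, the probability-boosting by repetition, and the deterministic variant via Theorem~\ref{thm:intro-det} (whose family size $2^{\Oh(k^4\log k)}(\log n)^{\Oh(k^3)}$ is indeed $2^{\Oh(k^4\log k)}n^{\Oh(1)}$ by the estimate in the introduction).
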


The above approach turns out to apply more generally.
An instance of \textsc{Bundled Cut} consists of a directed multigraph $G$, vertices $s,t \in V(G)$, a nonnegative integer $k$, and a family $\bundles$ of pairwise disjoint subsets of $E(G)$.
The elements of $\bundles$ are henceforth called \emph{bundles}.
A \emph{cut} in a \textsc{Bundled Cut} instance $\inst = (G,s,t,k,\bundles)$ is an $st$-cut $Z$ 
with $Z \subseteq \bigcup \bundles$. 
The \textsc{Bundled Cut} problem asks for an $st$-cut that intersects at most $k$ bundles, that is,
    $|\{B \in \bundles~|~Z \cap B \neq \emptyset\}| \leq k$. 
One can also define a weighted variant of the \textsc{Bundled Cut} problem where
every bundle $B \in \bundles$ is equipped with a weight $\weight(B) \in \mathbb{Z}_+$,
we are given also a weight bound $W$, 
and we ask for an $st$-cut $Z$ that intersects at most $k$ bundles whose total weight
is at most $W$.

If all bundles are singletons, then \textsc{Bundled Cut} just asks for an $st$-cut of size at most
$k$ (with the edges of $E(G) \setminus \bigcup \bundles$ being undeletable), so it is
polynomial-time solvable. It is known that, when restricted to bundles of size $2$,
\textsc{Bundled Cut} is $W[1]$-hard when parameterized by $k$~\cite{MarxR09}. 
To get tractability, 
we define the following restriction.
An edge $e \in E(G)$ is \emph{soft} if $e \in \bigcup \bundles$ and \emph{crisp} otherwise.
An edge $e \in E(G)$ is \emph{deletable} if it is soft and there is no parallel arc to $e$
that is crisp, and \emph{undeletable} otherwise.
Clearly, we can restrict ourselves to cuts $Z$ that consist of deletable arcs only. 
An instance $(G,s,t,k,\bundles)$ has \emph{pairwise linked deletable edges} if for every
$B \in \bundles$ and every two deletable edges $e_1,e_2 \in B$, there exists a path in $G$
from an endpoint of $e_1$ to an endpoint of $e_2$ that uses only edges of $B$
and undeletable edges. Note that this path may have length zero, 
i.e., it suffices that $e_1$ and $e_2$ intersect. 

Using directed flow-augmentation, we show the following.
\begin{theorem}\label{thm:intro:bundled-cut}
\textsc{Weighted Bundled Cut}, restricted to instances with pairwise linked deletable edges,
 is FPT when parameterized by $k$ and the maximum number of deletable edges in a bundle. 
\end{theorem}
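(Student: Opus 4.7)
The plan is to use the directed flow-augmentation tool (Theorem~\ref{thm:intro}, or its derandomised counterpart Theorem~\ref{thm:intro-det}) to turn the sought solution into a \emph{minimum} $st$-cut, and then to exploit the pairwise-linked property to select among the minimum $st$-cuts one that respects the bundle budget and weight constraint.

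Let $d$ denote the maximum number of deletable edges in any bundle of the input. Any solution $Z$ to the \textsc{Weighted Bundled Cut} instance satisfies $|Z| \le k \cdot d$. Without loss of generality $Z$ is a minimal $st$-cut: replacing $Z$ by any minimal subset that is still an $st$-cut can only reduce the set of intersected bundles and hence the total weight. We therefore apply Theorem~\ref{thm:intro} with cut-size parameter $k' := k \cdot d$, repeated $2^{\mathrm{poly}(kd)}$ times, so that with high probability the chosen $Z$ becomes a minimum $st$-cut in the augmented graph $G + A$.

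In $G + A$, assign capacity $1$ to every deletable edge, and capacity $+\infty$ to every undeletable edge and to every arc of $A$. Under this capacity assignment, any minimum $st$-cut consists exclusively of deletable edges, and, by the success event above, $Z$ is one of them. What remains is the following deterministic problem: given the capacitated $G + A$, find some minimum $st$-cut that intersects at most $k$ bundles of total weight at most $W$. The structure of minimum $st$-cuts in $G+A$ is captured, via the Picard--Queyranne theorem, by closures in the DAG of strongly connected components of the residual graph of a maximum $st$-flow.

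The key structural step is to invoke the pairwise-linked property. Fix two deletable edges $e_1, e_2$ of a common bundle $B$ both lying in a min cut, and consider the pairwise-linked path from an endpoint of $e_1$ to an endpoint of $e_2$: it uses only edges of $B$ and undeletable edges. Since undeletable arcs carry forward residual capacity $+\infty$ and any $B$-edge on the path is either unsaturated or saturated in a controlled way, the path's residual lift forces the endpoints of $e_1$ and $e_2$ to lie in a tightly constrained configuration within the cut-DAG. A careful case analysis (case-split on which edges of $B$ on the path are themselves in the cut and which residual direction they offer) shows that the possible intersections of a minimum $st$-cut with $B$ form a family of size bounded by some function of $d$. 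This collapses each bundle to a constant-sized set of ``cut patterns'' and reduces the task to a bundle-weighted minimum $st$-cut problem on an auxiliary graph, which can be solved by min-cost flow, or by enumerating the at most $\binom{kd}{k}$ candidate sets of bundles implicated in a size-$kd$ cut.

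The \emph{main obstacle} is this structural claim: that pairwise-linked truly forces each bundle to contribute only a bounded family of cut patterns to the Picard--Queyranne lattice. This is precisely the content that separates the tractable case addressed here from the general \textsc{Bundled Cut} with bundles of size $2$, which is $W[1]$-hard by~\cite{MarxR09}; the remaining ingredients (flow-augmentation, capacitated reduction, min-cost flow on the auxiliary graph) are standard once this structural handle is in place.
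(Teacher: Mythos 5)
Your opening move matches the paper exactly: apply directed flow-augmentation with cut-size budget $kd$ so that the sought solution becomes a minimum $st$-cut in $G+A$, and assign infinite capacity to undeletable arcs. From that point on, however, your proposal has a genuine gap exactly where you flag it.

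Your ``key structural step'' --- that the pairwise-linked property forces each bundle $B$ to contribute only a bounded family of ``cut patterns'' to the Picard--Queyranne lattice --- is asserted but never established, and as stated it is either vacuous or does not suffice. It is vacuous in the weak reading: since $B$ has at most $d$ deletable edges, trivially at most $2^d$ subsets of $B$ can appear in a cut. It does not suffice in the strong reading you need: even with a bounded menu of patterns per bundle, you still must choose, across an \emph{unbounded} number of candidate bundles, which $\le k$ of them to touch and with which pattern, subject to the global constraint that the union is a closed set in the residual DAG. Nothing in the Picard--Queyranne description a priori makes this a tractable selection problem. Your fall-back of enumerating ``the at most $\binom{kd}{k}$ candidate sets of bundles'' is unjustified: you have $kd$ candidate \emph{edges} from one particular maximum flow, but the bundles overlapping a min cut are not drawn from a fixed pool of $kd$; different min cuts can hit entirely different bundles, and the number of bundles in the instance is unbounded.

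The paper takes a different route that actually closes this gap. After flow-augmentation it fixes a maximum flow $\flow=\{P_1,\dots,P_\lambda\}$ and observes that any min $st$-cut $Z$ takes exactly one deletable edge $f_i$ on each $P_i$. It then runs several rounds of guessing and color-coding: the number $\kappa$ of violated bundles; a map $\alpha:[\lambda]\to[\kappa]$ saying which violated bundle owns each $f_i$; a color $\gamma(B)\in[\kappa]$ for every bundle (derandomized via Theorem~\ref{thm:color-coding}); a designated deletable edge $e(B,i)$ within each candidate bundle for each $i$ with $\alpha(i)=\gamma(B)$; and finally, for each pair $i_1,i_2$ with $\alpha(i_1)=\alpha(i_2)=j$, which of the four configurations of the pairwise-linked connection (tail at $s$, head at $t$, path one way, path the other way) the violated bundle realizes. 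The crucial filtering step then uses the pairwise-linked property to discard any candidate bundle $B'$ whose designated edges are ``crossed'' by another candidate $B$ on the flow paths, because $B$'s internal path would then give an $st$-path avoiding $Z$ (Figure~\ref{fig:bundled-cut}). After this filtering, the surviving bundles of each color $j$ are \emph{linearly ordered} along the flow paths, and the whole problem collapses to a single minimum-weight $st$-cut of fixed cardinality $\kappa$ in a small auxiliary graph $H$ built from these chains plus crisp-reachability edges. This linear-ordering step is the structural handle that your proposal identifies as missing but does not supply, and it is what makes the final polynomial-time cut computation valid. Without an argument of that strength, the reduction to min-cost flow or to an enumeration is not justified.
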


For an integer $\ell \geq 1$, the \textsc{$\ell$-Chain SAT} problem is 
the \textsc{Bundled Cut} problem, where every bundle is a path of length at most $\ell$. 
At ESA'13, Chitnis, Egri, and Marx~\cite{ChitnisEM13,ChitnisEM17} defined the \textsc{$\ell$-Chain SAT} problem 
and showed that fixed-parameter tractability of 
\textsc{$\ell$-Chain SAT} (for every fixed $\ell \geq 1$) is equivalent to the following 
conjecture.
\begin{conjecture}[Conjecture 1.1 of~\cite{ChitnisEM17}]\label{conj:H-col}
For every graph $H$, if \textsc{List $H$-Coloring} problem is polynomial-time solvable,
then the vertex-deletion variant (delete at most $k$ vertices from the input graph
    to obtain a yes-instance to \textsc{List $H$-Coloring}) is fixed-parameter tractable
when parameterized by $k$. 
\end{conjecture}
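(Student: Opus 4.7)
The plan is to combine the equivalence established by Chitnis, Egri, and Marx in~\cite{ChitnisEM17} with Theorem~\ref{thm:intro:bundled-cut}. Their reduction states that Conjecture~\ref{conj:H-col} holds if and only if \textsc{$\ell$-Chain SAT} is fixed-parameter tractable parameterized by $k$ for every fixed $\ell \geq 1$. I take this equivalence as given, so the task reduces to proving FPT of \textsc{$\ell$-Chain SAT} for each fixed~$\ell$.

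The key observation is that \textsc{$\ell$-Chain SAT} is, by definition, the restriction of \textsc{Bundled Cut} in which every bundle is a path of length at most $\ell$. To invoke Theorem~\ref{thm:intro:bundled-cut} on such instances, I verify two hypotheses. First, each bundle contains at most $\ell$ edges and hence at most $\ell$ deletable edges, so the ``maximum number of deletable edges in a bundle'' parameter is bounded by $\ell$. Second, the pairwise linked deletable edges condition is automatic for path bundles: for any two edges $e_1, e_2$ in a bundle $B$, the sub-path of $B$ between an endpoint of $e_1$ and an endpoint of $e_2$ already uses only edges of $B$, so the required walk exists regardless of which edges in $B$ are deletable and which are undeletable. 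The unweighted \textsc{$\ell$-Chain SAT} embeds into the weighted version trivially by assigning every bundle weight $1$ and setting $W = k$.

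Applying Theorem~\ref{thm:intro:bundled-cut} therefore yields an FPT algorithm for \textsc{$\ell$-Chain SAT} parameterized by $k + \ell$. Since $\ell$ is a fixed constant in the statement of the conjecture, this is FPT in $k$ alone for every fixed $\ell \geq 1$. Together with the Chitnis--Egri--Marx equivalence, this confirms Conjecture~\ref{conj:H-col}.

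The main work in the argument is not in this final assembly but in establishing Theorem~\ref{thm:intro:bundled-cut}, which itself rests on the flow-augmentation theorems (Theorems~\ref{thm:intro} and~\ref{thm:intro-det}) that form the technical core of the paper. In the present step, the only subtle points are matching definitions carefully --- in particular, that arbitrary-length path bundles automatically satisfy the pairwise linked deletable edges condition --- and observing that the ``$k$ plus maximum number of deletable edges per bundle'' parameterization collapses to ``$k$'' once $\ell$ is fixed outside the parameter. I do not expect any genuine obstacle beyond carefully quoting the Chitnis--Egri--Marx reduction from \textsc{List $H$-Coloring} vertex-deletion to \textsc{$\ell$-Chain SAT}, which already appears in~\cite{ChitnisEM17}.
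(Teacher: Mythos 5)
Your proposal is correct and follows exactly the paper's route: it invokes the Chitnis--Egri--Marx equivalence between Conjecture~\ref{conj:H-col} and fixed-parameter tractability of \textsc{$\ell$-Chain SAT}, and then observes that an \textsc{$\ell$-Chain SAT} instance is a \textsc{Bundled Cut} instance with at most $\ell$ deletable edges per bundle satisfying the pairwise linked deletable edges condition (the path bundle itself supplying the required connectivity), so Theorem~\ref{thm:intro:bundled-cut} applies. This matches the paper's derivation of the corollary verbatim in substance.
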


Clearly, an \textsc{$\ell$-Chain SAT} instance is a \textsc{Bundled Cut} instance
with at most $\ell$ deletable edges in a bundle and pairwise linked deletable edges
(the bundle itself provides the desired connectivity between deletable edges in a bundle). 
Hence, Theorem~\ref{thm:intro:bundled-cut} implies the following.
\begin{corollary}
\textsc{$\ell$-Chain SAT} is FPT when parameterized by $\ell$ and $k$, 
  even in the weighted setting. 
Consequently, Conjecture~\ref{conj:H-col} is confirmed.
\end{corollary}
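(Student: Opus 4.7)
The plan is short: reduce \textsc{$\ell$-Chain SAT} directly to an instance of \textsc{Weighted Bundled Cut} with pairwise linked deletable edges, apply Theorem~\ref{thm:intro:bundled-cut}, and then invoke the equivalence established by Chitnis, Egri, and Marx to derive Conjecture~\ref{conj:H-col}.

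First, I would observe that, by the very definition of \textsc{$\ell$-Chain SAT}, an instance is already a \textsc{Bundled Cut} instance in which every bundle $B$ is realized as a directed path of length at most $\ell$. In particular $|B| \le \ell$, so the maximum number of deletable edges in any bundle is at most $\ell$, which bounds the second parameter demanded by Theorem~\ref{thm:intro:bundled-cut}. Next, I would verify the pairwise-linked-deletable-edges condition: given a bundle $B$ realized as a path $P$ and any two deletable edges $e_1, e_2 \in B$, the subpath of $P$ between an endpoint of $e_1$ and an endpoint of $e_2$ uses only edges of $B$, which is exactly the connectivity required. Hence every \textsc{$\ell$-Chain SAT} instance (weighted or not) satisfies the hypotheses of Theorem~\ref{thm:intro:bundled-cut}, which yields an FPT algorithm parameterized by $k+\ell$ and proves the first sentence of the corollary.

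For the confirmation of Conjecture~\ref{conj:H-col}, I would simply cite the main reduction of~\cite{ChitnisEM17}: Chitnis, Egri, and Marx proved that fixed-parameter tractability of \textsc{$\ell$-Chain SAT} in $k$ for every fixed $\ell$ is equivalent to Conjecture~\ref{conj:H-col}. The FPT algorithm obtained above is in particular FPT in $k$ for every fixed $\ell$, so the conjecture follows immediately.

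The genuine obstacle in this line of work is buried entirely in Theorem~\ref{thm:intro:bundled-cut} (and, underneath it, Theorem~\ref{thm:intro}); once those are in hand, the corollary is a matter of checking that paths of length at most $\ell$ fit the bundle template. The only minor care points are keeping track of arc directions when reading off the subpath between $e_1$ and $e_2$, and noting that the weighted case of Theorem~\ref{thm:intro:bundled-cut} is exactly what lets us pass through to the weighted version of \textsc{$\ell$-Chain SAT} without extra work.
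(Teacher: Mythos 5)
Your argument matches the paper's own derivation essentially verbatim: the paper dispatches this corollary in one line ("the bundle itself provides the desired connectivity between deletable edges in a bundle") before invoking Theorem~\ref{thm:intro:bundled-cut}, and then cites the equivalence of \cite{ChitnisEM17} for the second sentence. Your only added content is spelling out the subpath check and flagging that edge orientation needs care, both of which work out since a bundle realized as a directed path always admits a directed subpath between (some endpoints of) any two of its edges, which is all the asymmetric-looking but in fact symmetric-over-pairs definition of pairwise linked deletable edges requires.
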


By standard reductions (spelled out in Section~\ref{ss:wdfvs}),
a \textsc{Directed Feedback Vertex Set}
instance can also be represented as a \textsc{Bundled Cut} instance with pairwise linked
deletable edges
(with maximum size of a bundle and budget bounded linearly in the parameter of the input instance).
Furthermore, in case of a weighted variant%
\footnote{In \textsc{Weighted Directed Feedback Vertex Set}, 
  the input graph is equipped with vertex weights being positive integers
    and one asks for a solution of cardinality at most $k$
    and total weight bounded by a threshold given on input.}
the reduction preserves weights. 
Hence, we obtain the following.
\begin{corollary}\label{cor:wdfvs}
\textsc{Weighted Directed Feedback Vertex Set}, parameterized by the cardinality
of the deletion set, is FPT.
\end{corollary}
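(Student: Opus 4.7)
The plan is to show how to reduce \textsc{Weighted Directed Feedback Vertex Set} in FPT time to a collection of \textsc{Weighted Bundled Cut} instances with pairwise linked deletable edges, in which both the cut budget and the largest number of deletable edges inside a single bundle are $\Oh(k)$. Feeding each of these instances into Theorem~\ref{thm:intro:bundled-cut} then proves the corollary.

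The first step is the classical iterative-compression scaffolding for \textsc{Directed Feedback Vertex Set}. Build the solution vertex by vertex, so that every compression subroutine receives a DFVS $W$ of size $k+1$ and is asked for a DFVS of size $\le k$ (and weight at most the input bound). For every guessed subset $X\subseteq W$ declared to belong to the new solution, deleting $X$ subtracts $|X|$ from the cardinality budget and $\omega(X)$ from the weight budget; what remains is to find a set $Y\subseteq V(G)\setminus W$ of size at most $k-|X|$ such that $G-X-Y$ is acyclic. Since $G-W$ is acyclic, every cycle of $G-X-Y$ must traverse some vertex of $W\setminus X$; a standard argument (guess the target topological ordering $\pi$ of $W\setminus X$ in $G-X-Y$) then turns the remaining task into a weighted \textsc{Directed Skew Multicut} problem on the vertex-split graph, with skew pairs $(w_{\pi(j)}^+, w_{\pi(i)}^-)$ for $i\le j$ and a budget of $k-|X|$.

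The second step is to encode this weighted skew multicut instance as a weighted bundled $st$-cut instance. Vertex splitting makes each candidate vertex $v$ correspond to a single deletable arc $e_v=(v^-,v^+)$ of weight $\omega(v)$, while all original arcs become crisp (infinite capacity). To collapse the skew constraints into a single source--sink pair, stack $m=|W\setminus X|\le k+1$ copies of the split graph and connect them into a layered gadget so that the $st$-paths are precisely the paths of the skew multicut instance going from some $w_{\pi(j)}^+$ to some $w_{\pi(i)}^-$ with $i\le j$; then group the copies $\{e_v^{(1)},\dots,e_v^{(m)}\}$ of each candidate $v$ into one bundle of weight $\omega(v)$. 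A cut of the resulting instance either leaves a bundle untouched (vertex $v$ stays out of the DFVS) or contains every deletable arc of it (vertex $v$ enters the DFVS, paid exactly once). The pairwise linked property is ensured by also adding crisp ``vertical'' identification arcs between consecutive copies of the same vertex, which does not create any new $st$-path. The resulting instance has cut budget $k-|X|$, bundles of size at most $m\le k+1$, and matches the original weight bound exactly.

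Theorem~\ref{thm:intro:bundled-cut} now solves each sub-instance in FPT time (in $k$), and iterating over the $\Oh(2^{k+1}(k+1)!)$ choices of $(X,\pi)$ per compression round preserves overall fixed-parameter tractability. The step I expect to be the most delicate is the design of the layered gadget: it must realise exactly the skew multicut constraints (no phantom $st$-paths and no missed forbidden path), its inter-layer connectors must remain crisp so that the pairwise linking property holds and $\omega(v)$ is charged only once to any solution, and the bundle size must not grow past $\Oh(k)$. Once the gadget is set up properly, the remainder is a direct application of the black-box result.
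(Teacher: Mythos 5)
Your proposal is correct and follows essentially the same route as the paper: reduce \textsc{Weighted DFVS} (via the standard Chen--Liu--Lu--O'Sullivan--Razgon iterative-compression plus topological-order guessing) to \textsc{Weighted Skew Multicut}, then encode the skew constraints as a single $st$-cut in a layered stack of $m\le k+1$ copies with crisp cross-layer arcs so that each original deletable unit becomes a bundle of $m$ parallel deletable arcs, and finish with Theorem~\ref{thm:intro:bundled-cut}. The paper presents the same two steps (working through \textsc{Weighted DFAS} rather than directly vertex-splitting, and using all-pairs rather than consecutive cross-layer identification arcs -- both cosmetic choices), and its Lemma~\ref{lem:wdfvs} is precisely the ``delicate layered gadget'' you flag at the end: it verifies that the $st$-paths in the stacked graph project onto exactly the forbidden $s_i$-to-$t_j$ paths with $i\le j$, that the crisp cross-layer arcs yield the pairwise linked deletable edges property, and that each bundle is charged once, which is all that remains to be checked in your sketch.
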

The question of 
parameterized complexity of \textsc{Weighted Directed Feedback Vertex Set} 
(which we answer affirmatively in Corollary~\ref{cor:wdfvs}) has been
asked e.g. at Recent Advances in Parameterized Complexity school in December 2017~\cite{rapcslides}
and in~\cite{LokshtanovRS16}.

\paragraph{Subsequent and accompanying work}
This paper is the first part in a series that explores variants and 
applicability of the new technique.
In this part, the main emphasis is on Theorem~\ref{thm:intro} itself, with aforementioned corollaries
stated as motivation. 

In the second part~\cite{ufl-soda}, we show that in undirected graphs one can get
an improved success probability of $2^{-\Oh(k \log k)}$ and a linear
running time for the augmentation procedure.
The proof is also arguably simpler. 

In the \textsc{Min SAT$(\Gamma)$} problem, we are given 
an instance of a Constraint Satisfaction Problem over language $\Gamma$ and an integer $k$
to delete at most $k$ clauses to get a satisfiable instance. 
Here, we restrict only to binary alphabet. 
Note that if $\Gamma$ allows unary clauses and equalities, \textsc{Min SAT$(\Gamma)$}
becomes the question of minimum (undirected) $st$-cut
and when $\Gamma$ allows unary clauses and implications, \textsc{Min SAT$(\Gamma)$}
becomes the question of minimum (directed) $st$-cut.
In the third part of the series~\cite{csp-arxiv}, we show full dichotomy for parameterized complexity
of \textsc{Min SAT$(\Gamma)$} for finite boolean languages $\Gamma$,
parameterized by the deletion budget $k$.
Here, flow-augmentation turned out to be pivotal for two new
isles of tractability, one of them being a wide generalization of the aforementioned \textsc{Coupled Mincut} problem.

In~\cite{weighted-arxiv}, Sharma and a subset of the current authors explored further
application in the realm of weighted graph separation problems. Using the new isles of tractability
of~\cite{csp-arxiv} as starting points, they showed tractability of weighted versions
of \textsc{Multicut} in undirected graphs 
and \textsc{Directed Subset Feedback Vertex Set}, among others.
Prior to~\cite{weighted-arxiv}, Galby et al~\cite{GalbyMSST22} used flow-augmentation to show
tractability of weighted \textsc{Multicut} in trees.
Flow-augmentation has been also used in~\cite{dir3mc-arxiv} to show fixed-parameter tractability
of \textsc{Multicut} in directed graphs with three terminal pairs, resolving another long-standing
open problem.

\section{Preliminaries}
\subsection{Cuts, flows}

All our graphs allow parallel edges. Edges may have capacities $1$ or $+\infty$. 
Since we never consider flows of value greater than $k$, a $+\infty$-capacity edge is equivalent to $(k+1)$ copies of an edge of capacity $1$. 
If $G$ is a graph and $A \subseteq V(G) \times V(G)$, then $G+A$ is the graph $G$ with every (ordered) pair of $A$ added as an arc \emph{with capacity $+\infty$}. 

For a directed graph $G$ and a set $X \subseteq V(G)$, the set $\delta_G^+(X)$ is the set of arcs with tails in $X$ and heads in $V(G) \setminus X$.
Similarly $\delta_G^-(X)$ is the set of arcs with tails in $V(G)\setminus X$ and heads in $X$. 
We write $\delta_G^+(v)$ and $\delta_G^-(v)$ for $\delta_G^+(\{v\})$ and $\delta_G^-(\{v\})$. When the graph $G$ under consideration is clear 
from the context, we omit the subscript $G$.

Let $G$ be a directed graph and let $s,t \in V(G)$.
An \emph{$st$-flow} is a collection $\flow$ of paths from $s$ to $t$ such that no edge of capacity $1$ lies on more than one path of $\flow$. 
The \emph{value} of the flow is the number of paths. 
By $\lambda_G(s,t)$ we denote the maximum possible value of an $st$-flow; note that it may happen that $\lambda_G(s,t) = +\infty$.
An $st$-flow is a \emph{maximum $st$-flow} or \emph{$st$-maxflow} if its value is $\lambda_G(s,t)$. 
By convention, if $\lambda_G(s,t) = +\infty$, then any $st$-flow that contains a path with all edges of capacity $+\infty$ is considered an $st$-maxflow.

A set $Z \subseteq E(G)$ is an \emph{$st$-cut} if it contains no edge of capacity $+\infty$ and there is no path from $s$ to $t$ in $G-Z$. 
An $st$-cut $Z$ is \emph{minimal} if no proper subset of $Z$ is an $st$-cut and
\emph{minimum} (or \emph{$st$-mincut}) if it has minimum possible cardinality.
By Menger's theorem, if $\lambda_G(s,t) < +\infty$ then the size of every $st$-mincut is exactly $\lambda_G(s,t)$
and there are no $st$-cuts if $\lambda_G(s,t) = +\infty$. 

An $st$-cut $Z$ is a \emph{star $st$-cut} if for every $(u,v) \in Z$, in the graph $G-Z$ there is a path from $s$ to $u$ but there is no path from $s$ to $v$.
Note that every minimal $st$-cut is a star $st$-cut, but the implication in the other direction does not hold in general. 
For a star $st$-cut $Z$ in $G$, by $\corecutG{Z}{G} \subseteq Z$ we denote the set of arcs $(u,v) \in Z$ such that there exists a path from $v$ to $t$ in $G-Z$. 
We drop the subscript if the graph $G$ is clear from the context. 

We observe the following.
\begin{lemma}
If $Z$ is a star $st$-cut in a graph $G$, then $\corecut{Z}$ is a minimal $st$-cut.
\end{lemma}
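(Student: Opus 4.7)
The plan is to verify two things for the set $\corecut{Z}$: that it is still an $st$-cut, and that it is minimal. The minimality direction is easy and I would handle it first; the cut-preservation direction requires slightly more care and is where the ``star'' hypothesis is genuinely used.

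For minimality, let $e = (u,v) \in \corecut{Z}$. By the star-cut hypothesis applied to $e$, there is a path $P_1$ from $s$ to $u$ in $G-Z$; by the very definition of $\corecut{Z}$, there is a path $P_2$ from $v$ to $t$ in $G-Z$. Since $\corecut{Z}\setminus\{e\}\subseteq Z\setminus\{e\}$, both $P_1$ and $P_2$ avoid $\corecut{Z}\setminus\{e\}$, and concatenating $P_1$, the arc $e$, and $P_2$ gives an $s$-to-$t$ walk in $G-(\corecut{Z}\setminus\{e\})$. Hence no arc of $\corecut{Z}$ is redundant.

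For the cut property, introduce $R$, the set of vertices reachable from $s$ in $G-Z$, and $T$, the set of vertices from which $t$ is reachable in $G-Z$. Because $Z$ is an $st$-cut, $R\cap T=\emptyset$, and by the star hypothesis every $(u,v)\in Z$ satisfies $u\in R$, $v\notin R$; together with the observation that any arc leaving $R$ not in $Z$ would extend reachability, this gives $Z=\delta^+(R)$. Then $\corecut{Z}$ is exactly the set of arcs of $\delta^+(R)$ whose head lies in $T$. Now suppose towards contradiction that there is an $s$-to-$t$ path $P$ in $G-\corecut{Z}$. Since $s\notin T$ and $t\in T$, we can pick the first arc $(u',v')$ along $P$ with $u'\notin T$ and $v'\in T$. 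If $(u',v')\notin Z$, then the arc together with a $v'$-to-$t$ path in $G-Z$ would certify $u'\in T$, a contradiction; therefore $(u',v')\in Z$, and since $v'\in T$ we get $(u',v')\in \corecut{Z}$, contradicting the assumption that $P$ avoids $\corecut{Z}$.

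The only point that requires any thought is the last paragraph: one needs to observe that the arc one finds by transitioning into $T$ must in fact lie in $Z$, which is where reachability in $G-Z$ is combined with the identification $Z=\delta^+(R)$. Everything else is bookkeeping.
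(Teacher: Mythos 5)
The paper states this lemma without proof (it is labeled ``We observe the following''), so there is no proof to compare against; your argument is a correct, self-contained proof of the observation. Both halves are right: the minimality half correctly concatenates the $s$--$u$ path, the arc $e$, and the $v$--$t$ path (all in $G-Z$, hence in $G-(\corecut{Z}\setminus\{e\})$), and the cut half correctly identifies $Z=\delta^+(R)$ from the star property, then uses the first $T$-entering arc on a putative $s$--$t$ path in $G-\corecut{Z}$ to derive a contradiction.
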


For an $st$-cut $Z$, the \emph{$s$-side} of $Z$ is the set of vertices reachable from $s$ in $G-Z$, and the \emph{$t$-side} of $Z$ is the complement of the $s$-side. 
Note that this is not symmetric, that is, we do not mandate that $t$ is reachable from all elements of the $t$-side in $G-Z$. 
In a star $st$-cut $Z$, all tails of edges of $Z$ are in the $s$-side of $Z$ and all heads of edges of $Z$ are in the $t$-side of $Z$.

We say that a set of arcs $A \subseteq V(G) \times V(G)$ is \emph{compatible} with a star $st$-cut $Z$ if the following holds:
for every $v \in V(G)$, there is a path from $s$ to $v$ in $G-Z$ if and only if there is a path from $s$ to $v$ in $(G+A)-Z$. 
Equivalently: the $s$-sides and $t$-sides of $Z$ are unchanged in $G$ and $G+A$, or put another way,  no arc of $A$ simultaneously has its tail in the $s$-side of $Z$ in $G$
and its head in the $t$-side of $Z$ in $G$. 

An immediate yet important observation the following.
\begin{lemma}\label{lem:star-cut-stays}
If $Z$ is a star $st$-cut in $G$ and $A \subseteq V(G) \times V(G)$ is compatible with $Z$, then
$Z$ is a star $st$-cut in $G+A$ as well. 
\end{lemma}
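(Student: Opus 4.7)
The plan is to directly unpack the definitions of ``star $st$-cut'' and ``compatible'' and verify the two required conditions for $Z$ in $G+A$: namely, that $Z$ is an $st$-cut in $G+A$, and that for every $(u,v) \in Z$ there is a path from $s$ to $u$ but not from $s$ to $v$ in $(G+A)-Z$. No combinatorial insight is needed; the whole argument is a transfer through the bi-implication that defines compatibility.

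First I would check that $Z$ remains a valid $st$-cut in $G+A$. Note that $Z \subseteq E(G)$, and the new arcs of $A$ are added to $G$, not removed from any set; in particular, every arc of $Z$ is an arc of $G$ of finite (capacity-$1$) weight, so $Z$ still contains no $+\infty$-capacity edge in $G+A$. For the separation property, since $Z$ is an $st$-cut in $G$, the vertex $t$ is not reachable from $s$ in $G-Z$. Applying the compatibility assumption with the particular choice of vertex $t$ yields that $t$ is not reachable from $s$ in $(G+A)-Z$ either, so $Z$ is indeed an $st$-cut in $G+A$.

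Next I would verify the star condition. Fix any arc $(u,v) \in Z$. Because $Z$ is a star $st$-cut in $G$, there is a path from $s$ to $u$ in $G-Z$ and no path from $s$ to $v$ in $G-Z$. Apply the compatibility bi-implication to the vertex $u$: reachability of $u$ from $s$ in $G-Z$ transfers to reachability of $u$ from $s$ in $(G+A)-Z$. Apply it to the vertex $v$ in the contrapositive direction: non-reachability of $v$ in $G-Z$ transfers to non-reachability of $v$ in $(G+A)-Z$. Since this works for every $(u,v) \in Z$, the star condition holds in $G+A$, and $Z$ is a star $st$-cut in $G+A$.

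There is really no substantive obstacle here; the only thing one must be careful about is to observe that $Z$ is disjoint from $A$ (so no $+\infty$-capacity arc sneaks into $Z$) and to apply the compatibility equivalence in both directions (``iff'') so that both the positive reachability of tails and the negative reachability of heads are preserved. The statement is essentially a sanity lemma that justifies thinking of $A$ as a safe augmentation with respect to any star cut it is compatible with.
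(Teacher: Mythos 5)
Your proof is correct and is exactly the direct unfolding of the definitions that the paper has in mind; the paper states this as an ``immediate observation'' and gives no proof, implicitly relying on the same two-step check you carry out. The one point worth keeping explicit (as you do) is that $Z\subseteq E(G)$ is disjoint from the newly added $+\infty$-capacity arcs of $A$, so $Z$ remains a legal $st$-cut, and then the compatibility bi-implication transfers the reachability status of $t$ and of each endpoint of each arc of $Z$ verbatim from $G-Z$ to $(G+A)-Z$.
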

We remark that albeit in the setting of Lemma~\ref{lem:star-cut-stays} the cut $Z$ remains a star $st$-cut in $G+A$, 
it may happen that $\corecutG{Z}{G} \subsetneq \corecutG{Z}{G+A}$, as the arcs of $A$ may add some new reachability towards $t$. 

Assume $Z$ is a star $st$-cut in $G$ such that $\corecut{Z}$ is an $st$-mincut. 
An $st$-maxflow $\flow$ is a \emph{witnessing flow} if $E(\flow) \cap Z = \corecut{Z}$, that is, $\flow$ contains one edge of $\corecut{Z}$ on each flow path and no other edge of $Z$.
A witnessing flow may not exist in general, even if $\corecut{Z}$ is an $st$-mincut. 
However, our flow-augmentation procedure will ensure that not only $\corecut{Z}$ becomes an $st$-mincut in the augmented graph, but also a flow is returned that is a
witnessing flow in the augmented graph. 
Formally, for a star $st$-cut $Z$ in $G$, $A \subseteq V(G) \times V(G)$, and an $st$-maxflow $\witnessflow$ in $G+A$, we say that $(A,\witnessflow)$ is \emph{compatible} with $Z$
if $A$ is compatible with $Z$, $\corecutG{Z}{G+A}$ is an $st$-mincut in $G+A$, and $\witnessflow$ is a witnessing flow for $Z$ in $G+A$. 
Our flow-augmenting procedure will return a pair $(A,\witnessflow)$ that is compatible with a fixed star $st$-cut $Z$ with good probability. 

An edge $e \in E(G)$ is a \emph{bottleneck} edge if there exists an $st$-mincut that contains $e$.

For an $st$-flow $\flow$, the \emph{residual network} $G^\flow$ is constructed from $G$ by
first adding a capacity-$1$ arc $(u,v)$ for every $P \in \flow$ and $(v,u) \in E(P)$ (an edge $(u,v)$ is added multiple times if $(v,u)$ appears on multiple paths of $\flow$)
and then deleting all capacity-$1$ arcs on paths of $\flow$. 
An \emph{augmenting path} is a path from $s$ to $t$ in $G^\flow$. 
Recall that $\flow$ is an $st$-maxflow if and only if there is no augmenting path  for $\flow$ and $G$. 

Let $C$ be an $st$-mincut and $\flow$ be an $st$-maxflow. 
First, note that for every $P \in \flow$ there is exactly one edge of $C \cap E(P)$. 
This edge splits $P$ into a part in the $s$-side of $C$ and a part in the $t$-side of $C$.
We say that an edge or a vertex on $P$ is \emph{before} (\emph{after}) $C$ on $P$ if it is in the $s$-side ($t$-side, respectively) of $C$. 
We infer that every path in $G$ that goes from a vertex in the $s$-side of $C$ to a vertex in the $t$-side of $C$
visits an edge of $C$ and, furthermore, no augmenting path goes from a vertex in the $s$-side of $C$ to a vertex in the $t$-side of $C$.

By submodularity, if $Z_1$ and $Z_2$ are $st$-mincuts and $X_1$ is the $s$-side of $Z_1$ and $X_2$ is the $s$-side of $Z_2$, 
   then both $\delta^+(X_1 \cup X_2)$ and $\delta^+(X_1 \cap X_2)$ are $st$-mincuts.
This implies that there exists a unique $st$-mincut with inclusion-wise minimal $s$-side and a unique $st$-mincut with inclusion-wise maximal $s$-side.
We call them the $st$-mincut \emph{closest to $s$} and the $st$-mincut \emph{closest to $t$}.

\subsection{Color-coding and its derandomization}\label{ss:color-coding}

A standard color-coding step is the following randomized step: given two sets $A,B$,
we randomly sample a function $f: A \to B$. The goal is that for some unknown subset $A' \subseteq A$
and an unknown function $f' : A' \to B$,
we aim at $f$ extending $f'$. Obviously, this happens with probability $|B|^{-|A|'}$. 
In applications, usually $A'$ and $f'$ are some objects associated with the unknown sought solution.
For the deterministic version of flow-augmentation, we will need the following derandomization
statement (cf.~\cite{the-book} for a wider discussion).

\begin{theorem}\label{thm:color-coding}
Given two finite sets $A$ and $B$ and an integer $k$, one can in 
$|A|^{\Oh(1)} \cdot 2^{\Oh(k \log |B|)}$ time output a family $\mathcal{F}$
of functions $A \to B$ of size $2^{\Oh(k \log |B|)} \cdot \Oh(\log |A|)$ such that
for every $A' \subseteq A$ of size at most $k$ and every $f': A' \to B$, there
exists $f \in \mathcal{F}$ that extends $f'$.
\end{theorem}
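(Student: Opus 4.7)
The plan is to factor the construction into two independent pieces: an injective hashing from unknown $k$-subsets of $A$ into a small universe $[k]$, and then an exhaustive enumeration of all colourings of that small universe by $B$. Specifically, I would first obtain a family $\mathcal{H}$ of functions $h \colon A \to [k]$ such that for every $A' \subseteq A$ with $|A'| \leq k$ some $h \in \mathcal{H}$ is injective on $A'$ --- an $(|A|,k)$-perfect hash family. Classical constructions (Schmidt--Siegel, Naor--Schulman--Srinivasan, and later refinements based on $\epsilon$-biased spaces and splitters) deliver such a family of size $2^{\Oh(k)} \log |A|$ that is computable in $2^{\Oh(k)} |A|^{\Oh(1)}$ time; I would invoke this as a black box.

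Having fixed $\mathcal{H}$, I would set
\[
 \mathcal{F} := \{\, g \circ h \;:\; h \in \mathcal{H},\ g \colon [k] \to B \,\},
\]
where $g$ ranges over all $|B|^k = 2^{\Oh(k \log |B|)}$ functions from $[k]$ to $B$. The size bound is immediate:
\[
 |\mathcal{F}| \leq |\mathcal{H}| \cdot |B|^k = 2^{\Oh(k \log |B|)} \cdot \Oh(\log |A|),
\]
and the total running time is dominated by enumerating all pairs $(h,g)$ and outputting the composition, giving $|A|^{\Oh(1)} \cdot 2^{\Oh(k \log |B|)}$ as required.

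Correctness is the easy part. Given $A' \subseteq A$ with $|A'| \leq k$ and $f' \colon A' \to B$, pick $h \in \mathcal{H}$ that is injective on $A'$; injectivity guarantees that the prescription $g(h(a)) := f'(a)$ for $a \in A'$ is well-defined on $h(A') \subseteq [k]$. Extend $g$ arbitrarily to the remaining positions of $[k]$; then $g \circ h \in \mathcal{F}$ agrees with $f'$ on $A'$.

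The only non-trivial ingredient is the perfect hash family itself; once we have the $2^{\Oh(k)} \log |A|$ size bound from the literature, the rest of the argument is mechanical composition and there is no genuine obstacle to overcome. The mild subtlety worth flagging is ensuring that the $\log |A|$ factor (rather than, say, $\log^{\Oh(1)} |A|$ or $|A|^{o(1)}$) is preserved through to $|\mathcal{F}|$, which is exactly what the cited perfect-hash-family constructions provide.
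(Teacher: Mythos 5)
The paper does not actually prove Theorem~\ref{thm:color-coding}; it cites it as a standard derandomization fact (pointing to the parameterized-algorithms textbook). Your proposal is precisely the standard construction that underlies that citation: take an $(|A|,k)$-perfect hash family $\mathcal{H}$ of size $2^{\Oh(k)}\log|A|$ (Naor--Schulman--Srinivasan; the $e^k k^{\Oh(\log k)}$ factor is absorbed into $2^{\Oh(k)}$) and compose each $h\in\mathcal{H}$ with every one of the $|B|^k$ maps $g\colon[k]\to B$, yielding $g\circ h\colon A\to B$. The size and time bounds match the statement, and the correctness argument is exactly right: injectivity of some $h$ on $A'$ makes $g(h(a)):=f'(a)$ well-defined, and any extension of $g$ to all of $[k]$ gives a member of $\mathcal{F}$ extending $f'$. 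One edge case worth a sentence in a written-up version: the usual perfect-hash-family guarantee is stated for subsets of size exactly $k$, so when $|A'|<k$ you should pad $A'$ to a $k$-element superset (or handle $|A|<k$ by direct enumeration of all $|B|^{|A|}\leq|B|^k$ functions), but this is routine and does not affect the bounds. Your proof is correct and is the intended one.
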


We will also need the following random separation version; this exact statement
is taken from~\cite{ChitnisCHPP16,CyganKLPPSW21}.

\begin{theorem}\label{thm:rand-separation}
Given a set $U$ of size $n$ and integers $0 \leq a,b \leq n$, one
can in time $2^{\Oh(\min(a,b) \log (a+b))} n \log n$
construct a family $\mathcal{F}$ of at most $2^{\Oh(\min(a,b) \log (a+b))} \log n$
subsets of $U$ such that the following holds:
for any sets $A,B \subseteq U$ with $A \cap B = \emptyset$, $|A|\leq a$, and $|B|\leq b$,
there exists a set $S \in \mathcal{F}$ with $A \subseteq S$ and $B \cap S = \emptyset$.
\end{theorem}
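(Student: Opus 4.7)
The plan is to combine a perfect-hashing step, which injects $A\cup B$ into a range of size polynomial in $a+b$, with a brute-force enumeration of the ``coloring pattern''. By symmetry—running the construction with the roles of $A$ and $B$ swapped and then complementing every output set in $U$—it suffices to handle the case $a=\min(a,b)\leq b$. Set $k := a+b$ throughout.

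The first ingredient is a standard $(n,k)$-perfect hash family: a collection $\mathcal{H}$ of functions $h\colon U\to [k^2]$ of size $k^{\Oh(1)}\log n$, constructible in time $k^{\Oh(1)}\cdot n\log n$, such that for every $T\subseteq U$ with $|T|\leq k$ some $h\in\mathcal{H}$ is injective on $T$; this is a classical consequence of the Fredman--Komlós--Szemerédi / Naor--Schulman--Srinivasan machinery, which I would invoke as a black box. Using $\mathcal{H}$, I define $\mathcal{F}$ by taking, for every $h\in\mathcal{H}$ and every $a$-element subset $S^\star\subseteq[k^2]$, the preimage $h^{-1}(S^\star)\subseteq U$.

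The size estimate reads $|\mathcal{F}|\leq |\mathcal{H}|\cdot\binom{k^2}{a}\leq k^{\Oh(1)}\log n\cdot (k^2)^{a}=2^{\Oh(a\log k)}\log n$, which is exactly $2^{\Oh(\min(a,b)\log(a+b))}\log n$; the running time has the same shape with an extra factor of $n$. For correctness, fix disjoint $A,B\subseteq U$ with $|A|\leq a$ and $|B|\leq b$. Then $|A\cup B|\leq k$, so some $h\in\mathcal{H}$ is injective on $A\cup B$, which makes $h(A)$ and $h(B)$ disjoint subsets of $[k^2]$. Since $|h(A)|+|h(B)|\leq k\leq k^{2}$, I can greedily extend $h(A)$ to a subset $S^\star\subseteq[k^2]$ of cardinality exactly $a$ that still avoids $h(B)$; then $h^{-1}(S^\star)$ contains $A$ (because $h(A)\subseteq S^\star$) and is disjoint from $B$ (because $h(B)\cap S^\star=\emptyset$), so $h^{-1}(S^\star)\in\mathcal{F}$ is the required set.

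The reason the exponent features $\min(a,b)$ rather than $a+b$ is precisely the cheap enumeration of $a$-element subsets of $[k^2]$, which costs only $\binom{k^2}{a}\leq k^{2a}$; using $(n,k)$-universal sets directly would instead cost $2^{\Omega(k)}$ and miss the target bound. The main technical input is therefore the perfect hash family with range $[k^2]$ of size $k^{\Oh(1)}\log n$, pulled in as a black box; everything else is elementary enumeration and a direct correctness check, so I do not foresee any serious obstacle beyond tracking the symmetry step carefully when $a>b$.
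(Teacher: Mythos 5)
The paper does not prove Theorem~\ref{thm:rand-separation}; it imports it verbatim from~\cite{ChitnisCHPP16,CyganKLPPSW21}, so there is no in-paper proof to compare against. Your argument is, however, correct, and it is essentially the standard derandomization used in those references: take a $(|U|,a+b,(a+b)^2)$-splitter (equivalently, a perfect hash family with quadratic range) of size $(a+b)^{\Oh(1)}\log n$ from the Naor--Schulman--Srinivasan machinery, and for each hash function enumerate the $\min(a,b)$-element subsets of the range. The key calculation that $\binom{(a+b)^2}{\min(a,b)} = 2^{\Oh(\min(a,b)\log(a+b))}$, the injectivity-on-$A\cup B$ step, and the complementation trick to handle $a>b$ are all exactly the ingredients one needs; you correctly note that choosing the hash range of size $(a+b)^2$ rather than $a+b$ is what keeps the subset enumeration from degenerating into $2^{\Theta(a+b)}$. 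The only cosmetic points worth tidying are the degenerate case $a+b=0$ (where the hash range is empty, but the empty family suffices trivially) and the observation that $[k^2]\setminus h(B)$ has at least $a$ elements so the greedy extension of $h(A)$ to an $a$-set always succeeds; both are routine.
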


\section{Directed Flow-Augmentation}\label{sec:flow-augmentation}
An \emph{instance} is a tuple $\inst = (G,s,t,k)$ where $G$ is a directed graph, $s,t \in V(G)$, and $k$ is a nonnegative integer.
An \emph{instance with a flow} is a pair $(\inst,\flow)$ where $\inst = (G,s,t,k)$ is an instance and 
$\flow = \{P_1,P_2,\ldots,P_\lambda\}$ is an $st$-flow.
In an \emph{instance with a maximum flow} we additionally require $\lambda = \lambda_G(s,t)$, that is, $\flow$ is a maximum $st$-flow.
With an instance with a flow $(\inst,\flow)$ we associate the residual graph $G^\flow$. 

In this section we prove the following results that generalize Theorems~\ref{thm:intro} and~\ref{thm:intro-det}.
\begin{theorem}\label{thm:dir-flow-augmentation}
There exists a randomized polynomial-time algorithm that, given an instance $\inst = (G,s,t,k)$, returns a set $A \subseteq V(G) \times V(G)$
and an $st$-maxflow $\witnessflow$ in $G+A$ such that 
for every star $st$-cut $Z$ of size at most $k$,
with probability $2^{-\Oh(k^4 \log k)}$ the pair $(A,\witnessflow)$ is compatible with $Z$.
\end{theorem}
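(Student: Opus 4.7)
The plan is to construct $(A,\witnessflow)$ by iteratively sampling random arcs to add to $G$, so that the random sequence of augmentations -- oblivious to $Z$ -- with non-trivial probability raises $\lambda_{G+A}(s,t)$ all the way to $|\corecutG{Z}{G+A}|$ while remaining compatible with $Z$, after which $\witnessflow$ is output as a carefully chosen max flow in the final $G+A$. Since $\lambda_G(s,t) \leq |\corecut{Z}| \leq k$, at most $k$ augmentation rounds are needed before $\corecutG{Z}{G+A}$ becomes a mincut. The base case $\lambda_G(s,t)\geq k$ is immediate for the augmentation part (take $A=\emptyset$): every minimal $st$-cut of size at most $k$ must then have cardinality exactly $\lambda_G(s,t)$, so $\corecut{Z}$ is already a mincut. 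What remains is the selection of a witnessing max flow; I would handle this by guessing, per flow path, which edge of $\corecut{Z}$ the path uses and in what order the paths appear, via Theorem~\ref{thm:color-coding}, succeeding with probability at least $2^{-\Oh(k\log k)}$.

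For the inductive step, compute a max flow $\flow$ in the current $G+A$ and let $C$ be the $st$-mincut closest to $s$. The key structural observation driving the per-round guess is: whenever $\corecutG{Z}{G+A}$ is not already a mincut, the ``gap region'' -- the intersection of the $t$-side of $C$ with the $s$-side of $Z$ -- must contain a vertex reachable from $s$ in $(G+A)-Z$, since otherwise $C$ itself would already be a cut contained inside the $s$-side of $Z$, forcing $\corecutG{Z}{G+A}$ to be a mincut. Any arc $(u,v)$ with $u$ on the $s$-side of $C$ and $v$ in this gap region is then compatible with $Z$ (both endpoints lie on the $s$-side of $Z$) and strictly raises $\lambda_{G+A}(s,t)$ by one (as $v$ lies on the $t$-side of $C$, giving an augmenting path in $G^\flow$). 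Thus a single well-chosen random arc per round advances the augmentation, and after at most $k$ successful rounds the base case is reached.

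The main obstacle is to achieve a per-round success probability of $2^{-\Oh(k^3\log k)}$ for the random guess of $(u,v)$, without any knowledge of $Z$, so that $k$ rounds multiply to the claimed $2^{-\Oh(k^4\log k)}$. A uniform sample over $V(G)\times V(G)$ would give only $|V(G)|^{-\Oh(1)}$, which is far too weak. The plan is to prove that, by analysing how the at most $k$ flow paths of $\flow$ and the augmenting paths of $G^\flow$ interact with $Z$, the endpoints of a useful augmenting arc can always be ``pinned'' to a canonical skeleton of $2^{\Oh(k^3\log k)}$ candidate vertex pairs -- indexed, for example, by the flow path containing each endpoint, its position along that path relative to the successive mincuts traversed, and additional combinatorial data capturing how $Z$ crosses each flow and augmenting path. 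A uniform random choice within this skeleton then succeeds with the desired probability. Proving this skeletal bound, handling cases where a direct augmentation is blocked and existing flow must first be rerouted through $Z$, and composing the per-round choices into a single non-adaptive output $(A,\witnessflow)$ returned before $Z$ is ever revealed, are where I expect the bulk of the subtlety to lie.
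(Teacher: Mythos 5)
Your high-level goal (augment the graph with arcs compatible with $Z$ until $\corecutG{Z}{G+A}$ is a mincut, then output a witnessing max flow) is correct, but the concrete inductive step you propose does not work, and the crucial technical ingredient is left entirely open.

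The first genuine error is the claim that adding an arc $(u,v)$ with $u$ on the $s$-side of $C$ (the mincut closest to $s$) and $v$ on the $t$-side of $C$ strictly raises $\lambda_{G+A}(s,t)$. This is false: such an arc gives $s$ new reachability to $v$ in the residual graph $G^\flow$, but $\lambda$ increases only if $t$ is reachable from $v$ in $G^\flow$, i.e.\ if $v$ lies in the $t$-side of the mincut closest to $t$. In the ``middle'' region between the two extreme mincuts, adding such an arc changes nothing. There is in fact no guarantee that a \emph{single} compatible arc exists that raises $\lambda$; the paper deals with this by tracking a second progress measure besides $\lambda$, namely the number of arcs in the \emph{reachability pattern} $H$ (a graph on the flow paths recording which paths can residually reach which others), which decreases in recursive calls that contract part of the graph without raising $\lambda$. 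Your proposal contains no analogous mechanism for making progress when $\lambda$ is stuck.

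The second and larger gap is that the ``skeleton'' of $2^{\Oh(k^3 \log k)}$ candidate arc endpoints is asserted, not constructed, and you yourself flag that proving it is ``where the bulk of the subtlety lies.'' That bound \emph{is} the theorem; without a concrete combinatorial structure supplying the candidates, the proof is not there. The paper's corresponding structure is the $H$-sequence of mincuts $C_1, C_2, \ldots, C_\ell$, together with a trichotomy on $|E(H)|$ and $\ell$: a base case where $H$ has only self-loops (flow paths are independent and a witnessing flow can be built directly from bottleneck edges), a ``small $\ell$'' case ($\ell \leq 4k^2+3$) where the endpoints of $\bigcup C_a$ form a skeleton of size $\Oh(k^3)$ whose $s$/$t$-side membership under $Z$ can be guessed, and a ``large $\ell$'' case where color-coding (random sparse sampling of indices in $\{0,\dots,\ell\}$) localizes $Z$ to a bounded number of ``blocks'' between consecutive ``milestones,'' and the instance decomposes into independent recursive calls whose budgets add up. None of these ideas appear in your proposal.

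Finally, your treatment of the witnessing flow in the base case (``guess per flow path which edge of $\corecut{Z}$ it uses'') also has a gap: even when $\corecut{Z}$ is a mincut, a max flow can route through edges of $Z \setminus \corecut{Z}$ lying in the $s$-side of $\corecut{Z}$, so a witnessing flow need not exist and certainly cannot be obtained just by permuting paths. The paper handles this by adding further arcs (bypassing bottleneck edges, etc.) so that a witnessing flow is forced to exist, and constructs it explicitly alongside $A$. In short, the plan correctly names the target but misstates the one step it does spell out and leaves the core argument as an unproven claim; the actual proof requires the reachability-pattern and mincut-sequence machinery that your proposal does not anticipate.
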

\begin{theorem}\label{thm:dir-flow-augmentation-det}
There exists an algorithm that, given an instance $\inst = (G,s,t,k)$, 
      runs in time $2^{\Oh(k^4 \log k)} n^{\Oh(1)}$ and returns a set 
$\mathcal{A}$ of size $2^{\Oh(k^4 \log k)} (\log n)^{\Oh(k^3)}$
such that every element of $\mathcal{A}$ is a pair $(A,\witnessflow)$ where
$A \subseteq V(G) \times V(G)$
and $\witnessflow$ is an $st$-maxflow in $G+A$.
Furthermore, we have the following guarantee:
for every star $st$-cut $Z$ of size at most $k$
there exists a pair $(A,\witnessflow) \in \mathcal{A}$ that is compatible with $Z$.
\end{theorem}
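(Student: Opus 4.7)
The plan is to derandomize the procedure of Theorem~\ref{thm:dir-flow-augmentation} by replacing each internal random choice with enumeration over a small deterministic family supplied by Theorem~\ref{thm:color-coding} or Theorem~\ref{thm:rand-separation}, and then taking the Cartesian product of the resulting families. The target family size $2^{\Oh(k^4 \log k)} (\log n)^{\Oh(k^3)}$ is precisely what one obtains from $\Oh(k^3)$ independent color-coding / random-separation steps, each of entropy $\Oh(k \log k)$, since derandomizing a single such step using Theorems~\ref{thm:color-coding} or~\ref{thm:rand-separation} produces a family of size $2^{\Oh(k \log k)} \cdot \Oh(\log n)$.

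First, I would revisit the randomized algorithm of Theorem~\ref{thm:dir-flow-augmentation} and classify each of its coin flips. In flow-augmentation arguments the random choices typically fall into two shapes: (i) a random $f\colon A \to B$ with $|B|$ polynomial in $k$ used to guess which bucket various cut- or flow-related objects belong to (a color-coding step), and (ii) a random subset of a universe used to isolate an unknown $\Oh(k)$-size subset from its small complement (a random-separation step). Each coin flip succeeds for a fixed target object with probability $|B|^{-\Oh(k)} = 2^{-\Oh(k \log k)}$, and with $\Oh(k^3)$ of these chained together one obtains exactly the success probability $2^{-\Oh(k^4 \log k)}$ promised by Theorem~\ref{thm:dir-flow-augmentation}.

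Next I would carry out the replacement. For every coin-flip of type (i), Theorem~\ref{thm:color-coding} (applied with the same $A$, $B$, and parameter $k$) produces a family $\mathcal{F}_i$ of size $2^{\Oh(k \log |B|)} \cdot \Oh(\log |A|) = 2^{\Oh(k \log k)} \cdot \Oh(\log n)$ containing a function extending any fixed partial labeling on the relevant unknown $\le k$-element set. For every coin-flip of type (ii), Theorem~\ref{thm:rand-separation} (with $a,b = \Oh(k)$) similarly yields a family of size $2^{\Oh(k \log k)} \cdot \Oh(\log n)$. Running the (otherwise unchanged) procedure once for each tuple in the Cartesian product of these $\Oh(k^3)$ families, and collecting the returned pairs $(A,\witnessflow)$ into $\mathcal{A}$, produces a family of the required size. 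For any star $st$-cut $Z$ of size at most $k$, choosing from each family the element that extends the ``correct'' guess for $Z$ yields a tuple for which the deterministic run behaves identically to a successful randomized run, and thus outputs a pair compatible with $Z$.

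The main obstacle is structural rather than arithmetic: one has to verify that the random choices in the proof of Theorem~\ref{thm:dir-flow-augmentation} really do fit cleanly into the templates of Theorems~\ref{thm:color-coding} and~\ref{thm:rand-separation}, and in particular that each stage's ``target'' depends only on $Z$ and the outcomes of earlier stages in a way that Theorem~\ref{thm:color-coding} can absorb (the partial function to be extended may be defined in terms of the already-fixed earlier guesses, but must still be of size $\le k$). Any stage that inspects its own random outcome before committing to a next guess must be reformulated as guessing first, then verifying, so that the universes $A$, $B$ presented to Theorem~\ref{thm:color-coding} are independent of the guessed values. Once every randomized step has been refactored into this ``guess-then-verify'' form, the derandomization is purely mechanical and the running-time bound $2^{\Oh(k^4 \log k)} n^{\Oh(1)}$ follows from multiplying the family sizes with the polynomial-time per-call cost of the underlying algorithm of Theorem~\ref{thm:dir-flow-augmentation}.
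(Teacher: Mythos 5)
Your high-level plan---replace every random choice with the deterministic families from Theorems~\ref{thm:color-coding} and~\ref{thm:rand-separation} and combine the resulting outputs---is indeed the approach the paper takes, and the paper develops the randomized and deterministic versions in parallel with branching replacing guessing. You also correctly identify that each step has entropy $\Oh(k\log k)$ and that the depth is $\Oh(k^3)$, which produces the claimed size bound.

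However, you misidentify the main obstacle. The ``guess-then-verify'' concern you raise is real but routine. The delicacy that actually requires a new idea---and that the paper explicitly flags as ``an important delicacy in the deterministic case''---is that the algorithm is a \emph{tree-shaped} recursion, not a linear sequence of $\Oh(k^3)$ guesses. In the large-$\ell$ case the algorithm recurses once per excellent index $\alpha$, and the number of excellent indices (and hence subcalls returning families $\mathcal{A}^\alpha$) can be $\Omega(n)$, not bounded by any function of $k$. A naive ``Cartesian product of the resulting families'' over all subcalls would then be exponential in $n$, not $(\log n)^{\Oh(k^3)}$. The paper's fix is to synchronize across subcalls: after padding so that all $\mathcal{A}^\alpha$ with $\iota(\alpha)=\iota$ have a common size $N_\iota$, one iterates only over tuples $(i_\iota)_{\iota\in J}$ with $|J|=\Oh(k)$, and uses the \emph{same} index $i_\iota$ in every $\mathcal{A}^\alpha$ with $\iota(\alpha)=\iota$. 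This suffices because for each $\iota$ exactly one such $\alpha$ corresponds to an actual block (the only one where $Z^\alpha$ is nonempty), and compatibility is only required there. The final size bound then comes from the bottom-up induction of Section~\ref{ss:det-analysis}, which uses the potential inequality~\eqref{eq:det} (that $2k^\alpha - |\flow^\alpha|$ sums to at most $2k-\lambda$ across subcalls) to show the multiplicative growth compounds only $\Oh(k^3)$ times along any root-to-leaf path. Without this synchronization and induction, your argument does not establish the size bound, so as written there is a genuine gap even though the outer template is correct.
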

A few remarks are in place.
First, Lemma~\ref{lem:star-cut-stays} ensures that if $A$ is compatible with $Z$, then $Z$ remains a star $st$-cut in $G+A$.
Second, we not only guarantee that $A$ is compatible with $Z$ with good probability and
that $\corecutG{Z}{G+A}$ is an $st$-mincut in $G+A$, but also that $\witnessflow$
is a witnessing flow; in particular, that a witnessing flow exists at all. 
Third, a usage of Theorem~\ref{thm:dir-flow-augmentation}  makes sense even for cuts $Z$ with $\corecut{Z}$ being an $st$-mincut
for the sake of the witnessing flow: we may need to add some edges to $G$ for a witnessing flow to  exist, let alone being returned by the algorithm. 
Finally, due to the convention that for $\lambda_{G+A}(s,t) = +\infty$, any flow containing an $st$-path with all arcs of capacity $+\infty$ is an $st$-maxflow, 
the algorithm is allowed to return a pair $(A,\witnessflow)$ such that $\lambda_{G+A}(s,t) = +\infty$ and $\witnessflow$ contains an $st$-path with all edges of capacity $+\infty$
(but clearly such a pair $(A,\witnessflow)$ is not compatible with any star $st$-cut $Z$ of size at most $k$). 

We will prove Theorems~\ref{thm:dir-flow-augmentation} and~\ref{thm:dir-flow-augmentation-det}
in parallel, as the proofs are mostly the same. The leading narration will be for the randomized case, as we find it more natural.
In most cases, the difference is that in the randomized case we perform a random guess, and in the 
deterministic case we branch. This becomes a bit more tricky when the random step is 
of color-coding type (cf. Section~\ref{ss:color-coding}), in particular when one combines
outputs of recursive calls from different areas highlighted by color-coding.

If $\lambda_G(s,t) = 0$, then for every star $st$-cut $Z$ in $G$ we have $\corecut{Z} = \emptyset$, so we can return $A = \emptyset$ and $\flow=\emptyset$ in Theorem~\ref{thm:dir-flow-augmentation}
and $\mathcal{A} = \{(\emptyset,\emptyset)\}$ in Theorem~\ref{thm:dir-flow-augmentation-det}. 
If $\lambda_G(s,t) > k$, then there is no star $st$-cut $Z$ of size at most $k$, so $A = \emptyset$ and $\flow$ being any $st$-maxflow is a valid outcome for Theorem~\ref{thm:dir-flow-augmentation}
while $\mathcal{A} = \emptyset$ is a valid outcome for Theorem~\ref{thm:dir-flow-augmentation-det}.
Henceforth we assume $0 < \lambda_G(s,t) \leq k$.

We set a threshold $\ellthreshold := 4k^2 + 3$ for later use. Note that while the algorithm is recursive and the value of $k$ may decrease in the recursive calls,
the threshold $\ellthreshold$ is set once at the beginning, using the initial value of $k$.

\subsection{Reachability patterns, leaders, and mincut sequences}
Let $(\inst = (G,s,t,k),\flow = \{P_1,\ldots,P_\lambda\})$ be an instance with a maximum flow. 

A \emph{reachability pattern} is a directed graph $H$ with vertex set $V(H) = [\lambda]$, that is, each vertex $i \in V(H)$ corresponds
to a path $P_i \in \flow$, that contains a self-loop at every vertex.
The \emph{pattern associated with $(\inst,\flow)$} is a graph $H$ with $V(H) = [\lambda]$ and $(i,j) \in E(H)$ if and only
if there exists $v \in V(P_i) \setminus \{s,t\}$ and $u \in V(P_j) \setminus \{s,t\}$ such that there is a $v$ to $u$ path in $G^\flow$. 
Note that $H$ is a reachability pattern if and only if every path $P_i$ has at least one internal vertex. 

For a vertex $v \in V(G)$, the set $\resreach{v}$ is the set of vertices reachable from $v$ in $G^\flow$. 
For a vertex $v \in V(G)$ and an index $i \in [\lambda]$, the \emph{last vertex on $P_i$ residually reachable from $v$}, denoted
$\lastreach{v}{i}$, is the last (closest to $t$) vertex $u$ on $P_i$ that is reachable from $v$ in $G^\flow$.
Note that if $u$ is reachable from $v$ in $G^\flow$, then all vertices preceeding $u$ on $P_i$ are also reachable from $v$ in $G^\flow$, that is,
the set of vertices of $P_i$ reachable from $v$ in $G^\flow$ form a prefix of $P_i$. 
Furthermore, note that 
if $v_1$ and $v_2$ are on $P_j$ and $v_1$ is earlier than $v_2$ on $P_j$, then $\lastreach{v_1}{i}$ is not later
than $\lastreach{v_2}{i}$ on $P_i$ (they may be equal).

We have the following simple observation.
\begin{lemma}\label{lem:resreach}
Let $\flow$ be an $st$-maxflow in instance $\inst=(G,s,t,k)$ and assume $\lambda_G(s,t) > 0$. 
Then for every $v \in V(G)$ exactly one of the following options hold: 
\begin{itemize}
\item $s \in \resreach{v}$, $t \notin \resreach{v}$, and $\delta_G^+(\resreach{v})$ is an $st$-mincut;
\item $s,t \in \resreach{v}$, $v$ is on the $t$-side of every $st$-mincut, and $\delta_G^+(\resreach{v}) = \emptyset$;
\item $s,t \notin \resreach{v}$, $t$ is not reachable from $v$ in $G$, and $\delta_G^+(\resreach{v}) = \emptyset$.
\end{itemize}
\end{lemma}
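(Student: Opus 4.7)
The plan is to let $X = \resreach{v}$ (noting $v \in X$) and base everything on one foundational observation: every arc $(u,w) \in \delta^+_G(X)$ must lie on some path of $\flow$. Indeed, if $(u,w)$ were off-flow then $(u,w) \in E(G^\flow)$, and reachability of $u$ from $v$ in $G^\flow$ would propagate to $w$, contradicting $w \notin X$. A first consequence is that the a priori fourth combination $s \notin X$ yet $t \in X$ cannot occur: since $\lambda_G(s,t) \geq 1$, there is at least one path $P_1 \in \flow$, and the reverse of each arc of $P_1$ lies in $G^\flow$, producing a $t \to s$ walk in $G^\flow$; prepending the witness $v \to t$ would then place $s \in X$, a contradiction. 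So the three listed cases are exhaustive.

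For the $\delta^+$ equalities I exploit the symmetric fact that whenever $(u,w) \in \delta^+_G(X)$ lies on some $P_i \in \flow$, reversing the arcs of $P_i$ yields both a $t \to w$ walk and a $u \to s$ walk in $G^\flow$. In Case 2 ($s, t \in X$), concatenating $v \to t$ with $t \to w$ would give $w \in X$, a contradiction, so $\delta^+_G(X) = \emptyset$. In Case 3 ($s, t \notin X$), concatenating $v \to u$ with $u \to s$ would give $s \in X$, again a contradiction, so $\delta^+_G(X) = \emptyset$ here as well. The additional Case 3 claim that $t$ is not reachable from $v$ in $G$ uses the same trick one level up: a hypothetical $v \to t$ walk in $G$ cannot be entirely in $G^\flow$ (else $t \in X$), so let $(w_j, w_{j+1})$ be its first flow-path arc, on some $P_i$; the off-flow prefix places $w_j \in X$, and then the reverse arcs of $P_i$ in $G^\flow$ give $s \in X$, a contradiction. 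Finally, in Case 1 ($s \in X$, $t \notin X$), $\delta^+_G(X)$ is an $st$-cut, every arc of $\delta^+_G(X)$ lies on a flow path, and each $P_i$ crosses $\delta^+_G(X)$ at least once (going from $s \in X$ to $t \notin X$) and at most once (a hypothetical back-crossing $(w,u) \in E(P_i)$ with $w \notin X$, $u \in X$ would yield $(u,w) \in E(G^\flow)$ and force $w \in X$). Thus $|\delta^+_G(X)| = \lambda_G(s,t)$ and $\delta^+_G(X)$ is an $st$-mincut.

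The hardest step is the remaining Case 2 claim that $v$ sits on the $t$-side of every $st$-mincut. The plan is to fix an arbitrary $st$-mincut $Z$ with $s$-side $S$, assume $v \in S$ for contradiction, and walk along a fixed $v \to t$ path in $G^\flow$ until the first arc $(x_i, x_{i+1})$ crosses from $S$ out of $S$. If $(x_i, x_{i+1})$ is an original off-flow arc, then it belongs to $\delta^+_G(S) = Z$ yet carries no flow, contradicting the fact that each of the $|Z| = \lambda_G(s,t)$ arcs of a mincut is used by every maximum flow. If instead it is the reverse of a flow arc $(x_{i+1}, x_i) \in E(P_j)$, then $P_j$ contains an arc from $V(G) \setminus S$ back into $S$, forcing $P_j$ to cross $\delta^+_G(S) = Z$ at least twice, which contradicts the one-arc-per-path property of any mincut–maxflow pair. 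I expect this step to be the main obstacle, since it is the only place where flow–cut duality is invoked in both directions simultaneously (each $Z$-arc is saturated, and each flow path meets $Z$ exactly once); everything else reduces to direct manipulation of residual-graph edges based on the foundational observation of the first paragraph.
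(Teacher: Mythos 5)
Your proof is correct and rests on the same key observation as the paper's (every arc of $\delta^+_G(\resreach{v})$ must lie on $\flow$, since otherwise it survives in $G^\flow$ and contradicts the definition of $\resreach{v}$), leading to essentially the same argument. The only cosmetic differences are that you organize the cases directly by $(s,t)$-membership in $\resreach{v}$ rather than by reachability of $V(\flow)$ from $v$ in $G$, and in Case 2 you rederive from scratch the flow--cut-duality facts (no residual arc leaves the $s$-side of a mincut; every mincut arc is saturated; each flow path uses exactly one mincut arc) that the paper treats as known and cites only implicitly via ``immediately follows from $t\in\resreach{v}$.''
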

\begin{proof}
Note that $s\in \resreach{v}$ if and only if there is a path from $v$ to a vertex on $\flow$ in $G$. The 
forward implication follows from that any directed 
path from $v$ to $V(\flow)$ in $G^\flow$ contains as a prefix a subpath from $v$ to $V(\flow)$ which is also a path in $G$. 
The backward implication is trivial.

To begin with, consider the case when no vertex on $\flow$ is reachable from $v$ in $G$. 
From the observation in the previous paragraph, it follows that no vertex on $\flow$ is reachable from $v$ in $G^\flow$ as well. 
In particular, we have $s,t \notin \resreach{v}$. 
It remains to note that any edge contained in $\delta_G^+(\resreach{v})$ is on $\flow$ since otherwise the head of the edge would have 
been included in $\resreach{v}$. From $\resreach{v}\cap V(\flow)=\emptyset$, we conclude $\delta_G^+(\resreach{v}) = \emptyset$.

Therefore, we turn to the case when some vertex on $\flow$ is reachable from $v$ in $G$, which subsumes $s\in \resreach{v}$. 
First, if $t\notin \resreach{v}$, observe that the set of vertices on $P_i$ reachable from $v$ in $G^\flow$ forms a proper prefix of $P_i$ for every $i\in[\lambda]$. 
Therefore, $\delta_G^+(\resreach{v})$ takes precisely one edge from each $P_i\in \flow$. 
It suffices to show that $C:=\delta_G^+(\resreach{v})$ is an $st$-cut. Indeed, an $st$-path in $G-C$ contains 
a path in $G-E(\flow)$ from a vertex on $P_i$ before $C\cap E(P_i)$ to a vertex on $P_j$ after $C\cap E(P_j)$ for some $i,j\in [\lambda]$. 
That is, a vertex on $\flow$ after $C\cap E(\flow)$ is reachable from $v$ in $G^\flow$, a contradiction.
Second, if $t\in \resreach{v}$, then $\resreach{v}$ contains the entire vertices on $\flow$ and thus $\delta_G^+(\resreach{v}) = \emptyset$. 
That $v$ is on the $t$-side of every $st$-mincut immediately follows from the condition $t\in \resreach{v}$.
\end{proof}

Let $C$ be an $st$-mincut in $\inst$ and let $H$ be a reachability pattern.
For $i \in [\lambda]$, the \emph{leader of the path $P_i$ after $C$} (with respect to the pattern $H$), denoted $\leader{H}{C}{i}$, is the first (closest to $s$) vertex
$v$ on $P_i$ such that for every $(i,j) \in E(H)$ the vertex $\lastreach{v}{j}$ is after $C$ on $P_j$. 
A few remarks are in place. First, the notion of the leader is well-defined as the vertex $t$ is always a feasible candidate. 
Second, since a reachability pattern is required to contain a self-loop at every vertex, for every $i \in [\lambda]$ there is at least
one edge $(i,j) \in E(H)$ to consider.
Third, as $C$ is oriented from the $t$-side to the $s$-side in $G^\flow$, $\leader{H}{C}{i}$ is after $C$ on $P_i$ and, furthermore,
the entire path in $G^\flow$ from $\leader{H}{C}{i}$ to $\lastreach{\leader{H}{C}{i}}{j}$ for every $(i,j) \in E(H)$ 
lies on the $t$-side of $C$.

For an $st$-mincut $C$ and a reachability pattern $H$, we define the \emph{mincut $H$-subsequent to $C$} as follows.
If $t \in \resreach{\leader{H}{C}{i}}$ for some $i \in [\lambda]$, the mincut $H$-subsequent to $C$ is undefined.
Otherwise, Lemma~\ref{lem:resreach} implies that $\delta^+(\resreach{\leader{H}{C}{i}})$ is an $st$-mincut for every $i \in [\lambda]$. 
Let $X = \bigcup_{i \in [\lambda]} \resreach{\leader{H}{C}{i}}$. By submodularity, $C' := \delta^+(X)$ is also an $st$-mincut;
we proclaim $C'$ to be the mincut $H$-subsequent to $C$. 
Observe that, by definition, for every $i \in [\lambda]$, the leader $\leader{H}{C}{i}$ lies in the $t$-side of $C$ and $s$-side of $C'$,
the set $\resreach{\leader{H}{C}{i}}$ lies in the $s$-side of $C'$ and for every $(i,j) \in E(H)$ any path in $G^\flow$ from $\leader{H}{C}{i}$ 
to a vertex on $P_j$ in the $t$-side of $C$ (in particular, to $\lastreach{\leader{H}{C}{i}}{j}$) lies entirely in the $t$-side of $C$ and $s$-side of $C'$.

For a reachability pattern $H$, an \emph{$H$-sequence of mincuts} is a sequence $C_1,C_2,\ldots,C_\ell$ of mincuts defined as follows.
$C_1$ is the $st$-mincut closest to $s$ and for $a > 1$ the mincut $C_a$ is the mincut $H$-subsequent to $C_{a-1}$, as long as it is defined. 
Observe that for every $1 \leq a < b \leq \ell$ and $i \in [\lambda]$ we have that the edge of $C_a$ on $P_i$ lies strictly before the edge of $C_b$
on $P_i$. That is, the $s$-side of $C_a$ is contained in the $s$-side of $C_b$ and, furthermore, on every path $P_i$
the $s$-side of $C_a$ is a strict subset of the $s$-side of $C_b$. See Figure~\ref{fig:seq} for an illustration. 

For $a\in [\ell-1]$, let $G'$ be the graph obtained from $G$ by contracting the $s$-side of $C_a$ and the $t$-side of $C_{a+1},$ 
and $\flow'$ be the maximum $st$-flow of $G'$ obtained by shortening each $P_i$ so as to start with the edge $E(P_i)\cap C_a$ 
and to end with the edge $E(P_i)\cap C_{a+1}$ for $i\in [\lambda]$. 
Observe that the pattern associated with $(\inst',\flow')$ is precisely the patter associated with $(\inst,\flow).$ 

\begin{figure}[tb]
\begin{center}
\begin{tikzpicture}[scale=1.25]
\draw (-5,0) node {$s$};
\draw (5,0) node {$t$};
\foreach \y in {1,2,3,4} {
  \draw[-latex] (-4.5, 1.25-\y*0.5) -- (4.5, 1.25-\y*0.5) node[pos=0,above] {\tiny $P_\y$};
}
\foreach \n/\x in {1/-4.25, 2/-2.5, 3/-0.5, 4/2.5} {
  \draw[very thick, red] (\x, 1) -- (\x, -1);
  \draw (\x, 1.25) node {$C_\n$};
}

\foreach \x/\y/\z in {-4/2/4, -3.5/1/2, -3/3/4} {
  \draw[-latex] (\x,1.25-\y*0.5) -- (\x,1.25-\z*0.5);
}
\foreach \x/\y/\z in {-2/1/4, -1.5/2/4, -1/1/2, -1/3/4} {
  \draw[-latex] (\x,1.25-\y*0.5) -- (\x,1.25-\z*0.5);
}
\foreach \x/\y/\z in {0/1/2, 0.5/2/4, 1/3/4, 1.5/3/4, 2/1/4} {
  \draw[-latex] (\x,1.25-\y*0.5) -- (\x,1.25-\z*0.5);
}
\foreach \x/\y/\z in {3/1/2, 3.5/1/2, 4/1/2, 3.5/3/4, 4/3/4} {
  \draw[-latex] (\x,1.25-\y*0.5) -- (\x,1.25-\z*0.5);
}

\tikzstyle{vertex}=[circle,fill=black,minimum size=0.2cm,inner sep=0pt]

\draw (-4.5, -2) node {Pattern $H$:};
\node[vertex] (p1) at (-3, -2) {};
\node[vertex] (p2) at (-1, -2) {};
\node[vertex] (p3) at (1, -2) {};
\node[vertex] (p4) at (3, -2) {};
\foreach \n in {1,2,3,4} {
  \draw[below] (p\n) node {$P_\n$};
  \draw[-to] (p\n) edge[out=150,in=180] ($ (p\n) + (0,0.3) $);
  \draw[] ($ (p\n) + (0,0.3) $) edge[out=0,in=30] (p\n);
}
\draw [-latex] (p1) -- (p2);
\draw [-latex] (p3) -- (p4);
\draw [-latex] (p1) edge[out=20,in=160] (p4);
\draw [-latex] (p2) edge[out=10,in=170] (p4);
\end{tikzpicture}
\caption{A schematic example of an $H$-sequence of mincuts.
  The cuts are marked red. The last cut needs to be that late, because a connection
from $P_1$ to $P_4$ is missing. Recall that the reachability is checked in the residual graph, where the flow
paths are traversed backwards.}\label{fig:seq}
\end{center}
\end{figure}
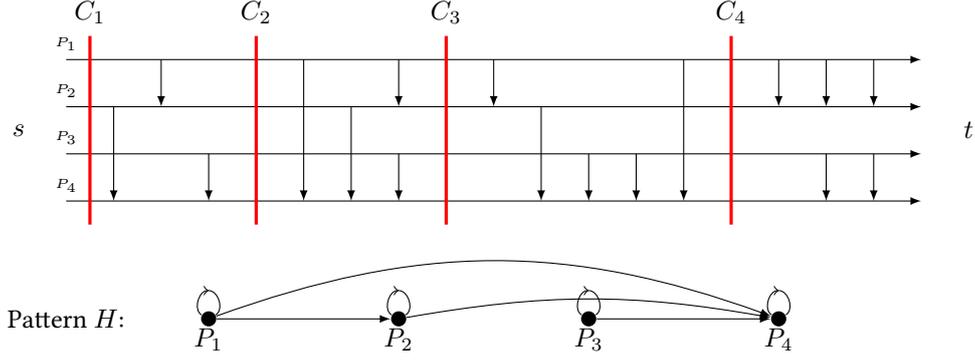

We say that the instance $\inst$ has \emph{proper boundaries} if both
$\delta^+(s)$ and $\delta^-(t)$ are $st$-mincuts and $\delta^+(s) \cap \delta^-(t) = \emptyset$, that is,
there is no arc $(s,t)$. 
In other words, $\inst$ has proper boundaries if $\delta^+(s)$ is the $st$-mincut closest to $s$, $\delta^-(t)$ is the $st$-mincut closest to $t$,
 and these cuts are disjoint.
Note that in particular if $\inst$ has proper boundaries, then the pattern $H$ associated with it is a reachability pattern.
We have the following observations.
\begin{lemma}\label{lem:proper-pattern}
Let $(\inst,\flow)$ be an instance with a maximum flow such that $\inst$ has proper boundaries and
let $H$ be the associated reachability pattern. 
Let $C_1,C_2,\ldots,C_\ell$ be the $H$-sequence of mincuts.
Then $C_1 = \delta^+(s)$ and $\ell \geq 2$, that is, $C_2$ is defined.
\end{lemma}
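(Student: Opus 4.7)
The plan is to establish the two assertions separately, with the identification of $C_1$ being essentially immediate and the existence of $C_2$ reducing to a careful analysis of the leaders.

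For $C_1 = \delta^+(s)$: by the proper-boundaries hypothesis, $\delta^+(s)$ is an $st$-mincut, and its $s$-side equals $\{s\}$, which is the inclusion-wise smallest $s$-side possible for any $st$-mincut (since every $s$-side must contain $s$). By uniqueness of the mincut closest to $s$, recalled in the preliminaries via submodularity, we conclude $C_1 = \delta^+(s)$.

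For $\ell \geq 2$, it suffices to check that $C_2$ is defined, i.e., $t \notin \resreach{\leader{H}{C_1}{i}}$ for every $i \in [\lambda]$. The heart of the proof is to pin down $\leader{H}{C_1}{i}$ as an \emph{internal} vertex of $P_i$. Since the leader lies on $P_i$ after $C_1 = \delta^+(s)$, it is automatically distinct from $s$, so the real task is to rule out that it equals $t$. I plan to achieve this by exhibiting a strictly earlier candidate $v^\star$ on $P_i$ that already satisfies the leader condition, forcing the leader to appear no later than $v^\star$. The natural choice is $v^\star := $ the vertex immediately preceding $t$ on $P_i$, which exists because the proper-boundaries hypothesis forbids the arc $(s,t)$ and hence forces every flow path to have at least one internal vertex.

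To verify the leader condition at $v^\star$, fix $(i,j) \in E(H)$. By the definition of the pattern associated with $(\inst,\flow)$, there are $v' \in V(P_i) \setminus \{s,t\}$ and $u' \in V(P_j) \setminus \{s,t\}$ together with a $v'$-to-$u'$ path in $G^{\flow}$. Because the flow arcs of $P_i$ are reversed in $G^{\flow}$, from $v^\star$ one can traverse $P_i$ backwards through $G^{\flow}$ to reach $v'$, and then follow the given residual path to $u'$. Hence $u' \in \resreach{v^\star}$, so $\lastreach{v^\star}{j}$ lies strictly after $s$ on $P_j$, i.e., strictly after $C_1$ on $P_j$. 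The self-loop case $(i,i) \in E(H)$ is handled by $v^\star \in \resreach{v^\star}$. Thus $v^\star$ meets the leader condition, which yields $\leader{H}{C_1}{i} \neq t$.

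With $v := \leader{H}{C_1}{i}$ now known to be an internal vertex of $P_i$, the prefix of $P_i$ from $s$ to $v$ avoids $t$ entirely and hence uses no arc of $\delta^-(t)$, placing $v$ on the $s$-side of $\delta^-(t)$. Applying Lemma~\ref{lem:resreach} to $v$ then rules out cases~2 and~3: case~2 requires $v$ to lie on the $t$-side of every $st$-mincut (violated by $\delta^-(t)$), and case~3 requires $t$ to be unreachable from $v$ in $G$ (violated by the suffix of $P_i$). So case~1 holds and $t \notin \resreach{v}$, as required. The main technical move is the backward traversal of $P_i$ inside $G^{\flow}$ to certify residual reachability from $v^\star$; the rest is bookkeeping on which side of each mincut each flow vertex lies.
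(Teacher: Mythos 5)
Your proof is correct and follows essentially the same approach as the paper's: identify $C_1 = \delta^+(s)$ via the mincut closest to $s$, exhibit an internal witness on $P_i$ (you take the predecessor of $t$; the paper implicitly uses the internal witnesses $v_{i,j}$ of the $H$-edges) to force $\leader{H}{C_1}{i}\neq t$, and then use $\delta^-(t)$ being an $st$-mincut to rule out $t\in\resreach{\leader{H}{C_1}{i}}$. You spell out the last step by eliminating cases~2 and~3 of Lemma~\ref{lem:resreach}, which is a cleaner justification than the paper's terse appeal to $\delta^-(t)$ being a mincut, but the underlying argument is the same.
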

\begin{proof}
$C_1 = \delta^+(s)$ by the definition of having proper boundaries.
By the definition of $H$, for every $(i,j) \in E(H)$ there exists a vertex $v_{i,j} \in V(P_i) \setminus \{s,t\}$
and $u_{i,j} \in V(P_j) \setminus \{s,t\}$ such that $u_{i,j}$ is reachable from $v_{i,j}$ in $G^\flow$. 
Consequently, $\leader{H}{C_1}{i} \neq t$ for every $i \in [\lambda]$. 
Since $\delta^-(t)$ is an $st$-mincut, $t \notin \resreach{\leader{H}{C_1}{i}}$ for every $i \in [\lambda]$.
Thus, $C_2$ is defined.
\end{proof}

\begin{lemma}\label{lem:trans-pattern}
Let $(\inst,\flow)$ be an instance with a maximum flow such that $\inst$ has proper boundaries and
let $H$ be the associated reachability pattern. 
Let $C_1,C_2,\ldots,C_\ell$ be the $H$-sequence of mincuts.
If $\ell \geq 3$, then $H$ is transitive. 
\end{lemma}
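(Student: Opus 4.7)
The plan is to assume $(i,j), (j,k) \in E(H)$ and exhibit an explicit walk in $G^\flow$ from an internal vertex of $P_i$ to an internal vertex of $P_k$; by the definition of the pattern associated with $(\inst,\flow)$, such a walk immediately certifies $(i,k) \in E(H)$. The hypothesis $\ell \geq 3$ will be used precisely to guarantee that the leader with respect to $C_2$ is a genuine internal vertex of $P_i$ and that its residual image on $P_j$ lands strictly past $C_2$, which is the positional fact that drives the whole construction.

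First, I would carefully locate the relevant leaders on the flow paths. Set $v_i := \leader{H}{C_2}{i}$ and $u := \lastreach{v_i}{j}$. Because $C_3$ is defined, $t \notin \resreach{\leader{H}{C_2}{i'}}$ for every $i'$, so in particular $v_i \neq t$; moreover $v_i$ is after $C_2$ on $P_i$, hence $v_i \neq s$, i.e.\ $v_i \in V(P_i) \setminus \{s,t\}$. The defining property of the leader applied to $(i,j) \in E(H)$ places $u$ strictly after $C_2$ on $P_j$, and since $u \in \resreach{v_i}$ lies on the $s$-side of $C_3$, $u$ sits strictly between $C_2$ and $C_3$ on $P_j$. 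Symmetrically, with $v_j := \leader{H}{C_1}{j}$ and $w := \lastreach{v_j}{k}$, the fact that $C_2$ is defined together with the leader condition for $(j,k) \in E(H)$ puts $v_j$ strictly between $C_1$ and $C_2$ on $P_j$ and $w$ strictly between $C_1$ and $C_2$ on $P_k$; in particular $w \in V(P_k) \setminus \{s,t\}$.

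The final step concatenates three walks in $G^\flow$: the walk from $v_i$ to $u$ witnessed by the definition of $\lastreach{v_i}{j}$, the backward walk from $u$ to $v_j$ along the reversed flow arcs of $P_j$, and the walk from $v_j$ to $w$ witnessed by the definition of $\lastreach{v_j}{k}$. The middle walk is valid because the positional analysis above shows $u$ lies strictly later than $v_j$ on $P_j$, so it can be reached by the reversed $P_j$-arcs present in $G^\flow$. The concatenation is a walk in $G^\flow$ from $v_i \in V(P_i)\setminus\{s,t\}$ to $w \in V(P_k)\setminus\{s,t\}$, yielding $(i,k) \in E(H)$ as desired. The only real obstacle I anticipate is careful bookkeeping of which cut in the $H$-sequence each of $v_i, u, v_j, w$ lies between, since this ordering is what makes the backward walk on $P_j$ bridge $u$ and $v_j$ in the correct direction; this is the exact point where $\ell \geq 3$, rather than merely $\ell \geq 2$, is used.
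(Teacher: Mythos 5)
Your proposal is correct and takes essentially the same route as the paper's proof: locate a residual connection from $P_i$ to $P_j$ in the $(C_2,C_3)$ band and one from $P_j$ to $P_k$ in the $(C_1,C_2)$ band (the paper invokes "the construction of $C_3$" and "the construction of $C_2$" for exactly the vertices your leader/$\lastreach{\cdot}{\cdot}$ choices produce), then bridge them by walking backward along $P_j$ in $G^\flow$ since the first arrival point lies past $C_2$ and the second departure point lies before it. The only stylistic difference is that you name the witnesses explicitly via leaders, which is a fine way to make the existence claims concrete.
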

\begin{proof}
Let $(i_1,i_2)$ and $(i_2,i_3)$ be two arcs of $H$. Our goal is to prove that $(i_1,i_3) \in E(H)$.

By the construction of $C_3$, since $(i_1,i_2) \in E(H)$, 
there exists $v \in V(P_{i_1})$ and $u \in V(P_{i_2})$ such that both $v$ and $u$ are after $C_2$ and before $C_3$
on paths $P_{i_1}$ and $P_{i_2}$, respectively,
and $u$ is reachable from $v$ in $G^\flow$.
Similarly, by the construction of $C_2$, since $(i_2,i_3) \in E(H)$, 
there exists $v' \in V(P_{i_2})$ and $u' \in V(P_{i_3})$ such that both $v'$ and $u'$ are after $C_1$ but before $C_2$
on paths $P_{i_2}$ and $P_{i_3}$, respectively,
and $u'$ is reachable from $v'$ in $G^\flow$.
Since $u$ is after $C_2$ on $P_{i_2}$ and $v'$ is before $C_2$ on $P_{i_2}$, $v'$ is before $u$ on $P_{i_2}$ and hence $v'$ is reachable from $u$
in $G^\flow$. 
We infer that $u' \in V(P_{i_3}) \setminus \{s,t\}$ is reachable from $v \in V(P_{i_1}) \setminus \{s,t\}$ in $G^\flow$, hence $(i_1,i_3) \in E(H)$,
as desired.
\end{proof}

\subsection{Recursion structure and initial steps}

The algorithms of Theorem~\ref{thm:dir-flow-augmentation}
and of Theorem~\ref{thm:dir-flow-augmentation-det} are recursive.
The input is an instance with a flow $(\inst = (G,s,t,k),\flow = \{P_1,\ldots,P_\lambda\})$ and an integer $\cutlb$.
In the randomized case, the goal is to return a set $A \subseteq V(G) \times V(G)$ and an $st$-maxflow $\witnessflow$ in $G+A$
such that $\lambda_{G+A}(s,t) \geq \cutlb$ 
and 
for every star $st$-cut $Z$ with $|Z| \leq k$ and $|\corecutG{Z}{G}| \geq \cutlb$, 
$(A,\witnessflow)$ is compatible with $Z$
(i.e., $A$ is compatible with $Z$, $\corecutG{Z}{G+A}$ is an $st$-mincut in $G+A$, and $\witnessflow$ is a witnessing flow for $Z$ in $G+A$)
with good probability. 
In the deterministic case, we want a small family $\mathcal{A}$ of pairs $(A,\witnessflow)$
with $\lambda_{G+A}(s,t) \geq \cutlb$ and $\witnessflow$ being an $st$-maxflow in $G+A$,
     such that for every star $st$-cut $Z$ with $|Z| \leq k$ and $|\corecutG{Z}{G}| \geq \cutlb$,
at least one element of $\mathcal{A}$ is compatible with $Z$.

For an input $(\inst,\flow,\cutlb)$, the algorithm may perform some randomized choices / branching steps and a number of recursive calls. 
In each case, we will argue that the returned output (one of the returned outputs) $(A,\witnessflow)$ is compatible with $Z$, usually using the fact that a similar property holds for outputs of recursive subcalls.
While performing random choices or branching steps, we will always say what to aim for (e.g., for a vertex $v$ we may randomly guess if it is on the $s$-side or $t$-side of the hypothethical cut $Z$)
  and bound the probability of a correct choice. 
Furthermore, for some recursive calls $(\inst',\flow',\cutlb')$ we will define a star $st$-cut $Z'$ (depending usually on $Z$)
and prove that if the returned output $(A',\witnessflow')$ is compatible with $Z'$, then the final output $(A,\witnessflow$) is compatible with $Z$ (if the returned family $\mathcal{A}'$ contains
    an element compatible with $Z'$, then the final output $\mathcal{A}$ contains an element
   compatible with $Z$ respectively in the deterministic case).

However, for sake of clarity in the description of the algorithm we will not perform a formal probability analysis.
(We perform a formal probability analysis for Theorem~\ref{thm:dir-flow-augmentation}
 in Section~\ref{ss:prob-analysis}
 and an analysis of the size of the output family $\mathcal{A}$ for Theorem~\ref{thm:dir-flow-augmentation-det} in Section~\ref{ss:det-analysis}.)
Instead, we always informally indicate what is the measure of progress in the recursive subcalls, so that the claim that the success probability in the randomized case is lower bounded by a function
of $k$ would be clear (but the exact estimate on the success probability requires some tedious calculations). 

A meticulous reader can notice that the flow $\flow$ is not used in the definitions above. 
We will use it to keep track of the progress of the algorithm; in some recursive calls the only progress
will be that the flow $\flow$ at hand changes structure in some sense. 

\medskip

Let us proceed now to the description of the algorithm.

Given $(\inst,\flow,\cutlb)$, the algorithm first performs a number of preprocessing steps.
First, if $\flow$ is not an $st$-maxflow, we compute an $st$-maxflow $\flow'$ and recurse on $(\inst,\flow',\cutlb)$. 
Thus, we henceforth assume $\flow$ is an $st$-maxflow, $\lambda = \lambda_G(s,t)$. 

If $\lambda = 0$, then every star $st$-cut $Z$ satisfies $\corecut{Z} = \emptyset$. 
In this case we return $A = \emptyset$ and $\witnessflow=\emptyset$ if $\cutlb = 0$
and $A = \{(s,t)\}$ and $\witnessflow$ consisting of one flow path along the edge $(s,t)$ if $\cutlb > 0$. 
(The deterministic algorithm returns a family $\mathcal{A}$ with a single element being the
 pair $(A,\witnessflow)$ as above.) Notice that if $\lambda=0$ and $\cutlb>0$, any star $st$-cut $Z$ 
 has $\corecut{Z}=\emptyset$, hence there is no $Z$ that the output set $A\subseteq V(G)\times V(G)$ needs to be compatible with.
  
If $\lambda > k$, then there is no star $st$-cut $Z$ with $|Z| \leq k$.
Hence, we can return $A = \{(s,t)\}$ and $\witnessflow$ consisting of one flow path along $(s,t)$. 
(Again, in the deterministic case, we return $\mathcal{A} = \{(A,\witnessflow)\}$ for
 $A$ and $\witnessflow$ as above.)

In the remaining (and most interesting) case $0 < \lambda = \lambda_G(s,t) \leq k$, we start by computing
$C_\leftarrow$ and $C_\rightarrow$, the $st$-mincuts closest to $s$ and $t$, respectively.
We can also assume $\cutlb \geq \lambda$, as we can replace $\cutlb := \max(\cutlb,\lambda)$. 

If $C_\leftarrow \neq \delta^+(s)$, we proceed as follows.
Let $A_\leftarrow$ be the set of tails of edges of $C_\leftarrow$, except for $s$. 
In the randomized case, we make a random guess: with probability $0.5$ we guess that there exists a vertex $v \in A_\leftarrow$
that is on the $t$-side of $Z$, and with the remaining probability we guess that there is no such vertex.
The guess is correct with probability $0.5$. 

In the first case, we additionally guess one tail $v \in A_\leftarrow$ that is on the $t$-side of $Z$
(there are $|A_\leftarrow| \leq \lambda \leq k$ options, so we are correct with probability at least $k^{-1}$). 
Let $G' = G + \{(v,t)\}$ and note that $\lambda_{G'}(s,t) > \lambda_G(s,t)$ as $C_\leftarrow$ is the $st$-mincut closest to $s$. 
We recurse on $((G',s,t,k),\flow,\cutlb)$, obtaining a pair $(A',\witnessflow')$ and return $(A' \cup \{(v,t)\}, \witnessflow')$. 
Clearly, if the guess is correct, $Z$ remains a star $st$-cut in $G'$ and, if furthermore the output $(A',\witnessflow')$ is compatible with $Z$ in $G'$,
  then $(A' \cup \{(v,t)\}, \witnessflow')$ is compatible with $Z$ in $G$. 
The probability that we correctly entered this case is at least $(2k)^{-1}$, and in the recursive call the value $\lambda_G(s,t)$ increased.

(In the deterministic case, we replace the guess with branching in a standard manner:
 we invoke a branch for each $v \in A_\leftarrow$
 that assumes that $v$ is in the $t$-side of $Z$ and recurse as above, and as a last branch
 proceed with the second case as in the next paragraph.)

In the second case, we define $G'$ as $G$ with the whole $s$-side of $C_\leftarrow$ contracted onto $s$, recurse on $G'$
(with the flow paths of $\flow$ shortened to start from the edges of $C_\leftarrow$), 
obtaining in the randomized case 
a pair $(A',\witnessflow')$, and return $A := A' \cup \{(s,v)~|~v \in A_\leftarrow\}$ and $\witnessflow$ being the flow $\witnessflow'$ with every path potentially prepended with the 
appropriate edge $(s,v)$, $v \in A_\leftarrow$.
(In the deterministic case, we obtain a set $\mathcal{A}'$ from the recursive call
and perform the above modification to every element of $\mathcal{A}'$.)
If all tails of edges of $C_\leftarrow$ are on the $s$-side of $Z$, then we did not contract any edge of $\corecut{Z}$,
$\corecut{Z}$ remains a minimal $st$-cut in $G'$ and $Z' := Z \cap E(G')$ remains a star $st$-cut in $G'$. 
It is straightforward to observe if
furthemore $(A',\witnessflow')$ is compatible with $Z'$ in $G'$, then $(A,\witnessflow)$ is compatible with $Z$ in $G$. 
Furthermore, in the recursive call the $st$-mincut closest to $s$ in $G'$ is $\delta^+(s)$ (being the image of $C_\leftarrow$).

We perform a symmetric process if $C_\rightarrow \neq \delta^-(t)$. 
That is, let $A_\rightarrow$ be the set of heads of edges of $C_\rightarrow$, except for $t$. 
If there is a $v \in A_\rightarrow$ that is in the $s$-side of $Z$, we guess so (with probability 0.5), guess $v$ (with probability at least $k^{-1}$),
recurse on $G' = G + \{(s,v)\}$, obtaining a pair $(A',\witnessflow')$ and return $(A' \cup \{(s,v)\}, \witnessflow')$. 
As before, in the recursive call $\lambda_{G'}(s,t) > \lambda_G(s,t)$ as $C_\rightarrow$ is the $st$-mincut closest to $t$.
Otherwise, we guess that this is the case (with probability 0.5) and contract the $t$-side of $C_\rightarrow$ onto $t$. 
We observe that if the guess is correct, then no edge of $Z$ is contracted onto $t$, $Z$ remains a star $st$-cut in the resulting graph $G'$.
Similarly as before, we recurse on $G'$, obtaining a pair $(A',\witnessflow')$ and return $A := A' \cup \{(v,t)~|~t \in A_\rightarrow\}$
and $\witnessflow$ constructed from $\witnessflow'$ by appending an appropriate arc $(v,t)$, $v \in A_\rightarrow$ to some of the paths. 
In the recursive call the $st$-mincut closest to $t$ is $\delta^-(t)$ (being the image of $C_\rightarrow$). The deterministic counterpart of the above process is fully analogous. 

As a result, we either already recursed and returned an answer
(being a correct guess with probability $\Omega(k^{-1})$ in the randomized case
 or invoking $\Oh(k)$ recursive calls in the deterministic case)
or in the input graph, $\delta^+(s)$ and $\delta^-(t)$ are $st$-mincuts. 

As a last preprocessing step, we check if $G$ contains an arc $(s,t)$. If this is the case, then clearly $(s,t) \in Z$ for any $st$-cut $Z$ 
and $(s,t)$ is one of the paths in $\flow$, say $P_i$. 
We recurse on $G' := G-\{(s,t)\}$ with $\flow \setminus \{P_i\}$, parameter $k-1$ instead of $k$, and $\cutlb-1$ instead of $\cutlb$, obtaining a pair $(A',\witnessflow')$ in the randomized case.
Furthermore, $Z \setminus \{(s,t)\}$ remains a star $(s,t)$-cut in $G'$. 
Thus, we can safely return $(A',\witnessflow' \cup \{P_i\})$ in this case.
(In the deterministic case, we obtain the set $\mathcal{A}'$ from recursion and apply the
 same modification to every $(A',\witnessflow') \in \mathcal{A}'$). 

\medskip

Hence, we are left with $\inst$ being an instance with proper boundaries. 
We compute the reachability pattern $H$ of $(\inst,\flow)$ and the $H$-sequence of mincuts $C_1,C_2,\ldots,C_\ell$.
We have $C_1 = \delta^+(s)$ and, by Lemma~\ref{lem:proper-pattern}, $\ell \geq 2$.
We split into three cases, tackled in the next three subsections:
\begin{itemize}
\item $|E(H)| = |V(H)|$, that is, $H$ contains only self-loops at every vertex;
\item $\ell \leq \ellthreshold$, that is, the $H$-sequence $C_1,\ldots,C_\ell$ is short;
\item $\ell > \ellthreshold$, that is, the $H$-sequence $C_1,\ldots,C_\ell$ is long. 
\end{itemize}

\subsection{Base case: only self-loops in $H$}
We now deal with the base case $|E(H)| = |V(H)|$, that is, $H$ consist of $\lambda$ vertices with self-loops and nothing else.

We have the following observation.
\begin{lemma}\label{lem:base-case-structure}
Let $(\inst,\flow)$ be an instance with maximum flow with proper boundaries and let $H$ be its reachability pattern.
Assume $|V(H)| = |E(H)|$. 
For every $i \in [\lambda]$, $v \in V(P_i) \setminus \{t\}$, and a bottleneck edge $e$ on $P_i$, if $e$ lies before $v$ on $P_i$, then any path from $s$ to $v$ in $G$
visits $e$.
Consequently, any set consisting of one bottleneck edge from each path $P_i$ is an $st$-mincut in $G$.
\end{lemma}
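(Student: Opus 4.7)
My plan is to first establish the displayed inclusion (every $s$-$v$ path uses $e$) and then derive the ``consequently'' part as a short corollary. For the corollary, take any selection $\{e_1,\ldots,e_\lambda\}$ of one bottleneck edge per $P_j$ and any $st$-path $R$ in $G$; I will show $R$ uses some $e_j$. Since $\inst$ has proper boundaries, $\delta^-(t)$ is an $st$-mincut, so its edges are precisely the last edges of $P_1,\ldots,P_\lambda$; hence the final edge of $R$ coincides with the last edge of some $P_j$, and the penultimate vertex $r$ of $R$ lies in $V(P_j)\setminus\{t\}$. If $e_j$ is the last edge of $P_j$, we are done; otherwise $e_j$ lies strictly before $r$ on $P_j$, and the main claim applied with $i:=j$, $e:=e_j$, $v:=r$ forces the $s$-$r$ prefix of $R$, and hence $R$, to traverse $e_j$. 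Thus $\{e_1,\ldots,e_\lambda\}$ meets every $st$-path; being of size $\lambda=\lambda_G(s,t)$, it is an $st$-mincut.

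For the main claim I will argue by contradiction. Fix a mincut $C$ with $e\in C$ and let $X$ be its $s$-side; then $|C|=\lambda$, $C$ has one edge on each $P_j$, and $v\in V\setminus X$ since $v$ lies at or after the head of $e$ on $P_i$. Suppose $Q$ is an $s$-$v$ path in $G$ that avoids $e$. Then $Q$ must cross $C$, and since $e\notin E(Q)$, it does so via some edge $e^*=C\cap E(P_{j_0})$ with $j_0\neq i$. Choose $e^*$ to be the \emph{last} $C$-crossing edge of $Q$, let $w_2$ be its head, and let $Q''$ be the suffix of $Q$ from $w_2$ to $v$; then $Q''$ stays in $V\setminus X$, and under the standard flow-network convention that $t$ has no outgoing edges (so $Q$ cannot visit $t$ internally), we have $w_2\in V(P_{j_0})\setminus\{s,t\}$. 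The goal is to obtain a path in $G^\flow$ from an internal vertex of $P_{j_0}$ to an internal vertex of $P_i$, which would yield $(j_0,i)\in E(H)$ with $j_0\neq i$, contradicting $|E(H)|=|V(H)|$.

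I will build such a residual path by an inductive advancing argument along $Q''$. Starting from $w:=w_2$, walk along the current suffix until the first forward flow edge $(a,b)\in E(P_\ell)$ is encountered. The prefix of the suffix from $w$ to $a$ then contains only non-flow forward edges, so it is a path in $G^\flow$, and hence $a\in\resreach{w}$. Because $a,b\in V\setminus X$ and $a$ has an outgoing flow edge, $a\in V(P_\ell)\setminus\{s,t\}$. If $\ell\neq j_0$ this immediately produces $(j_0,\ell)\in E(H)$, contradicting $|E(H)|=|V(H)|$; otherwise $\ell=j_0$, and $b$ is an internal vertex of $P_{j_0}$ strictly later than $w$ on $P_{j_0}$ and still in $V\setminus X$, so I replace $w$ by $b$ and the current suffix by its portion from $b$ and iterate. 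The suffix strictly shortens each round, so after finitely many steps the current suffix has no forward flow edges; then it is entirely residual and witnesses a path in $G^\flow$ from the current $w\in V(P_{j_0})\setminus\{s,t\}$ to $v\in V(P_i)\setminus\{s,t\}$, delivering the forbidden edge $(j_0,i)\in E(H)$.

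The hard part will be making the iterative advancement airtight: one must justify that at every round the ``current start'' stays in $V(P_{j_0})\setminus\{s,t\}\cap(V\setminus X)$, rule out a visit to $t$ (which relies on either the convention that $t$ has no outgoing edges or an explicit restriction to paths avoiding $t$), and handle the degenerate rounds (in particular, the base case where the current suffix is already fully residual, and the possibility that the first flow edge encountered starts exactly at $w$, forcing $\ell=j_0$ by internal vertex-disjointness of the flow paths, itself a consequence of $|E(H)|=|V(H)|$). Once these bookkeeping points are pinned down, the contradiction closes the main claim and the ``consequently'' statement then drops out as in the first paragraph.
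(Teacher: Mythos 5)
Your proof is correct (the bookkeeping you flag all goes through), but it takes a somewhat different route from the paper's, most notably for the main claim. The paper's proof splits into two cases: if $Q$ meets some $V(P_j)\setminus\{s,t\}$ with $j\neq i$, a minimal subpath of $Q$ between internal flow-path vertices uses no flow edges, hence lies in $G^\flow$ and yields a non-loop edge of $H$; otherwise $Q$ is disjoint from every $C\cap E(P_j)$ for $j\neq i$, so concatenating $Q$ with the suffix of $P_i$ from $v$ gives an $st$-path missing $C$. You instead show up front that $Q$ must cross $C$ on some $P_{j_0}$ with $j_0\neq i$ (since $Q$ avoids $e=C\cap E(P_i)$), which eliminates the second case a priori, and you then replace the one-shot minimal-subpath extraction with an iterative walk along the suffix $Q''$. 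This is sound, but do note that the correct measure of progress is that the suffix of $Q$ strictly shortens ($b$ lies strictly after $w$ along $Q$), not that $b$ lies strictly later than $w$ along $P_{j_0}$: the latter is neither needed nor clearly true, since the $w$-to-$a$ subpath of $Q$ need not respect the ordering of $P_{j_0}$. You could also short-circuit the walk entirely by applying the paper's minimal-subpath step to the portion of $Q$ from $w_2$ to $v$. For the ``consequently'' claim your argument is genuinely cleaner than the paper's: instead of extracting a flow-vertex-free bypass segment from a $Y$-avoiding $st$-path and case-analysing its endpoints, you use that $\delta^-(t)$ is exactly the set of last edges of $P_1,\ldots,P_\lambda$ and feed the penultimate vertex of an arbitrary $st$-path straight into the first claim. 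Finally, both your argument and the paper's need $Q$ not to visit $t$ internally; the paper attributes this to proper boundaries, but the definition of proper boundaries does not literally forbid outgoing arcs at $t$, so you are right to name the underlying ``$t$ has no outgoing arcs'' convention explicitly.
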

\begin{proof}
For the first claim, let $Q$ be a path from $s$ to $v$ in $G$ avoiding $e$. 
By the assumption on proper boundaries, $t \notin V(Q)$. If $Q$ contains a vertex of $V(P_j) \setminus \{s,t\}$ for some $j \neq i$, then 
a minimal subpath of $Q$ from a vertex of $\bigcup_{j \neq i} V(P_j) \setminus \{s,t\}$ to a vertex of $V(P_i) \setminus \{s,t\}$ is a path also in $G^\flow$ and thus 
contradicts the assumption that $H$ contains only self-loops.

Let $C$ be an $st$-mincut witnessing that $e$ is a bottleneck edge. Since $e$ is before $v$ on $P_i$, the concatenation of $Q$ and a subpath of $P_i$ from $v$ to $t$ is 
an $st$-path disjoint with $C$, a contradiction. This finishes the proof of the first claim.

For the second claim, pick a bottleneck edge $e_i \in E(P_i)$ on every path $P_i$ and let $Y = \{e_i~|~i\in [\lambda]\}$. 
Assume $Y$ is not an $st$-mincut.
This implies that there exists a path $Q$ in $G$ that starts on a path $P_i$ in a vertex $v$ before $e_i$, ends on a path $P_j$ in a vertex $u$
after $e_j$ and does not contain any edge of $Y$ nor any internal vertex on paths of $\flow$.
In particular, $Q$ does not contain any edge of $\flow$.

Note that if $v \neq s$ and $u \neq t$, then $(i,j) \in E(H)$ by the definition of the graph $H$.
Hence, either $i=j$, or $v=s$, or $u=t$. Note that if $v=s$ or $t=u$ we could have chosen $i=j$ anyway, so 
it suffices to consider the case $i=j$. 
Prepend $Q$ with a subpath of $P_i$ from $s$ to $v$, obtaining a path $Q'$ from $s$ to $u$ avoiding $e_i$. This contradicts the first claim.
\end{proof}
\begin{lemma}\label{lem:base-case}
Let $(\inst,\flow)$ be an instance with maximum flow with proper boundaries and let $H$ be its reachability pattern.
Assume $|V(H)| = |E(H)|$ and let $Z$ be a star $st$-cut in $G$.
Then exactly one of the following cases hold.
\begin{itemize}
\item There exists $i \in [\lambda]$ such that no edge of $Z \cap E(P_i)$ is a bottleneck edge and $\corecut{Z}$ is not an $st$-mincut, that is, $|\corecut{Z}| > \lambda_G(s,t)$.
\item $\corecut{Z}$ is an $st$-mincut, that is, $|\corecut{Z}| = \lambda_G(s,t)$, and no edge of $Z \setminus \corecut{Z}$ is a bottleneck edge of $G$.
\end{itemize}
\end{lemma}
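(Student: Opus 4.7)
The plan is to prove the dichotomy by leveraging Lemma~\ref{lem:base-case-structure} together with the star-cut structure of $Z$. The two bulleted cases are mutually exclusive by definition, since one requires $|\corecut{Z}| > \lambda_G(s,t)$ and the other $|\corecut{Z}| = \lambda_G(s,t)$. So the real task is to show that at least one of them holds, and equivalently, to rule out two offending scenarios: that $\corecut{Z}$ is an $st$-mincut while some edge of $Z \setminus \corecut{Z}$ is a bottleneck, and symmetrically, that $\corecut{Z}$ is not an $st$-mincut while every $P_i$ contains a bottleneck edge from $Z \cap E(P_i)$.

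The central observation I would establish first is: any bottleneck edge $e \in Z \cap E(P_i)$ must be the \emph{last} edge of $Z$ on $P_i$. Indeed, suppose $e' \in Z \cap E(P_i)$ is strictly after $e$, with $e' = (u',v')$. Since $Z$ is a star cut, $u'$ lies on the $s$-side of $Z$, so there is an $s$-to-$u'$ path in $G-Z$. But $u'$ is on $P_i$ strictly after the bottleneck edge $e$, so by the first claim of Lemma~\ref{lem:base-case-structure} every $s$-to-$u'$ path in $G$ traverses $e$; as $e \in Z$, no such path survives in $G-Z$, a contradiction. Two corollaries fall out immediately: at most one edge of $Z \cap E(P_i)$ can be a bottleneck, and this bottleneck (when it exists) lies in $\corecut{Z}$, because the last edge of $Z$ on $P_i$ is always in $\corecut{Z}$ (its head reaches $t$ via the remaining suffix of $P_i$, which contains no $Z$-edge).

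With this observation, the two offending scenarios are eliminated cleanly. If $\corecut{Z}$ is an $st$-mincut, it contains exactly one (necessarily bottleneck) edge $e'_i$ on each $P_i$; a hypothetical bottleneck $e \in Z \setminus \corecut{Z}$ would live on some $P_i$, yielding two distinct bottleneck edges $e, e'_i$ in $Z \cap E(P_i)$ and contradicting the uniqueness corollary. Conversely, if every $P_i$ has a bottleneck edge $e_i$ in $Z \cap E(P_i)$, then by the observation each $e_i \in \corecut{Z}$, so the set $Y := \{e_1,\ldots,e_\lambda\}$ is contained in $\corecut{Z}$; by the second claim of Lemma~\ref{lem:base-case-structure}, $Y$ is itself an $st$-mincut. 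Since $\corecut{Z}$ is a minimal $st$-cut and thus can properly contain no $st$-cut, we must have $\corecut{Z} = Y$, making $\corecut{Z}$ an $st$-mincut of size $\lambda_G(s,t)$.

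The main subtlety is the central observation above: the interplay between the bottleneck condition and the star-cut requirement forces a tight positional constraint on bottleneck edges of $Z$ along each flow path. Once this is pinned down, both parts of the dichotomy follow from Lemma~\ref{lem:base-case-structure} by direct case analysis.
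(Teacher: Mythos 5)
Your proof is correct and rests on the same core technical observation as the paper's: combining the star-cut property of $Z$ with the first claim of Lemma~\ref{lem:base-case-structure} forces any bottleneck edge of $Z$ on $P_i$ to be the last $Z$-edge on $P_i$ (hence unique and in $\corecut{Z}$), after which the second claim of Lemma~\ref{lem:base-case-structure} closes the argument. The only stylistic difference is that the paper case-splits on the structural condition ``some $P_i$ carries no bottleneck edge of $Z$,'' picking the first bottleneck on each path and showing nothing of $Z$ lies after it, whereas you case-split on whether $\corecut{Z}$ is a mincut and rule out the two offending scenarios by contradiction; the work done is essentially identical.
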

\begin{proof}
Assume first there is $i \in [\lambda]$ such that no edge of $Z \cap E(P_i)$ is a bottleneck edge. Then, $Z$ contains no $st$-mincut, in particular, $\corecut{Z}$ is not an $st$-mincut, and the first option holds.

In the other case, for every $i \in [\lambda]$, pick the first (closest to $s$) bottleneck edge $e_i \in Z \cap E(P_i)$ and denote $Z' = \{e_i~|~i \in [\lambda]\}$.
By Lemma~\ref{lem:base-case-structure}, $Z'$ is an $st$-mincut. 
Furthermore, for every $i \in [\lambda]$, every edge $e$ on $P_i$ after $e_i$ is in the $t$-side of $Z'$. Since $Z$ is a star $st$-cut, such an edge $e$ is not in $Z$; in particular, $Z'$ are the only bottleneck edges in $Z$.
We infer that for every $i \in [\lambda]$ we have $e_i \in \corecut{Z}$, that is, $Z' \subseteq \corecut{Z}$. As $Z'$ is an $st$-cut, $Z' = \corecut{Z}$ and the second case holds.
\end{proof}

We proceed as follows with the randomized algorithm. If $\cutlb = \lambda$, with probability 0.5 guess that $\corecut{Z}$ is an $st$-mincut in $G$. 
In this case, for every $i \in [\lambda]$ proceed as follows. Let $(u_{i,1},v_{i,1}), \ldots, (u_{i,a_i},v_{i,a_i})$ be the bottleneck edges of $P_i$, in the order of their appearance on $P_i$.
Denote $v_{i,0} = s$, $u_{i,a_i+1} = t$, $A_i = \{(v_{i,b},u_{i,b+1})~|~0 \leq b \leq a_i\}$ and $P_i'$ to be the path consisting of edges $(s,u_{i,1})$, $(u_{i,1},v_{i,1})$, $(v_{i,1},u_{i,2})$, \ldots, $(u_{i,a_i},v_{i,a_i})$, $(v_{i,a_i},t)$. We return $(A:=\bigcup_{i\in[\lambda]} A_i, \witnessflow = \{P_i'~|~i \in [\lambda]\})$. 
By Lemma~\ref{lem:base-case}, $(A,\witnessflow)$ is compatible with $Z$.

If $\cutlb > \lambda$, or $\cutlb = \lambda$ but (with the remaining probability 0.5) we guessed that $\corecut{Z}$ is not an $st$-mincut in $G$,
we randomly guess $i \in [\lambda]$, aiming at $Z \cap E(P_i)$ containing no bottleneck edge. 
We set $A_0$ to be the copies of all bottleneck edges of $P_i$, recurse on $G' := G+A_0$, obtaining a pair $(A',\witnessflow')$, and return $(A:=A' \cup A_0,\witnessflow')$. 
Clearly, if the guess is correct, $A_0$ is compatible with $Z$ and $Z$ remains an star $st$-cut in $G'$. Furthermore, if $(A',\witnessflow')$ is compatible with $Z$ in $G'$, then $(A,\witnessflow')$ is compatible
with $Z$ in $G$.
Since we added a copy of every bottleneck edge on one flow path, $\lambda_{G'}(s,t) > \lambda_G(s,t)$. 
This finishes the description of this case.

The deterministic counterpart of the above process is the natural one.
First, we insert to the constructed family $\mathcal{A}$ the pair $(A,\witnessflow)$ from the first case above. Then, we branch into $\lambda$ cases, one for each $i \in [\lambda]$, 
recurse as in the second case above, obtaining a set $\mathcal{A}'$, and insert into 
$\mathcal{A}$ a pair $(A' \cup A_0,\witnessflow')$ for every $(A',\witnessflow') \in \mathcal{A}'$.

\subsection{Small $\ell$ case}
In the next case we assume $|E(H)| > |V(H)|$ and $\ell \leq \ellthreshold$. Recall $\ellthreshold := 4k^2 + 3$.

For clarity, we describe only the randomized case of Theorem~\ref{thm:dir-flow-augmentation} 
below, as all guesses can be replaced by branching steps in a straightforward manner
(in particular, we do not use any color-coding steps here).

If there is no star $st$-cut $Z$ of size at most $k$ with $|\corecut{Z}| \geq \cutlb$,
then any pair $(A,\witnessflow)$ with $\lambda_{G+A}(s,t) \geq \cutlb$ is a valid outcome, so we can assume at least one such star $st$-cut exists.
For the sake of analysis, fix one star $st$-cut $Z$ with $|Z| \leq k$ and $|\corecut{Z}| \geq \cutlb$. 

Let $B = \bigcup_{i=1}^\ell V(C_i)$, that is, $B$ is the set of endpoints of the cuts $C_1,\ldots,C_\ell$. 
Let $B_\leftarrow$ be the set of elements of $B$ on the $s$-side of $Z$ and $B_\rightarrow$ be the set of vertices of $B$ on the $t$-side of $Z$.

For every $i \in [\lambda]$, we guess the last vertex $v_i$ of $B_\leftarrow$ (there is always one, as $s$ is a candidate) 
  and the first vertex $u_i$ of $B_\rightarrow$ on $P_i$ (or guess $u_i = \bot$ meaning that there is no such vertex).  
There are at most $2\ellthreshold(2\ellthreshold+1)$ options per index $i$, so in total the success probability is $2^{-\Oh(k \log k)}$.

We now consider a number of corner cases. We use the first applicable corner case, if possible,
so that in subsequent corner cases we can assume that the earlier ones are not applicable.

\paragraph{Corner case 1: $B_\rightarrow = \emptyset$, that is, for every $i \in [\lambda]$ we guessed $u_i = \bot$.}
In other words, all endpoints of edges of $\bigcup_{a=1}^\ell C_a$ are on the $s$-side of $Z$. 
In particular, all endpoints of $C_\ell$ are on the $s$-side of $Z$ and $t$ is not a head of any edge of $C_\ell$.
Let $G'$ be the graph $G$ with the $s$-side of $C_\ell$ contracted onto $s$.
Then, $\corecut{Z}$ is a minimal $st$-cut in $G'$ as well, $Z' := Z \cap E(G')$ is a star $st$-cut in $G'$,
and the paths $P_i$, shortened to start from the edge of $C_\ell$, form a maximum $st$-flow $\flow_{G'}$ in $G'$. 
Furthermore, as $G$ has proper boundaries, so has $G'$ (the cut $C_\ell$ becomes $\delta_{G'}^+(s)$ in $G'$). 
Let $H'$ be the reachability pattern of $G'$ and $\flow_{G'}$. 
We claim the following.
\begin{claim}
$E(H') \subsetneq E(H)$. 
\end{claim}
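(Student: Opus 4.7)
The plan is to prove $E(H') \subseteq E(H)$ by a direct lifting argument and to rule out equality by a contradiction based on Lemma~\ref{lem:proper-pattern} applied to $G'$. The key structural fact I would use is that $C_\ell$ is an $st$-mincut of size $\lambda$, so every edge of $C_\ell$ lies on $\flow$; consequently in $G^\flow$ every arc crossing $C_\ell$ is oriented from its $t$-side to its $s$-side (the flow edges of $C_\ell$ are reversed, and no other crossing arcs exist). In particular, a residual path starting in the $t$-side of $C_\ell$ never crosses back once it enters the $s$-side.

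For the inclusion $E(H') \subseteq E(H)$: any edge $(i,j) \in E(H')$ is witnessed by a residual path in $(G')^{\flow_{G'}}$ between internal vertices of $P_i'$ and $P_j'$, which correspond to post-$C_\ell$ internal vertices of $P_i$ and $P_j$ in $G$. Because the $t$-side of $C_\ell$ and its induced residual arcs are untouched by the contraction, the same vertices and arcs form a residual path in $G^\flow$, giving $(i,j) \in E(H)$. Next I would verify that $G'$ has proper boundaries: in Corner case 1, $B_\rightarrow = \emptyset$ forces $t \notin B$, hence $t$ is not a head of any edge of $C_\ell$; this prevents a new $(s,t)$-arc from appearing in $G'$ (any such arc would have to come from an $s$-side-to-$t$ edge in $G$, which, being in $C_\ell$, would be a flow edge terminating at $t$, contradicting $t \notin V(C_\ell)$), ensures each $P_i'$ retains at least one internal vertex (so $H'$ is a reachability pattern), and, together with $\delta_{G'}^+(s)$ being the image of the mincut $C_\ell$ and $\delta_{G'}^-(t) = \delta_G^-(t)$, shows that both boundary cuts are $st$-mincuts of size $\lambda$.

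For the strict inclusion, suppose for contradiction that $H = H'$. Lemma~\ref{lem:proper-pattern} applied to $(G',\flow_{G'})$ forces the $H'$-sequence in $G'$ to have length at least $2$, so the cut $H'$-subsequent to $\delta_{G'}^+(s)$ is defined. On the other hand, the one-way structure of $G^\flow$ across $C_\ell$ implies that for every post-$C_\ell$ vertex $v$, the condition ``$\lastreach{v}{j}$ is strictly after $C_\ell$ on $P_j$'' is witnessed in $G^\flow$ iff it is witnessed in $(G')^{\flow_{G'}}$, and $t \in \resreach{v}$ in $G^\flow$ iff $t \in \resreach{v}$ in $(G')^{\flow_{G'}}$. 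Under the assumption $H = H'$ this shows $\leader{H}{C_\ell}{i}$ in $G$ coincides with $\leader{H'}{\delta_{G'}^+(s)}{i}$ in $G'$, and that the $H$-subsequent of $C_\ell$ in $G$ is defined iff the $H'$-subsequent of $\delta_{G'}^+(s)$ in $G'$ is; but the former is undefined (the $H$-sequence in $G$ stops at $C_\ell$), while the latter is defined, a contradiction.

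The main challenge I expect is the careful alignment of the $\lastreach{}{}$ and leader values across the contraction, which boils down to checking in both directions that residual reachability into the ``useful'' targets (post-$C_\ell$ vertices of each $P_j$, and the vertex $t$) is the same in $G^\flow$ and in $(G')^{\flow_{G'}}$. This, however, follows cleanly from the one-way nature of $G^\flow$ across $C_\ell$ established at the outset, so the argument should go through without further surprises.
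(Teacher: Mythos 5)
Your proof is correct, and its core idea coincides with the paper's: assuming $E(H')=E(H)$, show that the mincut $H$-subsequent to $C_\ell$ would have to be defined, contradicting that the $H$-sequence terminates at $C_\ell$. Where you differ is in the implementation of that step. The paper argues directly in $G$: the vertices witnessing each $(i,j)\in E(H')$ lift to $G^\flow$ (this is also the content of $E(H')\subseteq E(H)$), so for each $i$ one gets a non-$t$ vertex on $P_i$ after $C_\ell$ all of whose residual last-reaches land after $C_\ell$, giving $\leader{H}{C_\ell}{i}\neq t$; then $\delta_G^-(t)$ being an $st$-mincut in $G$ immediately yields $t\notin\resreach{\leader{H}{C_\ell}{i}}$, so $C_{\ell+1}$ exists. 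You instead invoke Lemma~\ref{lem:proper-pattern} on $(G',\flow_{G'})$ to get that the $H'$-subsequent of $\delta_{G'}^+(s)$ exists, and then transport this back to $G$ via a full two-way alignment of residual reachability, leaders, and $t$-reachability across the contraction. Your alignment is sound (it does follow from the one-way nature of $G^\flow$ across $C_\ell$), but it is extra machinery: the paper only needs the one-directional lifting already used for $E(H')\subseteq E(H)$, and gets $t\notin\resreach{\cdot}$ for free from the proper boundaries of $G$ itself rather than of $G'$. Your explicit verification that $G'$ has proper boundaries (via $t\notin B$ in Corner case~1) is a correct and welcome detail that the paper asserts tersely. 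One small imprecision: the parenthetical ``(the flow edges of $C_\ell$ are reversed, and no other crossing arcs exist)'' reads as if there are no $t$-side-to-$s$-side arcs in $G$ — that need not hold; the right justification is that every $s$-side-to-$t$-side arc of $G$ lies in $C_\ell$ and hence on $\flow$, so after reversal all arcs crossing $C_\ell$ in $G^\flow$ point $t$-side-to-$s$-side, which is what you actually use.
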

\begin{proof}
Clearly $E(H') \subseteq E(H)$, as any path in the residual graph of $G'$ and $\flow_{G'}$ witnessing $(i,j) \in E(H')$ is also a path in $G^\flow$ and thus witnesses $(i,j) \in E(H)$.

We claim that if $E(H') = E(H)$, then $C_{\ell+1}$ would have been defined. Indeed, from the definition of $H'$ for every $(i,j) \in E(H')$ there exist
vertices $v_{i,j} \in V(P_i) \setminus \{t\}$ and $u_{i,j} \in V(P_j) \setminus \{t\}$ that are after $C_\ell$ on their $P_i$ and $P_j$ respectively
such that $u_{i,j}$ is reachable from $v_{i,j}$ in the residual graph of $G'$ and $\flow_{G'}$ (and hence also in $G^\flow$). 
If $E(H') = E(H)$, this witnesses that $\leader{H}{C_\ell}{i} \neq t$ for every $i \in [\lambda]$. 
Since $\delta_G^-(t)$ is an $st$-mincut in $G$ 
and thus $t$ is not reachable from any vertex on $V(P_i)\setminus \{t\}$ in $G^\flow$ for every $i\in [\lambda]$,
we have $t \notin \resreach{\leader{H}{C_\ell}{i}}$ for every $i \in [\lambda]$.  
Therefore, $C_{\ell+1}$ would have been defined, a contradiction. 
\end{proof}
We recurse on $G'$, $\flow_{G'}$, $k$, and $\cutlb$ obtaining a pair $(A',\witnessflow')$.
We set 
$A_0 = \{(s,v)~|~v \in B\}$
and return $(A := A_0 \cup A', \witnessflow)$, where  every flow path of $\witnessflow$ is constructed from a flow path of $\witnessflow'$ by potentially prepending it with an appropriate edge of $A_0$. 
Clearly, if $(A',\witnessflow')$ is compatible with $Z'$ in $G'$, then $(A,\witnessflow)$ is compatible with $Z$ in $G$.

\paragraph{Corner case 2: for some $i \in [\lambda]$, there is a vertex of $B_\rightarrow$ before a vertex of $B_\leftarrow$ on $P_i$.}
That is, for some $i \in [\lambda]$ the guessed first vertex $u_i \in B_\rightarrow \cap V(P_i)$ is before the guessed last vertex $v_i \in B_\leftarrow \cap V(P_i)$.
Define $A_0 = \{(s,v_i),(u_i,t)\}$ and observe that $A_0$ is compatible with $Z$ and in $G+A_0$ there is an augmenting path for $\flow$: 
start with the edge $(s,v_i) \in A_0$, go along reversed $P_i$ to $u_i$, and finish with the edge $(u_i, t) \in A_0$. 
Hence, we can just recurse on $G' := G+A_0$, $k$, and $\cutlb$, obtaining a pair $(A',\witnessflow')$ and return $(A' \cup A_0, \witnessflow')$. 

\medskip

Note that if Corner case 2 does not happen, from the guesses we can deduce whole sets $B_\leftarrow$ and $B_\rightarrow$: for every $i \in [\lambda]$, all vertices of $B \cap V(P_i)$
up to the (guessed) last vertex of $B_\leftarrow$ are in $B_\leftarrow$ and the remainder of $B \cap V(P_i)$ is in $B_\rightarrow$. 
Hence, we can define 
\[A_0 = \{(s,v)~|~v \in B_\leftarrow\} \cup \{(u,t)~|~u \in B_\rightarrow\}.\]
Note that $A_0$ is compatible with $Z$ (if the guesses are correct). 

In $G+A_0$, we define an $st$-flow $\flow' = \{P_i'~|~[\lambda]\}$ as follows: for every $i \in [\lambda]$, the flow path $P_i'$ consists of the arc $(s,v_i) \in A_0$, a subpath of $P_i$ from $v_i$ to $u_i$, 
and the arc $(u_i,t) \in A_0$. 

\paragraph{Corner case 3: $\lambda_{G+A_0}(s,t) > \lambda_{G}(s,t)$.}
In this case we can just recurse on $G' := G+A_0$, $\flow'$, $k$, and $\cutlb$, obtaining a pair $(A',\witnessflow')$, and return $(A' \cup A_0, \witnessflow')$. 

If we are not in this corner case, we have that $\flow'$ is an $st$-maxflow in $G+A_0$. 

\paragraph{Corner case 4: there is $i \in [\lambda]$ such that $P_i'$ contains exactly one edge $e$ of capacity $1$.}
Then clearly $e \in \corecut{Z}$. 
We can recurse on the graph $G' := G+A_0-\{e\}$, the flow $\flow' \setminus \{P_i'\}$, the parameter $k-1$ instead of $k$, and the parameter $\cutlb-1$ instead of $\cutlb$, obtaining a pair $(A',\witnessflow')$,
and return $(A := A' \cup A_0, \witnessflow' \cup \{P_i'\})$. 

Two remarks are in place. First, this corner case covers the case where $Z \cap \bigcup_{a=1}^\ell C_a \neq \emptyset$. 
Second, if we are not in this corner case, then in particular the whole $C_1$ is in the $s$-side of $Z$ as $C_1 = \delta^+(s)$. 

\paragraph{Main case}
We are left with the main case, where none of the aforementioned corner cases occur.

Since on every $P_i$ the arcs of $C_1,C_2,\ldots,C_\ell$ appear in this order, by the excluded Corner case 1 at least one head of an arc of $C_\ell$ is in $B_\rightarrow$. 
By the excluded Corner case 4, the tail of the said arc of $C_\ell$ is also in $B_\rightarrow$.

Let $a \in [\ell]$ be maximum such that all endpoints of $C_a$ are in $B_\leftarrow$. This is well-defined as $C_1 = \delta_G^+(s)$ and
all endpoints of $C_1$ are in $B_\leftarrow$ as we are not in the Corner case 4. 
Also, by the assumption of this main case, $a < \ell$, that is, $C_{a+1}$ is defined.

Let $C$ be the $st$-mincut in $G+A_0$ closest to $t$. Since $A_0$ features arcs from $s$ to all endpoints of $C_a$, $C$ lies entirely in the $t$-side of $C_a$. 
By the definition of $a$, there is an arc of $C_{a+1}$ whose endpoints lie in $B_\rightarrow$ and thus have arcs to $t$ in $A_0$. Hence, $C$ contains
at least one arc whose endpoints are on the $t$-side of $C_a$ and on the $s$-side of $C_{a+1}$. 
Furthermore, clearly $C$ is an $st$-mincut in $G$ as well (but of course not necessarily closest to $t$) because Corner case 3 is excluded.  

We randomly guess whether there is a vertex $v \in V(C)$ that is in the $s$-side of $Z$. 
With probability $0.5$ we guess that there is such a vertex.
Then, with probability $0.5$ we guess if there is a head of an edge of $C$ that is in the
$s$-side of $Z$.
If this is the case, we guess such a head $v \in V(C)$ (at most $\lambda$ options) and otherwise
we guess a tail $v \in V(C)$ that is in the $s$-side of $Z$ (again, at most $\lambda$ options). 
In the end, we obtain either an edge $(v,u) \in C$ with $v$ on the $s$-side of $Z$ and $u$ on the $t$-side of $Z$ (thus $(v,u) \in Z$)
  or a head $v$ of an edge of $C$ that is in the $s$-side of $Z$. The probability of a correct guess is at least $(4\lambda)^{-1}$. 

In the first case, we recurse on $G' := G-\{(v,u)\}$ with flow $\flow\setminus \{P_i\}$ where $P_i$ is the flow path containing $(v,u)$ and the parameter $k-1$, obtaining a pair $(A',\witnessflow')$.
It is straightforward to check that then we can return $(A := A' \cup \{(s,v),(u,t)\}, \witnessflow := \witnessflow' \cup \{((s,v), (v,u), (u,t)) \})$.

In the second case, we recurse on $G' := G + (A_0 \cup \{(s,v)\})$, $k$, and $\cutlb$, obtaining a pair $(A',\witnessflow')$. 
Note that $A_0 \cup \{(s,v)\}$ is compatible with $Z$, if the guess is correct.
Hence, we can return $(A_0 \cup \{(s,v)\} \cup A', \witnessflow')$. 
Furthermore, as $C$ is the closest to $t$ mincut of $G+A_0$, we have $\lambda_{G'}(s,t) > \lambda$.

Thus we are left with the most interesting case,  guessed with probability $0.5$, where all vertices of $V(C)$ are in the $t$-side of $Z$. 
The crucial observation now is the following.
\begin{claim}\label{cl:late-leader}
There exists $i \in [\lambda]$ such that $\leader{H}{C_a}{i}$ lies in the $t$-side of $C$. 
\end{claim}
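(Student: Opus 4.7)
The plan is to argue by contradiction: suppose $\leader{H}{C_a}{i}$ lies on the $s$-side of $C$ for every $i \in [\lambda]$, and exhibit a mincut of $G+A_0$ whose $s$-side strictly contains that of $C$, contradicting that $C$ is the mincut of $G+A_0$ closest to $t$.

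First I would localize the witness. Because $C$ is an $st$-mincut of value $\lambda$ in $G+A_0$ and every arc of $A_0$ has capacity $+\infty$, every arc of $C$ is an arc of $G$ carrying one unit of $\flow'$. Hence the distinguished arc $(x,y) \in C$ (with $x$ on the $t$-side of $C_a$ and $y$ on the $s$-side of $C_{a+1}$) must lie on $P_i[v_i, u_i]$ for a unique $i \in [\lambda]$, and the claim reduces to showing that $\leader{H}{C_a}{i}$ lies on the $t$-side of $C$.

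The main tool will be submodularity of $\delta^+_G$. Let $S_1$ be the $s$-side of $C$ in $G+A_0$ (viewed as a subset of $V(G)$) and let $S_2 = X = \bigcup_{j} \resreach{\leader{H}{C_a}{j}}$, which is the $s$-side of $C_{a+1}$ in $G$. Because $C$ is also a mincut of $G$ (Corner case~3 being excluded) and $C_{a+1}$ is a mincut of $G$, both $\delta^+_G(S_1)$ and $\delta^+_G(S_2)$ have size $\lambda$. Since $s \in S_1 \cap S_2$ and $t \notin S_1 \cup S_2$, submodularity forces $\delta^+_G(S_1 \cup S_2)$ to also be an $st$-mincut of $G$ of value $\lambda$, and the vertex $y \in S_2 \setminus S_1$ (being the head of the cut arc $(x,y)$) certifies $S_1 \cup S_2 \supsetneq S_1$.

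The remaining, and most delicate, step is to lift $\delta^+_G(S_1 \cup S_2)$ to a mincut of $G+A_0$. Arcs $(s,v) \in A_0$ with $v \in B_\leftarrow$ never cross it, since the infinite-capacity arcs $(s,v)$ force $B_\leftarrow \subseteq S_1$. The hard part will be the $(u,t)$-arcs of $A_0$ with $u \in B_\rightarrow \cap X$, which can occur when some tail of $C_{a+1}$ lies in $B_\rightarrow$. Each such problematic $u^\star$ must coincide with $u_k$ for some $k$ (exploiting the exclusion of Corner cases 2 and 4), and the only $G$-arc leaving $X$ from $u^\star$ is the flow arc of $C_{a+1}$ on $P_k$; I would use this rigid local structure to carefully modify $S_1 \cup S_2$ into a set $S'$ with $S_1 \subsetneq S'$, $y \in S'$, and $|\delta^+_{G+A_0}(S')| \leq \lambda$. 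This yields the required mincut of $G+A_0$ with $s$-side strictly containing $S_1$, the desired contradiction.
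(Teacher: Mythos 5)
Your proposal takes a genuinely different route from the paper, but it has a gap at exactly the step you flag as delicate, and it also misses a much shorter path to the contradiction you are after.

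The paper argues directly: by the definition of $a$ and the exclusion of Corner case~4, there is an arc $e \in C_{a+1}$ with both endpoints in $B_\rightarrow$; its tail $v$ satisfies $(v,t)\in A_0$ and hence $v$ lies on the $t$-side of $C$ (in $G+A_0$). By the definition of $C_{a+1}$ there is an $i$ with $v \in \resreach{\leader{H}{C_a}{i}}$, and since $C$ is also a $G$-mincut (Corner case~3 excluded), residual reachability in $G^\flow$ cannot cross from the $s$-side to the $t$-side of $C$; hence $\leader{H}{C_a}{i}$ is on the $t$-side. No submodularity is needed.

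The gap in your argument is the ``carefully modify $S_1\cup S_2$ into $S'$'' step. Under your contradiction hypothesis, a problematic $u^\star \in B_\rightarrow \cap X$ with $(u^\star,t)\in A_0$ lies in $S_1\cup S_2$, so $\delta^+_{G+A_0}(S_1\cup S_2)$ has infinite capacity. To repair this you propose to delete $u^\star$ (and similar vertices). But deleting $u^\star$ turns every arc of $G$ from $(S_1\cup S_2)\setminus\{u^\star\}$ into $u^\star$ into a new boundary arc, and the observation you offer (``the only $G$-arc leaving $X$ from $u^\star$ is the flow arc of $C_{a+1}$'') concerns arcs \emph{leaving} $u^\star$, not arcs \emph{entering} it, so it does not control the boundary of $S'$. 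You give no argument that $|\delta^+_{G+A_0}(S')|\leq\lambda$ after this surgery, and that is precisely where the content of the claim lives; without it the contradiction is not obtained.

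Moreover, even granting the plan, the whole submodularity machinery is superfluous once you have observed that $y \in S_2 \setminus S_1$. Under your hypothesis every $\leader{H}{C_a}{i}$ is on the $s$-side $S_1$ of $C$; since $C$ is a $G$-mincut, no residual arc exits $S_1$, hence $\resreach{\leader{H}{C_a}{i}} \subseteq S_1$ for every $i$, so $S_2 = X = \bigcup_i \resreach{\leader{H}{C_a}{i}} \subseteq S_1$, contradicting $y \in S_2 \setminus S_1$ immediately. This is exactly the paper's key step read contrapositively; your detour through $\delta^+_G(S_1\cup S_2)$ and its lifting to $G+A_0$ adds the delicacy rather than resolving it.
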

\begin{proof}
By the definition of $a$, there exists $e \in C_{a+1}$ with an endpoint in $B_\rightarrow$. 
By the excluded Corner case 4, both endpoints of $e$ are in $B_\rightarrow$; let $v$ be the tail of $e$. Clearly $v$ is on the $t$-side of $C$ because $(v,t)\in A_0$.

By the definition of $C_{a+1}$, there exists $i \in [\lambda]$ such that $v \in \resreach{\leader{H}{C_a}{i}}$. 
Again by the definition of $C_{a+1}$, the whole $\resreach{\leader{H}{C_a}{i}}$ is on the $s$-side of $C_{a+1}$, in particular
$\leader{H}{C_a}{i}$ is. 
Since $C$ is an $st$-mincut in $G$ as well, while $v$ is reachable from $\leader{H}{C_a}{i}$ in $G^\flow$ and $v$ is on the $t$-side of $C$,
$\leader{H}{C_a}{i}$ is also on the $t$-side of $C$. 
This finishes the proof.
\end{proof}

Let $G'$ be constructed from $G$ by contracting first the $s$-side of $C_a$ onto $s$ and then the $t$-side of $C$ onto $t$. 
Note that both $C_a$ and $C$ are $st$-mincuts in $G$ and on every path $P_i$ the arc of $C_a$ appears strictly before the arc of $C$. 
Hence, $G'$ has proper boundaries and the paths of $P_i$ naturally shorten to a maximum $st$-flow $\flow_{G'}$ in $G'$. 
We recurse on $G'$, $\flow_{G'}$, and $k$, obtaining a pair $(A',\witnessflow')$.
We return $A = A_0 \cup A' \cup \{(v,t)~|~v \in V(C)\}$ and a flow $\witnessflow$ constructed from flow paths $\witnessflow'$ by potentially prepending them
with an edge $(s,v) \in A_0$ for $v$ being a head of an edge of $C_a$ or adding at the end an edge $(v,t) \in A$ for $v$ being a tail of an edge of $C$. 

Consider a star $st$-cut $Z$ in $G$. 
Assume we guessed correctly that all vertices of $B_\leftarrow$ are in the $s$-side of $Z$ and vertices of $B_\rightarrow$ and all endpoints of $C$ are in the $t$-side of $Z$.
Then, $\corecut{Z}$ is a minimal $st$-cut in $G'$ and $Z' := Z \cap E(G')$ is a star $st$-cut in $G'$.
Furthemore, if $(A',\witnessflow')$ is compatible with $Z'$ in $G'$, then $(A,\witnessflow)$ is compatible with $Z$ in $G$, as desired.

Let $H'$ be the reachability pattern of $G'$ and $\flow_{G'}$. To bound the success probability, the following observation is crucial.
\begin{claim}
$E(H') \subsetneq E(H)$.
\end{claim}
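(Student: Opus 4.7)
The plan is to prove the claim in two parts: first establish the containment $E(H') \subseteq E(H)$, then the strict inclusion.

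For the containment, I would take a witness $\pi'$ in $(G')^{\flow_{G'}}$ from $v \in V(P_i') \setminus \{s,t\}$ to $u \in V(P_j') \setminus \{s,t\}$ for $(i,j) \in E(H')$, un-contract it to a walk in $G$, and argue that it can be massaged into a $G^\flow$-walk between internal vertices of $P_i$ and $P_j$. The key observation is that $v$ and $u$ are internal vertices of $P_i'$ and $P_j'$ in $G'$, hence correspond to vertices of $P_i$ and $P_j$ lying strictly between the $C_a$-arc and the $C$-arc on their respective paths, and that the residual arcs within this ``middle'' region are identical in $(G')^{\flow_{G'}}$ and in $G^\flow$. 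The only subtlety is when $\pi'$ passes through $s$ or $t$ of $G'$, corresponding to visiting the $s$-side of $C_a$ or the $t$-side of $C$ in $G$; such excursions must be replaced by $G^\flow$-detours along the reversed flow paths $P_i$ and $P_j$ themselves.

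For the strict inclusion, the plan is to invoke Claim~\ref{cl:late-leader}: pick $i^* \in [\lambda]$ with $\leader{H}{C_a}{i^*}$ lying in the $t$-side of $C$. Let $v^\dagger$ be the tail of the $C$-arc on $P_{i^*}$, which is the last vertex of $P_{i^*}$ strictly between $C_a$ and $C$. Since $\leader{H}{C_a}{i^*}$ lies strictly past $v^\dagger$ on $P_{i^*}$, the minimality of the leader produces some $j^*$ with $(i^*, j^*) \in E(H)$ and $\lastreach{v^\dagger}{j^*}$ not after $C_a$ on $P_{j^*}$. Monotonicity of $\lastreach{\cdot}{j^*}$ along $P_{i^*}$ propagates this bound to every vertex $v$ of $P_{i^*}$ strictly between $C_a$ and $C$, which in $G^\flow$ yields that $\resreach{v}$ meets $V(P_{j^*})$ only in the $s$-side of $C_a$. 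I then claim $(i^*, j^*) \notin E(H')$: a hypothetical witness would, by the lifting argument of Part~1, exhibit some vertex of $P_{j^*}$ strictly between $C_a$ and $C$ as $G^\flow$-reachable from some vertex of $P_{i^*}$ strictly between $C_a$ and $C$, contradicting the previous sentence.

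The hard part will be the lifting argument itself, because a $(G')^{\flow_{G'}}$-path may detour through $s$ or $t$ of $G'$ via edges absent from $G^\flow$ --- in particular, direct $s \to t$ arcs of $G'$ arising from elements of $C_a \cap C$ in $G$. To handle this cleanly I would exploit the fact that $\lambda_{G+A_0}(s,t) = \lambda_G(s,t)$ (since Corner case~3 is excluded), together with the specific structure of $A_0$, to argue that any potential shortcut of this form either does not exist or is dominated by an alternative $G^\flow$-route using the internal vertices of $P_i$, $P_j$ inside the $s$-side of $C_a$ or the $t$-side of $C$. This is where I expect the bulk of the technical effort to go.
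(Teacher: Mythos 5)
Your strict-inclusion argument (Part~2) matches the paper's: invoke Claim~\ref{cl:late-leader} to get an index $i$ whose leader lies past $C$, use minimality of the leader to extract an arc $(i,j)\in E(H)$ along which $\lastreach{\cdot}{j}$ stays before $C_a$, propagate that bound by monotonicity to every internal vertex of $P_i'$, and conclude $(i,j)\notin E(H')$; the paper anchors the monotonicity at the immediate predecessor of the leader while you anchor it at the tail of the $C$-arc, but these are interchangeable.

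Where you go astray is in estimating the difficulty of the containment $E(H')\subseteq E(H)$, which you flag as ``the hard part'' requiring detour replacement and a careful look at potential $(s,t)$ arcs. There is nothing to do: $G'$ is explicitly observed by the paper to have proper boundaries, so $\delta_{G'}^+(s)$ and $\delta_{G'}^-(t)$ are $st$-mincuts consisting entirely of flow arcs of $\flow_{G'}$; in the residual graph $(G')^{\flow_{G'}}$ these arcs are deleted, making $s$ a sink and $t$ a source. A residual path between internal vertices of $P_i'$ and $P_j'$ therefore cannot touch $s$ or $t$ at all, stays strictly between $C_a$ and $C$, and is verbatim a path in $G^\flow$ (the middle region is preserved under the contraction). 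The $(s,t)$-arc worry is also vacuous: the $A_0$ arcs force every head of $C_a$ into the $s$-side of $C$, so $C_a\cap C=\emptyset$ and no arc of $G$ has its tail in the $s$-side of $C_a$ and head in the $t$-side of $C$. The paper dispenses with the whole containment in a single sentence for exactly these reasons; the bulk of the effort is, and should be, in the strict inclusion, which you handle correctly.
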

\begin{proof}
Clearly, any path witnessing $(i,j) \in E(H')$ in the residual network of $G'$ and $\flow_{G'}$ is also a path in $G^\flow$ so it witnesses $(i,j) \in E(H')$. 
We need to exhibit an element of $E(H) \setminus E(H')$. 

Let $i \in [\lambda]$ be the index asserted by Claim~\ref{cl:late-leader}. 
Since $v := \leader{H}{C_a}{i}$ lies in the $t$-side of $C$, while the head of the edge of $C_a \cap E(P_i)$ lies in the $s$-side of $C$, 
the predecessor $v'$ of $v$ on $P_i$ lies in the $t$-side of $C_a$. There is a reason why $v'$ is before $\leader{H}{C_a}{i}$ on $P_i$:
there exists $(i,j) \in E(H)$ such that $\lastreach{v'}{j}$ is before $C_a$ on $P_j$.
Since $v$ is in the $t$-side of $C$, we have that for every $u \in V(P_i) \cap V(G') \setminus \{s,t\}$ it holds that $\lastreach{u}{j}$ is before $C_a$ on $P_j$.
Consequently, $(i,j) \notin E(H')$, as desired. 
\end{proof}
Hence, in the recursive call $|E(H')|$ decreases, which is the progress giving the desired bound on the success probability.
(For formal proof of the success probability lower bound, see Section~\ref{ss:prob-analysis};
 for the corresponding analysis of the number of subcases if the guesswork is replaced
 with branching, see Section~\ref{ss:det-analysis}.)

\subsection{Large $\ell$ case}
In the remaining case we have $|E(H)| > |V(H)|$ and $\ell > \ellthreshold$.
By Lemma~\ref{lem:trans-pattern}, $H$ is transitive. 
Also, $|E(H)| > |V(H)|$ implies that $\lambda \geq 2$.

In this section the narrative is led by the randomized case of Theorem~\ref{thm:dir-flow-augmentation}, but due to color coding steps, derandomization is nontrivial in a few places and 
mandates some discussion.

In the argumentation, we will frequently use the predicate of being in the $s$-side or $t$-side of a cut $C_a$.
For brevity, we extend this notion to all indices $a \in \mathbb{Z}$, not only $a \in [\ell]$. 
That is, every vertex and edge of $G$ is in the $t$-side of $C_a$ for $a \leq 0$ and in the $s$-side of $C_a$ for $a > \ell$
while neither a vertex nor an edge of $G$ is in the $s$-side of $C_a$ for $a \leq 0$ and in the $t$-side of $C_a$ for $a > \ell$.

Fix a star $st$-cut $Z$ of size at most $k$ such that $|\corecut{Z}| \geq \cutlb$.
Let $Z_{ts}$ be the set of those edges $(v,u) \in \bigcup_{i=1}^\lambda E(P_i)$ such that
$v$ is on the $t$-side of $Z$ while $u$ is on the $s$-side of $Z$. 
Note that for every $i \in [\lambda]$ we have $|Z_{ts} \cap E(P_i)| \leq |Z \cap E(P_i)|-1$ and hence $|Z_{ts}| \leq k-\lambda$.

We say that an index $a \in \mathbb{Z}$ is \emph{touched} if there exists $e \in Z \cup Z_{ts}$ with an endpoint 
that is in the $t$-side of $C_a$ and in the $s$-side of $C_{a+1}$.
Note that a touched index $a$ satisfies $0 \leq a \leq \ell$.

We have the following observation.
\begin{claim}\label{cl:bound-touched}
There are at most $2|Z \cup Z_{ts}| \leq 4k-2\lambda$ touched indices.
\end{claim}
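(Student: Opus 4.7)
The plan is a straightforward double-counting argument, combined with the standard observation that cuts in the $H$-sequence partition $V(G)$ into ``slots.''

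First I would assign to every vertex a unique slot index. Because the $s$-sides of $C_1,\ldots,C_\ell$ are nested (as stated earlier in the excerpt, indeed strictly nested on every flow path), and with the extension that the $s$-side of $C_0$ is empty while the $s$-side of $C_{\ell+1}$ is all of $V(G)$, for every $v \in V(G)$ there is a unique $a_v \in \{0,1,\ldots,\ell\}$ such that $v$ lies in the $t$-side of $C_{a_v}$ and in the $s$-side of $C_{a_v+1}$. With this notation in hand, an index $a$ is \emph{touched} if and only if $a = a_v$ for some endpoint $v$ of some edge $e \in Z \cup Z_{ts}$.

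Next I would count. Each edge of $Z \cup Z_{ts}$ has two endpoints, each contributing one value of $a_v$, so the set of touched indices has size at most $2|Z \cup Z_{ts}| \leq 2(|Z| + |Z_{ts}|)$. It remains to bound $|Z_{ts}|$. Fix a flow path $P_i$: walking from $s$ to $t$ along $P_i$, we start in the $s$-side of $Z$ and end in the $t$-side, and every $s$-side-to-$t$-side transition traverses an edge of $Z \cap E(P_i)$ while every $t$-side-to-$s$-side transition traverses an edge of $Z_{ts} \cap E(P_i)$. Since these transitions must alternate, start with a $Z$-edge and end with a $Z$-edge, we obtain $|Z_{ts} \cap E(P_i)| = |Z \cap E(P_i)| - 1$. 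Summing over $i \in [\lambda]$ gives $|Z_{ts}| \leq |Z| - \lambda \leq k - \lambda$, and therefore the number of touched indices is at most $2(k + (k-\lambda)) = 4k - 2\lambda$, as desired.

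There is no real obstacle; the only thing to be careful about is the boundary convention (using the extended notion of being in the $t$-side of $C_a$ for $a \leq 0$ and in the $s$-side of $C_a$ for $a > \ell$) so that the slot assignment $v \mapsto a_v$ is defined for every vertex, including endpoints of edges whose endpoint lies ``before'' $C_1$ or ``after'' $C_\ell$.
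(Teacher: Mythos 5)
Your proof is correct and takes essentially the same approach as the paper: assign each endpoint $v$ of an edge in $Z\cup Z_{ts}$ its unique ``slot'' index $a_v$ (the maximum $a$ with $v$ in the $t$-side of $C_a$), observe that this map is onto the touched indices, and count. The paper states the bound $|Z_{ts}|\leq k-\lambda$ as a remark just before the claim rather than inside its proof, but the alternating-transitions argument you give for it is exactly the intended one.
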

\begin{proof}
Every endpoint $v$ of an edge $e \in Z \cup Z_{ts}$ gives raise to at most one index $a$ being touched, namely the maximum $a \in \mathbb{Z}$ such that $v$ is in the $t$-side of $C_a$. 
Hence, there are at most $2|Z \cup Z_{ts}| \leq 4k-2\lambda$ touched indices as desired.
\end{proof}

A set $L \subseteq V(H)$ is \emph{downward-closed} (in $H$) if there is no arc $(i,j) \in E(H)$ with $i \in L$ but $j \notin L$. 
For a set $L \subseteq V(H)$, let $\closure{L}$ be the minimal superset of $L$ that is downward-closed; since $H$ is transitive and contains a loop at every vertex, we have
\[\closure{L} = \{i \in [\lambda]~|~\exists_{j \in L} (j,i) \in E(H)\}.\]
For $a \in [\ell]$, let $L_a \subseteq [\lambda]$ be the set of those indices for which both endpoints
of the unique edge of $C_a \cap E(P_i)$ lie in the $s$-side of $Z$. We also define $L_a = [\lambda]$ for $a \leq 0$ and $L_a = \emptyset$ for $a > \ell$.

We have establish a few properties of the objects defined above.
\begin{claim}\label{cl:no-escape}
Let $L \subseteq [\lambda]$ be downward closed in $H$.
Then $G$ admits no path 
\[\mathrm{from\ } \bigcup_{i \in L} V(P_i) \setminus \{s,t\}\mathrm{\ to\ }
\bigcup_{i \in [\lambda] \setminus L} V(P_i) \setminus \{s,t\}.\]
\end{claim}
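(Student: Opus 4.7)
The plan is to suppose that such a path $Q$ exists in $G$ and derive a contradiction by exhibiting an edge of $E(H)$ that violates the downward closedness of $L$. Without loss of generality, $s$ has no in-arcs and $t$ has no out-arcs in $G$---such arcs never appear in any minimal $st$-cut and can be deleted without changing $st$-reachability---so $Q$, whose endpoints lie in the union of the sets $V(P_i) \setminus \{s,t\}$, does not visit $s$ or $t$ as an interior vertex.

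I next establish a small but crucial disjointness sub-claim: the sets $\bigcup_{i \in L} V(P_i) \setminus \{s,t\}$ and $\bigcup_{i \in [\lambda] \setminus L} V(P_i) \setminus \{s,t\}$ are disjoint. Indeed, a common vertex $w$ would lie on some $P_k$ with $k \in L$ and some $P_{k'}$ with $k' \in [\lambda] \setminus L$, and the trivial length-$0$ walk at $w$ (the same convention that equips every vertex of $H$ with a self-loop) would then witness $(k,k') \in E(H)$, contradicting downward closedness.

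Now write $Q = v_0 v_1 \cdots v_r$ and let $I = \{m : v_m \in \bigcup_{k} V(P_k) \setminus \{s,t\}\} = \{m_0 = 0 < m_1 < \cdots < m_q = r\}$. By the disjointness sub-claim each $v_{m_\alpha}$ is unambiguously ``left'' (if the flow paths containing it lie in $L$) or ``right'' (if they lie in $[\lambda] \setminus L$); since $v_{m_0}$ is left and $v_{m_q}$ is right, there must be a consecutive transition, i.e.\ an $\alpha$ with $v_{m_\alpha}$ left and $v_{m_{\alpha+1}}$ right. The sub-path $Q_\alpha = v_{m_\alpha} \to v_{m_\alpha+1} \to \cdots \to v_{m_{\alpha+1}}$ has all its internal vertices (if any) lying off every flow path and distinct from $s,t$, so each edge of $Q_\alpha$ with an internal endpoint is a non-flow edge; in the boundary case where $Q_\alpha$ consists of a single edge, that edge cannot be a forward flow arc, since that would place its two endpoints on a common flow path, contradicting disjointness. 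Therefore $Q_\alpha$ is a path in $G^\flow$ from a vertex of $V(P_{k_\alpha}) \setminus \{s,t\}$ with $k_\alpha \in L$ to a vertex of $V(P_{k_{\alpha+1}}) \setminus \{s,t\}$ with $k_{\alpha+1} \in [\lambda] \setminus L$, witnessing the forbidden edge $(k_\alpha, k_{\alpha+1}) \in E(H)$.

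The main obstacle is the discrepancy between reachability in $G$ and in $G^\flow$: the residual graph removes forward flow arcs, so a generic $G$-path does not directly translate into a $G^\flow$-path. The argument circumvents this by cutting $Q$ at its visits to the interior of the flow (so forward flow arcs cannot appear inside a segment, only possibly as the single edge of a segment) and by using the disjointness sub-claim to rule out the single-edge-crossing case. The $s,t$ normalization at the outset is what guarantees that the ``off every flow path'' condition really does hold for the internal vertices of each segment.
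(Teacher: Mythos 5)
Your proof is essentially sound and captures the same core mechanism as the paper's (convert a $G$-path violating downward-closedness into a $G^\flow$-path, thereby exhibiting a forbidden arc of $H$), but you reach it by a somewhat more elaborate route. Where the paper takes $Q$ to be a \emph{shortest} violating path---which immediately forces all internal vertices off $\bigcup_i V(P_i)\setminus\{s,t\}$ and leaves a single segment to analyze---you keep $Q$ arbitrary, cut it at every visit to a flow-path interior, and locate a left-to-right transition segment $Q_\alpha$. The disjointness sub-claim and the single-edge boundary case that you spell out are implicitly present in the paper's argument as well (e.g.\ the paper also needs that a one-edge $Q$ cannot be a forward flow arc), so neither approach dodges them; yours is simply more explicit. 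Both are correct as routes to the contradiction.

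The one place where the two proofs genuinely diverge is the handling of $s$ and $t$ as potential interior vertices of $Q$. The paper dispatches this with ``Since $G$ has proper boundaries, neither $s$ nor $t$ lies on $Q$,'' but proper boundaries in the paper's sense (namely $\delta^+(s)$ and $\delta^-(t)$ are $st$-mincuts and there is no arc $(s,t)$) do \emph{not} forbid in-arcs of $s$ or out-arcs of $t$, and one can construct properly boundaried instances with large $\ell$ in which the shortest violating path does route through $s$ (e.g.\ two long parallel flow paths with all cross-arcs $b_i \to a_i$ pointing one way, plus a detour $a_n \to x \to s$). So you were right to isolate this and impose the normalization that $s$ has no in-arcs and $t$ has no out-arcs. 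However, your justification for the WLOG---``such arcs never appear in any minimal $st$-cut and can be deleted without changing $st$-reachability''---is not quite a valid move \emph{inside} this claim: the claim asserts the non-existence of a certain path in $G$, and deleting arcs from $G$ only makes that assertion weaker, so the reduced-graph claim does not by itself imply the original one. The normalization is legitimate only as a one-time preprocessing step of the overall flow-augmentation algorithm (which is harmless because such arcs can never lie in a star $st$-cut, do not affect which vertices are on the $s$-side of $Z$, and leave the reachability pattern $H$ unchanged since $s$ is a sink and $t$ is a source in $G^\flow$). With that understanding the WLOG is the right fix; just be aware that it has to be hoisted out of the claim and into the algorithm's setup, and that the paper's own one-line dismissal of the $s,t$ case is the same gap handled less visibly.
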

\begin{proof}
Assume the contrary, let $Q$ be a shortest such path. By minimality, no internal vertex of $Q$ lies in
$\bigcup_{i \in [\lambda]} V(P_i) \setminus \{s,t\}$. 
Since $G$ has proper boundaries, neither $s$ nor $t$ lies on $Q$. Hence, $Q$ is a path in $G^\flow$, contradicting the definition of $H$ and the assumption that $L$ is downward-closed.
\end{proof}

\begin{claim}\label{cl:closure-locked}
Let $a$ be an integer. Then, for every $i \in [\lambda] \setminus \closure{L_a}$, every vertex on $P_i$ in the $t$-side of $C_a$ is also in the $t$-side of $Z$.

In particular, if $b \geq a$ is an integer, then $L_b \subseteq \closure{L_a}$, and if additionally $L_a$ is downward-closed, then $L_b \subseteq L_a$.
\end{claim}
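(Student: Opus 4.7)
My plan is to prove the first sentence of the claim by contradiction, using Claim~\ref{cl:no-escape} applied with the downward-closed set $L := \closure{L_a}$, and then to derive the "in particular" part as a direct consequence of the first. The boundary conventions are dispatched at a glance: if $a \leq 0$ then $\closure{L_a} = [\lambda]$ and the hypothesis $i \notin \closure{L_a}$ is empty, while if $a > \ell$ the $t$-side of $C_a$ is empty. So I focus on $1 \leq a \leq \ell$.

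For the main argument, fix $i \notin \closure{L_a}$ and suppose for contradiction that some vertex $v$ on $P_i$ lies in the $t$-side of $C_a$ but in the $s$-side of $Z$. Since $v$ is reachable from $s$ in $G-Z$, pick a path $Q$ from $s$ to $v$ in $G-Z$. Because $s$ lies in the $s$-side of $C_a$ while $v$ lies in its $t$-side, $Q$ contains at least one edge of $C_a$; let $(u,w)$ be the first such edge on $Q$. This edge is the unique representative of $C_a$ on some flow path $P_j$, and since every vertex of $Q$ lies in the $s$-side of $Z$, both $u$ and $w$ do as well, so by definition $j \in L_a \subseteq \closure{L_a}$. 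The suffix $Q'$ of $Q$ running from $w$ to $v$ is then a path in $G$ whose endpoints lie in $V(P_j) \setminus \{s,t\}$ and $V(P_i) \setminus \{s,t\}$ respectively: the verifications that $w, v \neq s$ (both are in the $t$-side of $C_a$) and that $w, v \neq t$ (both lie in the $s$-side of $Z$, whereas $t$ does not) are immediate. Since $j \in \closure{L_a}$ and $i \notin \closure{L_a}$, the existence of $Q'$ contradicts Claim~\ref{cl:no-escape} applied with $L = \closure{L_a}$.

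For the "in particular" part, take $b \geq a$ and $i \in L_b$, and let $y$ be the head of the edge $C_b \cap E(P_i)$. The definition of $L_b$ places $y$ in the $s$-side of $Z$, and since $b \geq a$ the $C_b$-edge on $P_i$ sits at or after the $C_a$-edge on $P_i$, so in either case $y$ is in the $t$-side of $C_a$. Applying the first sentence of the claim contrapositively forces $i \in \closure{L_a}$, giving $L_b \subseteq \closure{L_a}$; and when $L_a$ is already downward-closed we have $\closure{L_a} = L_a$ by minimality of the closure, so $L_b \subseteq L_a$. The main delicate point is the identification $j \in L_a$ (rather than merely $j$ somewhere in $\closure{L_a}$), because otherwise the invocation of Claim~\ref{cl:no-escape} would be circular; this is precisely why it is crucial to extract the \emph{first} edge of $Q$ crossing $C_a$, so that both of its endpoints are guaranteed to lie on the path $Q \subseteq G-Z$ and hence in the $s$-side of $Z$.
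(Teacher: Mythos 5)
Your proof is correct and takes essentially the same route as the paper: both dispatch the boundary conventions for $a\le 0$ and $a>\ell$ in the same way, both invoke Claim~\ref{cl:no-escape} with $L=\closure{L_a}$ to analyze how a path from $s$ to $v$ crosses $C_a$, and both derive the ``in particular'' part from the first sentence; you simply package the core step as a contradiction (a $Z$-avoiding path would first cross $C_a$ on some $P_j$ with $j\in L_a$, and its suffix to $v$ would violate Claim~\ref{cl:no-escape}), whereas the paper argues directly that any path from $s$ to $v$ must pass through an endpoint of a $C_a$-edge lying in the $t$-side of $Z$. One minor overstatement in your closing remark: taking the \emph{first} $C_a$-edge on $Q$ is not actually essential---any $C_a$-edge on $Q$ has both endpoints on $Q\subseteq G-Z$, hence in the $s$-side of $Z$, which already yields $j\in L_a$.
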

\begin{proof}
The claim is obvious if $a < 1$ (as then $L_a = [\lambda]$) and if $a > \ell$ (as then no vertex is in the $t$-side of $C_a$), so assume otherwise.

For every $i \in [\lambda] \setminus \closure{L_a}$, pick an endpoint $v_i$ of the unique edge of $C_a \cap E(P_i)$ that is in the $t$-side of $Z$
and let $X_i$ be the set of vertices of $V(P_i) \setminus\{t\}$ that lie after $v_i$ on $P_i$. 
By Claim~\ref{cl:no-escape} applied to $L := \closure{L_a}$, for every vertex $v \in\bigcup_{i \in [\lambda] \setminus \closure{L_a}} X_i$,
any path from $s$ to $v$ of $G$ passes though $\{v_i~|~i \in [\lambda] \setminus \closure{L_a}\}$. As $v_i$ is in the $t$-side of $Z$, any such path contains an edge of $Z$.
We infer that such $v$ is in the $t$-side of $Z$.
This proves the main part of the claim.

The second part of the claim is trivial if $b > \ell$ (as then $L_b = \emptyset$), and follows directly from the first part if $1 \leq a \leq b \leq \ell$.
\end{proof}

\begin{claim}\label{cl:closure-reached}
Assume $a<b$ are two integers such that $b-a \geq \lambda-1$ and for every $a \leq c < b$, $c$ is untouched. 
Then, $L_{b} = \closure{L_a}$. 
In particular, $L_{b}$ is downward-closed.
\end{claim}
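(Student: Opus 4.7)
The plan is to prove $L_b = \closure{L_a}$ via two inclusions. The inclusion $L_b \subseteq \closure{L_a}$ follows directly from Claim~\ref{cl:closure-locked} applied to $C_a$: for any $i \notin \closure{L_a}$, vertices of $P_i$ on the $t$-side of $C_a$ lie on the $t$-side of $Z$, and since the head of $C_b \cap E(P_i)$ lies on the $t$-side of $C_b$ (hence on the $t$-side of $C_a$ as $b > a$), we have $i \notin L_b$. For the reverse inclusion, my approach is to introduce a ``side profile'' $\sigma$ that is frozen across the untouched slabs, show that $L_b$ coincides with the level set $\{i : \sigma(i) = s\}$, and then prove this level set is downward-closed in $H$ via the leader construction.

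For the first ingredient, for each untouched $c$ I define $\sigma_c(i) \in \{s,t\}$ as the common side (w.r.t. $Z$) of all vertices of $P_i$ strictly between $C_c$ and $C_{c+1}$. This is well-defined because any sign change along $P_i$ inside the slab would correspond to a flow edge in $Z \cup Z_{ts}$ whose endpoints lie strictly between $C_c$ and $C_{c+1}$, contradicting untouched-$c$. The same flow-edge argument applied to $C_{c+1} \cap E(P_i)$ (whose tail lies in slab $[c,c+1]$, so under untouched-$c$ is not an endpoint of any $Z \cup Z_{ts}$-edge) forces the tail and head of $C_{c+1} \cap E(P_i)$ to lie on the same side of $Z$, which yields $\sigma_c(i) = \sigma_{c+1}(i)$. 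Iterating through $c \in [a, b-1]$ I obtain a common value $\sigma(i) := \sigma_a(i) = \cdots = \sigma_{b-1}(i)$. Applying the flow-edge argument once more to $C_b \cap E(P_i)$ (using untouched-$(b-1)$) shows that both endpoints of $C_b \cap E(P_i)$ lie on side $\sigma(i)$, so $L_b = \{i : \sigma(i) = s\} =: L^*$; the same reasoning also gives $L_a \subseteq L^*$.

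It remains to show $L^*$ is downward-closed, for then $\closure{L_a} \subseteq \closure{L^*} = L^* = L_b$. Let $i \in L^*$ and $(i,j) \in E(H)$; I show $\sigma(j) = s$. Let $v := \leader{H}{C_a}{i}$, which lies on $P_i$ strictly between $C_a$ and $C_{a+1}$ (hence on side $\sigma(i) = s$), and let $u := \lastreach{v}{j}$, which by the leader property lies on $P_j$ strictly between $C_a$ and $C_{a+1}$ (hence on side $\sigma(j)$). The key structural observation is that both $C_a$ and $C_{a+1}$ are \emph{all-flow} $st$-mincuts: each has cardinality $\lambda$, and since each of the $\lambda$ flow paths contributes at least one arc to any $st$-cut, every arc of $C_a$ and $C_{a+1}$ must be a flow arc. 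Consequently, in $G^\flow$ the arcs of $C_a$ and $C_{a+1}$ are deleted, leaving no forward-crossing arcs. Thus any $G^\flow$-path starting on the $t$-side of $C_a$ that strays to the $s$-side of $C_a$ (via reverse $C_a$-flow-arcs) cannot return, and similarly cannot cross $C_{a+1}$ forward. Since $v$ and $u$ both lie in slab $[a,a+1]$, the whole $G^\flow$-path from $v$ to $u$ must stay within this slab. By untouched-$a$, no endpoint of a $Z$ or $Z_{ts}$-edge lies in slab $[a,a+1]$, so no arc used on the path is a forward $s$-to-$t$ crossing of $Z$ in $G^\flow$ (neither a non-flow $Z$-arc nor the reverse of a $Z_{ts}$-arc). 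Therefore $u$ lies on the same side as $v$, namely the $s$-side, so $\sigma(j) = s$ and $j \in L^*$.

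The main obstacle is the ``path confinement'' fact in the last step: proving that $G^\flow$-paths between two vertices of the same slab cannot escape the slab. This relies on correctly identifying the all-flow property of mincuts in the $H$-sequence and carefully tracking which reverse flow arcs are present in $G^\flow$. I remark that my argument actually uses only that $a$ is untouched and that $C_{a+1}$ exists, so the hypothesis $b-a \geq \lambda-1$ seems stronger than strictly necessary; it is presumably convenient for an alternative inductive presentation in which each untouched slab is shown to add one level of $H$-closure to $L_c$, so that after $\lambda-1$ steps the closure is fully reached.
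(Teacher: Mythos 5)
Your proof is correct in substance but takes a genuinely different route from the paper. The paper proves the claim by an incremental argument: it shows that for each untouched index $c$, either $L_c$ is downward-closed (and then $L_{c+1}=L_c$ by Claim~\ref{cl:closure-locked}) or $L_{c+1}\supsetneq L_c$; combined with $L_a\subseteq L_{a+1}\subseteq\cdots\subseteq L_b\subseteq\closure{L_a}$, the budget of $b-a\geq\lambda-1$ untouched steps forces the chain to stabilize at $\closure{L_a}$. You instead compute the closure in one shot: you define a side profile $\sigma$ that is frozen across all the untouched slabs, identify $L_b$ (and a superset of $L_a$) with the level set $L^*=\{i:\sigma(i)=s\}$, and prove $L^*$ is downward-closed directly using the leader vertices $\leader{H}{C_a}{i}$ and $\lastreach{\cdot}{j}$ together with the path-confinement observation that the residual arcs of the all-flow mincuts $C_a,C_{a+1}$ only cross backwards. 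In particular, you replace the paper's device of choosing a minimum-length witness path $Q$ (to ensure $Q$ is also a $G$-path and hence misses $Z$) with the cleaner observation that in $G^\flow$ a forward $Z$-crossing corresponds either to an arc of $Z$ or to the reverse of an arc of $Z_{ts}$, both of which are excluded by untouched-$a$. The two approaches buy different things: the paper's is closer to the way the $H$-sequence is built and needs only the consequence ``one new index per untouched slab''; yours is more structural, more illuminating about why the closure appears, and — as you correctly point out — actually establishes the stronger statement with $b-a\geq 1$ in place of $b-a\geq\lambda-1$ (though the stronger form is not needed elsewhere).

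Two small points to tighten. First, your parenthetical remark overstates the economy slightly: the downward-closedness of $L^*$ indeed uses only that $a$ is untouched and $C_{a+1}$ exists, but the identification $L_b=L^*$ and the inclusion $L_a\subseteq L^*$ still require all of $a,\ldots,b-1$ untouched for the $\sigma$-constancy, so the overall hypothesis is ``$a<b$ and every $c\in[a,b)$ untouched.'' Second, the claim is stated for arbitrary integers $a<b$, and your write-up does not cover the boundary cases where $C_a$, $C_{a+1}$, or $C_b$ do not exist (i.e., $a\leq 0$ or $b>\ell$). The paper dispatches these in one line ($L_a=[\lambda]$ propagates up to $L_b$ since nothing in between is cut; $L_b=\emptyset$ propagates down). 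Your argument can absorb them — the slab $[0,1]$ is just $\{s\}$ so $\sigma_0\equiv s$, and the slab $[\ell,\ell+1]$ contains $t$ so $\sigma_\ell\equiv t$ — but these cases should be stated, since otherwise the leader $\leader{H}{C_a}{i}$ you invoke is undefined when $a<1$ and the cut $C_b$ you apply the flow-edge argument to is undefined when $b>\ell$.
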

\begin{proof}
Clearly, $L_b \subseteq \closure{L_a}$ by Claim~\ref{cl:closure-locked}.
Also, since every $c$ with $a \leq c < b$ is untouched, for every $i \in L_a$ no edge of $P_i$ between the edge of $C_a$ and the edge of $C_b$ is in $Z$, 
so $L_a \subseteq L_b$. 

If $a < 1$, then $L_a = [\lambda]$ and we have $L_b = [\lambda]$ as well. 
If $b > \ell$, then $L_b = \emptyset$ so from $L_a \subseteq L_b$ is follows that $L_a = \emptyset$ as well.

Hence, we can assume $1 \leq a < b \leq \ell$.
Since $L_a \subseteq L_b \subseteq \closure{L_a}$, it suffices to show that if for some $1 \leq c < \ell$ we
have that $c$ is untouched and $L_c$ is not downward-closed, then $L_{c+1}$ is a strict superset of $L_c$.

Clearly, as $c$ is untouched $L_c \subseteq L_{c+1}$. 
Since $L_c$ is not downward-closed, there is $(i,j) \in E(H)$ with $i \in L_c$ but $j \notin L_c$. 
By the construction of $C_{c+1}$, there exists $v \in V(P_i)$ and $u \in V(P_j)$, both in the $t$-side of $C_c$ and $s$-side of $C_{c+1}$
and a path $Q$ in $G^\flow$ from $v$ to $u$.
Pick $i$, $j$, and $Q$ as above so that the length of $Q$ is minimum possible. 

Since in $G^\flow$ one cannot cross neither $C_c$ nor $C_{c+1}$ from the $s$-side to the $t$-side,
all vertices of $Q$ are in the $t$-side of $C_c$ and $s$-side of $C_{c+1}$.
By the minimality of $Q$, no internal vertex of $Q$ lies on any path $P_\iota$, $\iota \in [\lambda]$. 
Hence, $Q$ is a path in $G$ as well. 
Since $c$ is untouched, no edge of $Q$ lies in $Z$. 
Also, no edge of $P_i$ nor $P_j$ between $C_c$ and $C_{c+1}$ (including $C_{c+1}$) lies in $Z$.
We infer that $j \in L_{c+1}$, as desired. This finishes the proof of the claim.
\end{proof}

\begin{claim}\label{cl:paths-before-downward-closed}
Assume $a \in [\ell]$ is such that $L_a$ is downward-closed. Let $i \in [\lambda]$ and $e$ be the unique edge of $C_a \cap E(P_i)$. 

\begin{itemize}
\item
If $i \notin L_a$, then the following holds. The entire suffix of $P_i$ from the head of $e$ to $t$ is in the $t$-side of $Z$.
In particular, no edge of the said suffix is in $Z$.
\item 
If $i \in L_a$, then the following holds. The entire prefix of $P_i$ from $s$ to the tail of $e$ is in the $s$-side of $\corecut{Z}$.
In particular, no edge of the said prefix is in $\corecut{Z}$. 
\end{itemize}
\end{claim}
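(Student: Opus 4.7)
I would handle the two bullets separately: the first is essentially immediate from Claim~\ref{cl:closure-locked}, while the second needs Claim~\ref{cl:no-escape} combined with the star-cut structure and a small boundary-case trick. For the first bullet, since $L_a$ is downward-closed we have $\closure{L_a} = L_a$, so $i \notin L_a$ means $i \notin \closure{L_a}$. Every vertex of the suffix of $P_i$ from the head of $e$ to $t$ lies on the $t$-side of $C_a$ (as $e = C_a \cap E(P_i)$), so Claim~\ref{cl:closure-locked} directly gives that each such vertex lies on the $t$-side of $Z$. The ``in particular'' assertion then follows because any arc of the suffix has both endpoints on the $t$-side of $Z$ and hence cannot belong to the star cut $Z$, whose arcs run from the $s$-side to the $t$-side.

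For the second bullet I would argue by contradiction. Suppose some arc $f = (v', u')$ of the prefix of $P_i$ from $s$ to the tail of $e$ lies in $\corecut{Z}$. By the definition of $\corecut{Z}$ there is a path $Q$ from $u'$ to $t$ in $G - Z$, and because $Z$ is a star $st$-cut no arc of $G - Z$ crosses from the $s$-side to the $t$-side of $Z$; therefore $Q$ stays entirely on the $t$-side of $Z$. In particular $u'$ itself is on the $t$-side of $Z$, which combined with $i \in L_a$ (so the tail of $e$ is on the $s$-side of $Z$) shows that $u'$ is not the tail of $e$, so $u'$ lies strictly before the tail of $e$ on $P_i$. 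Consequently $u' \in V(P_i) \setminus \{s, t\}$ and $u'$ is on the $s$-side of $C_a$, so $Q$ must cross $C_a$ at some arc $(v_c, u_c) \in C_a$. Writing $(v_c, u_c) = C_a \cap E(P_{j_c})$, the fact that $v_c, u_c \in V(Q)$ both lie on the $t$-side of $Z$ forces $j_c \notin L_a$.

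To finish, I would invoke Claim~\ref{cl:no-escape} with $L := L_a$: it suffices to exhibit an internal vertex of $P_{j_c}$ lying on $Q$, because the subpath of $Q$ from $u'$ to such a vertex is a $G$-path from $V(P_i) \setminus \{s, t\}$ (with $i \in L_a$) to $V(P_{j_c}) \setminus \{s, t\}$ (with $j_c \notin L_a$), which Claim~\ref{cl:no-escape} forbids. If $u_c \neq t$ then $u_c$ itself is such a vertex. The main obstacle I anticipate is the boundary case $u_c = t$; here I would use that proper boundaries make $\delta^-(t)$ an $st$-mincut saturated by $\flow$, so $(v_c, t)$ is the final arc of some flow path $P_{j_c}$ and $v_c \in V(P_{j_c}) \setminus \{s, t\}$ is the penultimate vertex of $Q$. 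Either sub-case yields the desired contradiction, hence no arc of the prefix lies in $\corecut{Z}$; the prefix itself is then a path in $G - \corecut{Z}$ from $s$ to the tail of $e$, placing every prefix vertex on the $s$-side of $\corecut{Z}$.
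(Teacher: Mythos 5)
Your proof is correct and follows essentially the same route as the paper's: the first bullet is read off Claim~\ref{cl:closure-locked}, and for the second bullet you argue by contradiction, produce a path $Q$ to $t$ from the head of a $\corecut{Z}$-arc in the prefix, observe that $Q$ lies entirely in the $t$-side of $Z$ and must cross $C_a$ on some $P_{j_c}$ with $j_c \notin L_a$, and then invoke Claim~\ref{cl:no-escape}. The only cosmetic deviations are that you take $Q$ in $G-Z$ directly from the definition of $\corecut{Z}$ (the paper takes it in $G-\corecut{Z}$ via minimality), and you explicitly separate the boundary sub-case $u_c=t$ (the paper implicitly uses the tail of the $C_a$-edge, which is never $s$ or $t$ once the contradiction hypothesis rules out $a=1$); neither changes the substance of the argument.
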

\begin{proof}
The first claim follows directly from Claim~\ref{cl:closure-locked}, as we assume that $L_a$ is downward-closed.

For the second claim, we proceed by contradiction. Assume there is a vertex on $P_i$ that is in the $t$-side of $\corecut{Z}$ and $s$-side of $C_a$.
Let $v$ be the earliest (on $P_i$) such vertex.
Since $v$ is in the $t$-side of $\corecut{Z}$, $v \neq s$. 
By the choice of $v$, the predecessor of $v$ on $P_i$ is in the $s$-side of $Z$, that is, 
the edge on $P_i$ with the head $v$ is in $\corecut{Z}$.
By the minimality of $\corecut{Z}$, there is a path $Q$ from $v$ to $t$ in $G-\corecut{Z}$. 

Since $v$ is in the $s$-side of $C_a$, there is an edge of $C_a$ on $Q$. Assume $j \in [\lambda]$ is such that 
the edge of $E(P_j) \cap C_a$ lies on $Q$. 
Since $Q$ is a path in $G-\corecut{Z}$ ending in $t$, it is entirely in the $t$-side of $\corecut{Z}$. 
By the definition of $L_a$, we have $j \notin L_a$. 
However, as $v \in V(P_i)$ and $i \in L_a$, the subpath of $Q$ from $v$ to the edge of $E(P_j) \cap C_a$ contradicts Claim~\ref{cl:no-escape}.
\end{proof}

\begin{claim}\label{cl:support-path-milestone}
Let $a \in [\ell]$ be such that for every $a - (\lambda-1) \leq c < a$ the index $c$ is untouched. 
Then, for every $i \in L_a$ there exists in $G-Z$ a path from $s$ to the tail of the unique edge of $C_a \cap E(P_i)$ whose all vertices lie in the $s$-side of $C_a$.
\end{claim}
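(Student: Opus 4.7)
First, I will verify that $L_a$ is downward-closed by invoking Claim~\ref{cl:closure-reached} with $a' := a - (\lambda - 1)$ and $b := a$: the hypothesis that every index in $[a',b-1]$ is untouched and $b - a' = \lambda - 1$ gives $L_a = \closure{L_{a'}}$, which is downward-closed by definition. In the boundary case $a - (\lambda - 1) \leq 0$, the extension convention gives $L_{a'} = [\lambda]$, and the claim is in fact immediate for a stronger reason: the entire prefix of $P_i$ has positions in the untouched range $[0,a-1]$, so every endpoint on it is untouched, every edge along it lies outside $Z$, and the prefix itself is the required path.

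For the remaining case $a - (\lambda-1) \geq 1$, with $L_a$ downward-closed in hand, Claim~\ref{cl:paths-before-downward-closed} gives, for every $i \in L_a$, that the prefix of $P_i$ from $s$ to the tail of the unique edge of $C_a \cap E(P_i)$ lies entirely in the $s$-side of $\corecut{Z}$; in particular it contains no edge of $\corecut{Z}$. The critical step is then to upgrade this to ``contains no edge of $Z$''. Once this is done, the prefix of $P_i$---which is automatically contained in the $s$-side of $C_a$ because $P_i$ crosses $C_a$ exactly at this unique edge---is the desired path in $G-Z$.

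I argue the upgrade by contradiction: suppose some edge $(x,y) \in Z$ lies on the prefix. Since $Z$ is a star cut, $y$ is in the $t$-side of $Z$; since $y$ lies on $P_i$ strictly before the tail of $C_a \cap E(P_i)$, it is in the $s$-side of $C_a$, so its index $c^*$ (the unique integer with $y$ in the $t$-side of $C_{c^*}$ and in the $s$-side of $C_{c^*+1}$) satisfies $c^* \leq a-1$. Being an endpoint of a $Z$-edge makes $c^*$ touched, so the untouched-indices hypothesis forces $c^* \leq a - \lambda$; hence $y$ sits in the $s$-side of $C_{a-(\lambda-1)}$. Moreover, since the prefix avoids $\corecut{Z}$, we have $(x,y) \in Z \setminus \corecut{Z}$, so by the definition of $\corecut{Z}$ the vertex $y$ has no path to $t$ in $G - Z$.

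The main obstacle I anticipate is converting this positional constraint into an outright contradiction. The plan is to exploit the stretch of $P_i$ from the head of $C_{a-(\lambda-1)} \cap E(P_i)$ up through the head of $C_a \cap E(P_i)$: every endpoint on this stretch has index in the untouched window $[a-(\lambda-1),a-1]$, so no edge of this stretch lies in $Z$; the two crossing edges are also $Z$-free (the head of $C_{a-(\lambda-1)} \cap E(P_i)$ has untouched index $a-(\lambda-1)$, and the edge $C_a \cap E(P_i)$ cannot lie in $Z$ because $i \in L_a$ already gives its head in the $s$-side of $Z$). Combining this $Z$-free suffix with a residual-path rerouting---patterned on the shortest-residual-path argument used to construct the $H$-sequence $C_1,\ldots,C_\ell$ (Claim~\ref{cl:closure-reached}) and using the minimality of $\corecut{Z}$ (which is an $st$-mincut in our setting since $|\corecut{Z}| \geq \cutlb = \lambda_G(s,t)$)---should yield an explicit $G-Z$ path from $y$ to $t$, contradicting $(x,y) \in Z \setminus \corecut{Z}$ and finishing the proof.
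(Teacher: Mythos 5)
The proposal has a genuine gap and, more importantly, rests on a claim that is simply false in general.

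\textbf{The core problem: the prefix of $P_i$ need not avoid $Z$.} Your entire main case tries to show that the prefix of $P_i$ from $s$ to the tail $u$ of $C_a \cap E(P_i)$ contains no edge of $Z$, so that this prefix itself is the required path. But the statement only asserts the \emph{existence} of some $G-Z$ path from $s$ to $u$ inside the $s$-side of $C_a$; it does not promise that the prefix of $P_i$ works, and in fact it can fail. An edge $(x,y) \in Z \setminus \corecut{Z}$ can sit on that prefix: take a $Z$-edge early on $P_i$ (so its head $y$ is in the $t$-side of $Z$), followed by a $Z_{ts}$-edge bringing $P_i$ back to the $s$-side before $C_b$, with $y$ having no escape to $t$ in $G-Z$. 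The hypothesis only says that indices $b,\dots,a-1$ are untouched; it permits touched indices $\leq b-1$, so nothing rules out this configuration. Then the prefix of $P_i$ is not $Z$-free, yet a different $G-Z$ path from $s$ (say, entering via another flow path and crossing over) to $u$ in the $s$-side of $C_a$ can still exist. This is exactly why the paper's proof picks an \emph{arbitrary} $G-Z$ path $Q$ from $s$ to $u$ (guaranteed by $i\in L_a$), locates where $Q$ first crosses $C_b$, and then reroutes: it defines auxiliary index sets $L_c'$ recording which flow paths can be reached between $C_b$ and $C_c$ purely through the $t$-side of $C_b$ and $s$-side of $C_c$, and shows by a pigeonhole argument over $c = b,\dots,a$ that $i$ must be reachable by some $C_c$, producing the desired path that stays in the $s$-side of $C_a$ by construction.

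\textbf{Secondary issues.} (1) The parenthetical ``$\corecut{Z}$ is an $st$-mincut in our setting since $|\corecut{Z}| \geq \cutlb = \lambda_G(s,t)$'' is incorrect. Having $|\corecut{Z}| \geq \cutlb \geq \lambda_G(s,t)$ does not yield equality; indeed, the whole point of flow-augmentation is to \emph{make} $\corecut{Z}$ a mincut in $G+A$, which it need not be in $G$. (2) Your base case covers only $a - (\lambda-1) \leq 0$, while the paper's covers $a - (\lambda-1) \leq 1$, exploiting that index $0$ touched forces index $1$ touched; you route $a = \lambda$ through the (broken) main case. (3) The final ``residual-path rerouting'' step is a hand-wave, not an argument: even granting that the stretch of $P_i$ from the head of $C_b$ to $v$ is $Z$-free, you have no $G-Z$ connection from $y$ (stuck in the $s$-side of $C_b$) into that stretch, and the stretch ends at $v$ which is in the $s$-side of $Z$ and hence not on any $G-Z$ path to $t$; so the plan as sketched cannot produce the desired $y$-to-$t$ path.
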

\begin{proof}
If $a - (\lambda-1) \leq 1$ (i.e., $a \leq \lambda$), then there is no endpoint of the edge of $Z$ in the $s$-side of $C_a$ (note that if $0$ is touched, then so is $1$) 
  and hence the prefix of $P_i$ from $s$ to $C_a$
satisfies the desired properties.
We henceforth assume then that $b := a-(\lambda-1) > 1$. 

Let $e = (u,v)$ be the unique edge of $C_a \cap E(P_i)$. Since $i \in L_a$, both $u$ and $v$ are in the $s$-side of $Z$. Let $Q$ be a path from $s$ to $u$ in $G-Z$.

Since $Q$ ends in $u$, $Q$ contains an edge of $C_b$. Let $e'$ be the first edge of $C_b$ on $Q$ and let 
$j \in [\lambda]$ be such that $e' \in E(P_j) \cap C_b$. Since $Q$ is in the $s$-side of $Z$, we have $j \in L_b$.

For $b < c \leq a$, let $L_c'$ be the family of indices $i' \in [\lambda]$ such that there is a path from the head of $e'$
to the tail of the unique edge of $C_c \cap E(P_{i'})$ with all vertices in the $t$-side of $C_b$ and $s$-side of $C_c$. 
We also define $L_b' = \{j\}$. 
Since for every $b \leq c < a$, $c$ is untouched, we have that any path witnessing $i' \in L_c'$ is contained completely in the $s$-side of $Z$
and hence $L_c' \subseteq L_c$. Also, $L_c' \subseteq L_{c'}'$ for any $b \leq c \leq c' \leq a$. 

Note that it suffices to prove that $i \in L_c'$ for some $b \leq c \leq a$, as then the desired path can be formed by concatenating the prefix of $Q$ up to $e'$, 
the path witnessing $i \in L_c'$, and the part of $P_i$ from $C_{c}$ to $u$. 
As $b = a-(\lambda-1)$, it suffices to show that if $i \notin L_c'$ for some $b \leq c <a$, then $L_{c+1}'$ is a proper superset of $L_c'$. 

If $j=i$ then we are done, so assume that $j\neq i$. Fix $b \leq c < a$ such that $i \notin L_c'$. 
As $j \in L_c'$ and $e' \in E(P_j)$, there is a subpath of $Q$, between $e'$ and $u$, that starts in a vertex of some $P_{j'}$, $j' \in L_c'$, ends in a vertex of some $P_{i'}$, $i' \notin L_c'$,
and contains no internal vertices on paths of $\flow$. This path witnesses that $(j',i') \in E(H)$ and thus there is a path $R$ in $G^\flow$ from a vertex of $P_{j'}$ to a vertex of $P_{i'}$
that lies completely in the $t$-side of $C_c$ and $s$-side of $C_{c+1}$. 
A path $R$ contains a subpath $R'$ with no internal vertices on $\flow$ that starts on a path $P_{j''}$ for some $j'' \in L_c'$ and ends on a path $P_{i''}$ for some $i'' \notin L_{c}'$. 
This path is present in $G$. Consequently, by using a subpath of $P_{j''}$ from $C_c$ to the start of $R'$, the path $R'$, and a subpath of $P_{i''}$ from
the end of $R'$ to $C_{c+1}$ we witness that $i'' \in L_{c+1}'$, as desired. 
\end{proof}

Claims~\ref{cl:closure-reached},~\ref{cl:paths-before-downward-closed}, and~\ref{cl:support-path-milestone}
motivate the following definition. An index $a \in [\ell]$ is a \emph{milestone} if for every $a-(\lambda-1) \leq c \leq a$, $c$ is untouched. 
Claim~\ref{cl:closure-reached} implies for a milestone $a$, the set $L_a$ is downward-closed.

An arc $(v,u)$ is a \emph{long backarc} if there exists a milestone $a$ with $v$ in the $t$-side of $C_a$ and $u$ in the $s$-side of $C_a$. 
Not all assertions of the following observations are used later, but we present them anyway as they give good intuition about why the
long backarcs are essentially irrelevant for the problem.
\begin{claim}\label{cl:long-backarcs}
Let $\longbackarcs$ be the family of long backarcs. Then,
\begin{itemize}
\item $\flow$ is (still) a maximum $st$-flow in $G-\longbackarcs$;
\item $G-\longbackarcs$ is (still) properly boundaried;
\item the reachability pattern of $(G-\longbackarcs,\flow)$ is (still) $H$;
\item $C_1,C_2,\ldots,C_\ell$ is (still) an $H$-sequence of mincuts in $(G-\longbackarcs,H)$;
\item $\corecut{Z}$ is disjoint with $\longbackarcs$ and is (still) a minimal $st$-cut in $G-\longbackarcs$;
\item $Z' \subseteq Z$, consisting of those arcs of $Z \setminus \longbackarcs$ whose tail is reachable from $s$ in $G-(Z \cup \longbackarcs)$, contains $\corecut{Z}$ and 
  is a star $st$-cut in $G-\longbackarcs$.
\end{itemize}
\end{claim}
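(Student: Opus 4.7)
I would handle the six bullets in the order stated, relying on three recurring structural observations.

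The starting point is that no arc of $\flow$ and no arc of $\delta_G^+(s)\cup \delta_G^-(t)$ can be a long backarc: a flow arc on $P_i$ crosses every mincut $C_a$ in the forward direction, and $s$, $t$ lie permanently on the $s$-side, respectively $t$-side, of every cut in the sequence. Consequently, $\flow$ remains an $st$-flow of value $\lambda$ in $G-\longbackarcs$, and $C_1=\delta^+(s)$ is still an $st$-cut of size $\lambda$ there; hence $\flow$ is still an $st$-maxflow, and the disjoint mincuts $\delta^+(s)$, $\delta^-(t)$ persist, giving proper boundaries. This takes care of the first two bullets.

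For the reachability pattern and the preservation of the $H$-sequence, I would use the master observation that any path $R$ in $G^\flow$ lying entirely in the $t$-side of $C_a$ and in the $s$-side of $C_{a+1}$ uses no long backarc: for each milestone $b$, either $b\le a$ (and $R$ stays in the $t$-side of $C_b$) or $b\ge a+1$ (and $R$ stays in the $s$-side of $C_b$), so no arc of $R$ can leap from the $t$-side of $C_b$ to its $s$-side. Applied to the canonical witness paths from $\leader{H}{C_a}{i}$ to $\lastreach{\leader{H}{C_a}{i}}{j}$, which by the $H$-subsequent definition lie strictly between $C_a$ and $C_{a+1}$, this shows that the pattern of $(G-\longbackarcs,\flow)$ is still exactly $H$, and by induction on $a$ that each leader, each set $\resreach{\leader{H}{C_a}{i}}$, and hence each cut $C_{a+1}$ is preserved. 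For the terminal case $a=\ell$, no residual path in $G^\flow$ starting past $C_\ell$ can exit the $t$-side of $C_\ell$ (forward crossings of a mincut are impossible in the residual graph), so the leader argument carries through and the sequence still stops at $C_\ell$.

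The disjointness $\corecut{Z}\cap \longbackarcs=\emptyset$ is the heart of the claim. Suppose for contradiction that $(x,y)\in \corecut{Z}\cap \longbackarcs$ with witnessing milestone $a$. Since $(x,y)\in Z$, there is a path $Q_0\subseteq G-Z$ from $s$ to $x$; as $s$ is on the $s$-side and $x$ on the $t$-side of $C_a$, $Q_0$ crosses $C_a$ through some cut arc $e_0\in C_a\cap E(P_{i_0})$, and since $Q_0$ stays on the $s$-side of $Z$, $i_0\in L_a$. Dually, the path $Q\subseteq G-Z$ from $y$ to $t$ witnessing $(x,y)\in \corecut{Z}$ crosses $C_a$ through some $e_1\in C_a\cap E(P_{i_1})$ with $i_1\notin L_a$, using that $Q$ lies on the $t$-side of $Z$ together with Claim~\ref{cl:paths-before-downward-closed}. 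Concatenating the segment of $Q_0$ from the head of $e_0$ to $x$, the arc $(x,y)$, and the segment of $Q$ from $y$ to the tail of $e_1$ yields a path in $G$ from an internal vertex of $P_{i_0}$ to an internal vertex of $P_{i_1}$ with $i_0\in L_a$ and $i_1\notin L_a=\closure{L_a}$ (the latter by Claim~\ref{cl:closure-reached} applied to milestone $a$), directly contradicting Claim~\ref{cl:no-escape}. With disjointness settled, $\corecut{Z}$ is a cut in $G-\longbackarcs$ of the same size as in $G$, and minimality is inherited by choosing, for each edge-certifying $st$-path in $G$, a rerouting through the support paths of Claim~\ref{cl:support-path-milestone} that avoids long backarcs.

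For the last bullet, setting $X$ to be the set of vertices reachable from $s$ in $G-(Z\cup \longbackarcs)$, a short check shows $Z'=\delta_{G-\longbackarcs}^+(X)$, so $Z'$ is automatically a star $st$-cut in $G-\longbackarcs$; the inclusion $\corecut{Z}\subseteq Z'$ follows because every tail of an arc in $\corecut{Z}$ lies in $X$ by rerouting through Claim~\ref{cl:support-path-milestone}. The main obstacle is the disjointness step, which requires mobilizing the full milestone machinery (Claims~\ref{cl:closure-reached}, \ref{cl:paths-before-downward-closed}, \ref{cl:support-path-milestone}, and~\ref{cl:no-escape}) simultaneously with the star-$st$-cut structure of $Z$.
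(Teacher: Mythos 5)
Your argument for the first four bullets is essentially the paper's (the paper dismisses the first two as immediate since $\longbackarcs$ is disjoint from $E(\flow)$, and uses the same between-$C_{a-1}$-and-$C_a$ observation for the pattern and sequence). Your proof that $\corecut{Z}\cap\longbackarcs=\emptyset$ is correct and is a genuinely different argument from the paper's: you fix a single offending arc $(x,y)$, locate crossings $e_0\in C_a\cap E(P_{i_0})$, $e_1\in C_a\cap E(P_{i_1})$ of the paths $Q_0$ (from $s$ to $x$, in the $s$-side of $Z$) and $Q$ (from $y$ to $t$, in the $t$-side of $Z$), and splice to contradict Claim~\ref{cl:no-escape}. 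The paper instead works with a single $st$-path $Q$ witnessing minimality of $\corecut{Z}$ at an edge $e$, chosen to minimize the number of long backarcs, and shows this minimizer has none; that argument establishes disjointness \emph{and} minimality of $\corecut{Z}$ in $G-\longbackarcs$ simultaneously. Your direct two-path version is cleaner for disjointness alone.

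The gap is exactly there: your argument does not establish that $\corecut{Z}$ is still a \emph{minimal} $st$-cut in $G-\longbackarcs$, nor that $\corecut{Z}\subseteq Z'$. Disjointness alone does not give minimality in $G-\longbackarcs$: for each $e\in\corecut{Z}$ you need an $st$-path through $e$ meeting $\corecut{Z}$ only in $e$ and avoiding \emph{all} long backarcs, which is a strictly stronger requirement than having such a path in $G$. You wave at ``rerouting through the support paths of Claim~\ref{cl:support-path-milestone},'' but that claim produces $s$-to-$C_a$ paths inside $G-Z$ and does not control intersections with $\corecut{Z}$ on $st$-paths; it is not the right tool. The rerouting that actually works is the paper's: if the minimizer $Q$ has a last long backarc $f$ witnessed by milestone $a$, it crosses $C_a$ at some $e_1$ before $f$ and $e_2$ after $f$, and Claim~\ref{cl:paths-before-downward-closed} lets you splice in a flow-path suffix (if $i_1\notin L_a$) or prefix (if $i_2\in L_a$) to strictly reduce the long-backarc count, while the remaining case $i_1\in L_a$, $i_2\notin L_a$ contradicts Claim~\ref{cl:no-escape}. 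Without something of this form, both the minimality assertion in the fifth bullet and the inclusion $\corecut{Z}\subseteq Z'$ in the sixth bullet (where your characterization $Z'=\delta^+_{G-\longbackarcs}(X)$ is a nice observation, but the reachability of tails of $\corecut{Z}$ in $G-(Z\cup\longbackarcs)$ again needs this rerouting, not Claim~\ref{cl:support-path-milestone}) remain unproved.
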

\begin{proof}
The first two properties are immediate as $\longbackarcs$ does not contain any edge of $\flow$. 
For the third and fourth properties, recall that $\ell > \ellthreshold > 2$ and notice that when defining $C_a$ for $1 < a \leq \ell$, all 
paths in $G^\flow$ that witness where $\leader{H}{C_{a-1}}{i}$ is for $i \in [\lambda]$ are contained in the $t$-side of $C_{a-1}$ and in the $s$-side of $C_a$.
Hence, these paths do not use any edge of $\longbackarcs$.

For the penultimate property, clearly $Z \setminus \longbackarcs$ is an $st$-cut in $G-\longbackarcs$.
Let $e \in \corecut{Z}$. Since $\corecut{Z}$ is a minimal $st$-cut, there exists a path $Q$ from $s$ to $t$ whose only intersection with $\corecut{Z}$ is $e$. 
Pick such a path $Q$ that minimizes the number of edges of $\longbackarcs$.
To prove the minimality of $\corecut{Z}$ in $G-\longbackarcs$, it suffices to show that there is no long backarc on $Q$ (which in particular implies that $e \notin \longbackarcs$). 

To this end, we use Claim~\ref{cl:paths-before-downward-closed}.
Assume that $Q$ contains at least one edge of $\longbackarcs$ and let $f = (v,u)$ be the last such edge. 
Let $a$ be the milestone index witnessing that $f \in \longbackarcs$. 
Since $v$ is in the $t$-side of $C_a$ and $u$ is in the $s$-side of $C_a$, $Q$ contains an edge $e_1$ of $C_a$ before $f$
and an edge $e_2$ of $C_a$ after $f$.
Let $i_1,i_2 \in [\lambda]$ be such that $e_1 \in E(P_{i_1})$ and $e_2 \in E(P_{i_2})$.

If $e_1 \notin L_a$, then $e$ is not later on $Q$ than $e_1$ and, by Claim~\ref{cl:paths-before-downward-closed}, we can substitute the suffix of $Q$ from the head of $e_1$ to $t$ with the suffix of $P_{i_1}$ from the head of 
$e_1$ to $t$, contradicting the minimality of $Q$. 
If $e_2 \in L_a$, then $e_2$ is before $e$ on $Q$ and, by Claim~\ref{cl:paths-before-downward-closed}, we can substitute the prefix of $Q$ from $s$ to the tail of $e_2$ with the prefix of $P_{i_2}$ from $s$ to the tail of $e_2$ all of whose vertices lie in the $s$-side of $C_a$,
again contradicting the minimality of $Q$. Hence, $e_1 \in L_a$ but $e_2 \notin L_a$.
However, as $e_1$ is before $e_2$ on $Q$, the subpath of $Q$ from $e_1$ to $e_2$ contradicts Claim~\ref{cl:no-escape}.

For the final property, note that the penultimate property implies $\corecut{Z} \subseteq Z'$ and thus $Z'$ is an $st$-cut. 
Since every arc of $Z \setminus Z'$ has its tail not reachable from $s$ in $G-(Z \cup \longbackarcs)$, the sets of vertices reachable from $s$
in $G-(Z \cup \longbackarcs)$ and $G-(Z' \cup \longbackarcs)$ are equal. 
Hence, for every $e \in Z'$, the tail of $e$ is reachable from $s$ and the head of $e$ is not reachable from $s$ in $G-(Z' \cup \longbackarcs)$. 
This finishes the proof that $Z'$ is a star $st$-cut.
\end{proof}
We remark here that it is possible that $\longbackarcs \cap Z \neq \emptyset$ or $Z' \neq Z \setminus \longbackarcs$; Claim~\ref{cl:long-backarcs} only asserts that the properties
  of $\corecut{Z}$ remain unchanged upon deletion of $\longbackarcs$.

We say that two indices $a$ and $b$ are \emph{close} if $|b-a| \leq \lambda$. 
Let $0 \leq a_1 < a_2 < \ldots < a_r \leq \ell$ be the touched indices; by Claim~\ref{cl:bound-touched} we have $r \leq 4k-2\lambda$. 
A \emph{block} is a maximal subsequence $a_\alpha, a_{\alpha+1}, \ldots, a_\beta$ such that $a_i$ and $a_{i+1}$ are close for every $\alpha \leq i < \beta$. 
Note that we have $a_\beta - a_\alpha \leq (\beta-\alpha)\lambda \leq (r-1)\lambda$. 

Naturally, the sequence $a_1,\ldots,a_r$ partitions into a number of blocks. 
An index $a \in \mathbb{Z}$ is \emph{interesting} if $a_\alpha - \lambda \leq a \leq a_\beta + \lambda$ for some block $a_\alpha,\ldots,a_\beta$.
Note that this condition is equivalent to $|a-a_\alpha| \leq \lambda$ for some $\alpha \in [r]$. 
We infer that the number of interesting indices is bounded by
$r \cdot (2\lambda+1) \leq 12k\lambda$.

We randomly sample a subset $\Gamma \subseteq \{0,1,\ldots,\ell\}$ in the following skewed way: every $a \in \{0,1,\ldots,\ell\}$ belongs to $\Gamma$ with probability $k^{-1}$,
   independently of the other indices. We aim at the following: for every interesting index $a$, $a \in \Gamma$ if and only
if $a$ is touched. Note that no integer outside the set $\{0,\ldots,\ell\}$ is touched, there are at most $4k-2\lambda$ touched indices and at most $12k\lambda$ interesting indices.
Thus we are successful with probability at least 
\[ k^{-(4k-2\lambda)} \cdot \left(1-k^{-1}\right)^{12k\lambda} = 2^{-\Oh(k \log k)}.\]
For the deterministic version of the above step, we use Theorem~\ref{thm:rand-separation} to obtain
a family of $2^{\Oh(k \log k)} \log n$ candidates for $\Gamma$ and branch on the choice
of $\Gamma$.

A \emph{$\Gamma$-milestone} is an index $a \in [\ell]$ such that for every $a-(\lambda-1) \leq c \leq a$, $c \notin \Gamma$. Note that if
the guess is correct, every $\Gamma$-milestone is a milestone.

A \emph{$\Gamma$-block} is a maximal sequence $b_1 < b_2 < \ldots < b_\zeta$ of elements of $\Gamma$ such that $b_i$ is close to $b_{i+1}$ 
for $1 \leq i < \zeta$. 
We perform the following cleaning procedure: for every $\Gamma$-block $b_1 < b_2 < \ldots < b_\zeta$ such that $\zeta > 4k-2\lambda$ or $b_\zeta-b_1 > (4k-2\lambda-1)\lambda$,
we delete all elements of the said block from $\Gamma$. 
If we are successful, then even after the cleaning procedure every block is a $\Gamma$-block, but there may be numerous $\Gamma$-blocks that are not blocks. 
Furthermore, if we are successful and $(b_1 < \ldots < b_\zeta)$ is a $\Gamma$-block, then both $b_1-1$ (if $b_1 > 1$) and $b_\zeta+\lambda$ (if $b_\zeta+\lambda \leq \ell$)
are $\Gamma$-milestones (and thus in particular actual milestones).

Let $\block^1, \block^2, \ldots, \block^\xi$ be the sampled $\Gamma$-blocks.
For a $\Gamma$-block $\block^\alpha = (b_1 < b_2 < \ldots < b_\zeta)$,
    we define $\leftblock^\alpha = b_1$, $\rightblock^\alpha = b_\zeta$, and 
let $Z_0^\alpha$ be the set of edges of $Z$ with both endpoints on the $t$-side of $C_{b_1}$ and $s$-side of $C_{b_\zeta+\lambda}$. 

We henceforth assume that the guess of $\Gamma$ is a correct one. 
We have the following observations.
\begin{claim}\label{cl:Z-local}
For every $e \in Z \setminus \longbackarcs$ there exists a $\Gamma$-block $\block^\alpha$ that is an actual block such 
that both endpoints of $e$ are in the $t$-side of $C_{\leftblock^\alpha}$ and in the $s$-side of $C_{\rightblock^\alpha+1}$.
\end{claim}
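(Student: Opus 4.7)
The plan is to attach to each endpoint of $e$ a specific touched index and argue that both indices fall into a common block. For $e = (v,u) \in Z \setminus \longbackarcs$, I will let $a_v$ (respectively $a_u$) be the largest integer with $v$ (respectively $u$) in the $t$-side of $C_{a_v}$ (respectively $C_{a_u}$), using the extension of $C_a$ to all integers. Both $a_v, a_u \in \{0,1,\ldots,\ell\}$, and since $v$ and $u$ are endpoints of $e \in Z$ that cross from the $t$-side of $C_{a_v}$ (respectively $C_{a_u}$) to the $s$-side of $C_{a_v+1}$ (respectively $C_{a_u+1}$), both indices are touched. Once $a_v$ and $a_u$ are shown to lie in the same block, that block is an actual block, so it has at most $r \le 4k-2\lambda$ elements and span at most $(r-1)\lambda$, hence it survives the cleaning procedure and, under the correctness of the $\Gamma$-guess, becomes one of the sampled $\Gamma$-blocks $\block^\alpha$. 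The containment of endpoints then follows directly from $\leftblock^\alpha \le \min(a_v,a_u)$ and $\max(a_v,a_u) \le \rightblock^\alpha$ together with the definition of $a_v, a_u$.

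The easy case is $a_v \le a_u$. For every integer $a$ with $a_v < a \le a_u$, the vertex $v$ lies in the $s$-side of $C_a$ and $u$ lies in the $t$-side, so $e \in C_a$. But $C_a$ is an $st$-mincut of value $\lambda$ consisting of exactly one edge from each flow path $P_i$, and the capacity-$1$ edge $e$ lies on at most one flow path; on that path it can be the distinguished edge of at most one cut $C_a$. Hence $e$ lies in at most one such cut, forcing $a_u - a_v \le 1$, after which $a_u$ and $a_v$ trivially share a block.

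The main obstacle is the backwards case $a_v > a_u$, which is the only place where the hypothesis $e \notin \longbackarcs$ is used. Assume for contradiction that no block contains both $a_u$ and $a_v$. Then among the sorted touched indices $a_1 < \cdots < a_r$ some neighbours $a_{i'} < a_{i'+1}$ lie inside $[a_u, a_v]$ with $a_{i'+1} - a_{i'} > \lambda$. Set $a := a_{i'+1} - 1$; one verifies that $a \in [\ell]$, that $a_u < a < a_v$ (using $\lambda \ge 2$, which holds since $|E(H)| > |V(H)|$), and that the window $[a-(\lambda-1), a] = [a_{i'+1}-\lambda, a_{i'+1}-1]$ contains no $a_m$ (by monotonicity of $(a_m)$ together with the gap assumption). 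Hence $a$ is a milestone, and since $v$ lies in its $t$-side (as $a \le a_v$) while $u$ lies in its $s$-side (as $a > a_u$), the arc $e = (v,u)$ is a long backarc, contradicting $e \in Z \setminus \longbackarcs$. Thus $a_u$ and $a_v$ must share a block, completing the argument.
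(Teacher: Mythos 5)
Your proof is correct and reaches the claim by a closely related but slightly different route. The paper anchors the argument on a single index $a$ determined by the \emph{head} of $e$ (your $a_u$), notes $a$ is touched and hence sits in an actual block $\block^\alpha$, and then forces the tail into the window by two observations: the tail cannot be in the $s$-side of $C_{\leftblock^\alpha-1}$ (else $e$ would lie in both $C_{\leftblock^\alpha-1}$ and some later mincut, impossible since an edge lies in at most one $C_a$), and the tail cannot be in the $t$-side of the milestone $C_{\rightblock^\alpha+\lambda}$ (else $e$ would be a long backarc). You instead define \emph{both} $a_v$ and $a_u$, show both are touched, and argue directly that they belong to a common actual block, with an explicit two-case split: the forward case $a_v \le a_u$ handled by the disjointness of the mincuts (forcing $a_u - a_v \le 1$), and the backward case $a_v > a_u$ handled by producing a milestone $a := a_{i'+1}-1$ inside the gap, which would make $e$ a long backarc. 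Both proofs hinge on the same two structural facts (an edge crosses at most one $C_a$; an untouched window of length $\lambda$ yields a milestone, and $e \notin \longbackarcs$ kills backward crossings of milestones), so the logical content is the same, but your symmetric treatment of the two endpoints and the explicit forward/backward dichotomy make the step from ``both indices touched'' to ``both in one block'' a bit more transparent; the paper's version is shorter because it only ever localizes one endpoint and transfers the other via the block's boundary milestones. One small phrasing note: the actual block does not ``become'' a $\Gamma$-block after cleaning — rather, under a correct $\Gamma$-guess the $\Gamma$-block containing $a_v,a_u$ already coincides with the actual block (no interesting index can be falsely in $\Gamma$), and the cleaning thresholds are set precisely so that such a block is never deleted; you clearly have this in mind, it is just worth stating in that order.
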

\begin{proof}
Let $a$ be maximum such that the head of $e$ is in the $t$-side of $C_a$. Such $a$ exists and $a \geq 1$ as $a=1$ is one of the candidates.
Clearly, $a$ is touched. Let $\block^\alpha$ be the $\Gamma$-block (and an actual block) 
where $a$ lies.
It suffices to prove that the tail of $e$ lies in the $t$-side of $C_{\leftblock^\alpha}$ and in the $s$-side of $C_{\rightblock^\alpha+1}$. 

By assumption, $e$ is not a long backarc.
If $\leftblock^\alpha > 1$ then clearly also $e$ cannot have a tail in the $s$-side of $C_{\leftblock^\alpha-1}$ as the head of $e$ is in the $t$-side of $C_{\leftblock^\alpha}$.
If $\rightblock^\alpha + \lambda \leq \ell$, then $\rightblock^\alpha+\lambda$ is a milestone and hence $e$ does not have its tail on the $t$-side of $C_{\rightblock^\alpha + \lambda}$.
We infer that actually both endpoints of $e$ are on the $t$-side of $C_{\leftblock^\alpha-1}$ and on the $s$-side of $C_{\rightblock^\alpha+\lambda}$.
For every $\rightblock^\alpha < a \leq \rightblock^\alpha + \lambda$, we have $a \notin \Gamma$, and hence $a$ is untouched as we assume the guess of $\Gamma$ is correct.
Hence, by the definition of being touched, 
neither of the endpoint of $e$ can lie at the same time in the $s$-side of $C_{\rightblock^\alpha+1}$ and in the $t$-side of $C_{\rightblock^\alpha + \lambda}$.
We infer that both endpoints of $e$ are in the $t$-side of $C_{\leftblock^\alpha}$ and in the $s$-side of $C_{\rightblock^\alpha+1}$, as desired.
\end{proof}

Claim~\ref{cl:Z-local} asserts that $(Z_0^\alpha)_{\alpha \in [\xi]}$ is a partition of $Z-\longbackarcs$. 

\begin{claim}\label{cl:L-decrease}
For every $\block^\alpha$, the following holds.
 \begin{itemize}
\item 
$L_{\leftblock^\alpha-1} = L_{\rightblock^\alpha+\lambda}$ if $\block^\alpha$ is not an actual block or $\corecut{Z} \cap Z_0^\alpha = \emptyset$
\item 
$L_{\leftblock^\alpha-1} \supsetneq L_{\rightblock^\alpha+\lambda}$ if $\block^\alpha$ is an actual block and $\corecut{Z} \cap Z_0^\alpha \neq \emptyset$.
\end{itemize}
\end{claim}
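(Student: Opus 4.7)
In every case, I would first establish the easy containment $L_{\rightblock^\alpha + \lambda} \subseteq L_{\leftblock^\alpha - 1}$. Both indices are $\Gamma$-milestones (the left one as $\leftblock^\alpha$ is the first element of a $\Gamma$-block, the right one because the cleanup enforces a gap of length $\lambda$ after each surviving $\Gamma$-block), hence true milestones under the assumption that $\Gamma$ was guessed correctly. Consequently $L_{\leftblock^\alpha - 1}$ is downward-closed by Claim~\ref{cl:closure-reached}, and then Claim~\ref{cl:closure-locked} applied with $a = \leftblock^\alpha - 1$ and $b = \rightblock^\alpha + \lambda$ gives the containment.

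For the first bullet when $\block^\alpha$ is not an actual block, I would note that correctness of $\Gamma$ implies every index in $[\leftblock^\alpha - 1, \rightblock^\alpha + \lambda - 1]$ is untouched: indices inside $\Gamma$ lie in $\block^\alpha$ and are untouched by hypothesis, while indices outside $\Gamma$ in this range are interesting and hence cannot be touched without being in $\Gamma$. A single application of Claim~\ref{cl:closure-reached} to $a = \leftblock^\alpha - 1$, $b = \rightblock^\alpha + \lambda$ then yields $L_{\rightblock^\alpha + \lambda} = \closure{L_{\leftblock^\alpha - 1}} = L_{\leftblock^\alpha - 1}$.

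The remaining cases, in which $\block^\alpha$ is an actual block, both follow from the key equivalence: for every $i \in [\lambda]$,
\[ i \in L_{\leftblock^\alpha - 1} \setminus L_{\rightblock^\alpha + \lambda} \;\Longleftrightarrow\; \corecut{Z} \cap Z_0^\alpha \cap E(P_i) \neq \emptyset. \]
The ``$\Leftarrow$'' direction is almost immediate: any $\hat e$ in the right-hand set lies strictly between $C_{\leftblock^\alpha}$ and $C_{\rightblock^\alpha + \lambda}$ on $P_i$, and both parts of Claim~\ref{cl:paths-before-downward-closed} combined with $\hat e \in Z \cap \corecut{Z}$ rule out $i \notin L_{\leftblock^\alpha - 1}$ and $i \in L_{\rightblock^\alpha + \lambda}$ respectively. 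For ``$\Rightarrow$'', I would pick the last edge of $Z$ along $P_i$: it belongs to $\corecut{Z}$ because its head reaches $t$ inside the $t$-side of $Z$, and the milestone property at both ends (which forbids $Z$-endpoints in the gaps adjacent to $C_{\leftblock^\alpha}$ and $C_{\rightblock^\alpha + \lambda}$) forces it strictly between these two cuts, hence into $Z_0^\alpha$.

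With this equivalence, the second sub-case of the first bullet ($\corecut{Z} \cap Z_0^\alpha = \emptyset$) is immediate, and the second bullet reduces to exhibiting an edge of $\corecut{Z} \cap Z_0^\alpha$ lying on some flow path. If the supplied witness $e \in \corecut{Z} \cap Z_0^\alpha$ is already on some $P_i$, we are done. The remaining ``off-flow'' subcase is the main technical obstacle, which I plan to handle by building an $s$-$t$ simple path through $e$ in $G - (Z \setminus \{e\})$ (using that $s$ reaches the tail of $e$ and the head of $e$ reaches $t$ in $G - Z$), then observing that this new path together with $\flow$ cannot be edge-disjoint (lest $\flow$ admit an augmenting path contradicting its maximality), and finally using that every edge of the size-$\lambda$ mincuts $C_{\leftblock^\alpha}$ and $C_{\rightblock^\alpha + \lambda}$ lies on a unique flow path to extract an on-flow edge of $\corecut{Z} \cap Z_0^\alpha$.
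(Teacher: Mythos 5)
The first half of your proposal is sound and closely tracks the paper's argument: establishing $L_{\rightblock^\alpha+\lambda}\subseteq L_{\leftblock^\alpha-1}$ via downward-closedness and Claim~\ref{cl:closure-locked}, the disposal of non-actual $\Gamma$-blocks via untouchedness of the whole range and a single application of Claim~\ref{cl:closure-reached}, and the direction ``$i\in L'\setminus L''$ implies an edge of $\corecut{Z}\cap Z_0^\alpha$ on $P_i$'' (your ``$\Rightarrow$'', which mirrors the paper's $L'\neq L''$ case by taking the last $Z$-edge on $P_i$ and pinning it down via Claim~\ref{cl:paths-before-downward-closed} and the milestone property).

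However, the route through your ``key equivalence'' forces you to prove that $\corecut{Z}\cap Z_0^\alpha\neq\emptyset$ already yields some $i$ with $\corecut{Z}\cap Z_0^\alpha\cap E(P_i)\neq\emptyset$, and the sketch you give for the ``off-flow'' subcase does not work. You take an $s$--$t$ path $Q$ whose only intersection with $Z$ is the off-flow edge $e$, note that $Q$ must share edges with $\flow$ (or cross the mincuts $C_{\leftblock^\alpha}$, $C_{\rightblock^\alpha+\lambda}$ on flow edges), and hope to ``extract'' from these an on-flow member of $\corecut{Z}\cap Z_0^\alpha$. But every edge of $Q$ other than $e$ lies outside $Z$ by construction of $Q$, so none of those shared/crossing edges can belong to $Z$, let alone to $\corecut{Z}\cap Z_0^\alpha$. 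What the crossings actually give you is that $Q$ enters $C_{\leftblock^\alpha}$ on some $P_i$ with $i\in L_{\leftblock^\alpha}\subseteq L'$ and exits $C_{\rightblock^\alpha+\lambda}$ on some $P_j$ with $j\notin L''$; under the assumption $L'=L''$ this makes $i\neq j$, and then the subpath of $Q$ from $P_i$ to $P_j$ contradicts Claim~\ref{cl:no-escape} directly. This is exactly what the paper does: it never produces an on-flow edge of $\corecut{Z}\cap Z_0^\alpha$, it derives a contradiction from $L'=L''$ head-on, which is why the off-flow case is not a special case there at all. So the gap in your plan is the unjustified (and, as stated, false) extraction step; the fix is to abandon the ``find an on-flow witness'' reduction and instead run the $Q$-crossing argument to a Claim~\ref{cl:no-escape} contradiction.
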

\begin{proof}
Claim~\ref{cl:closure-reached} implies that both $L' := L_{\leftblock^\alpha-1}$ and $L'' := L_{\rightblock^\alpha+\lambda}$ are downward-closed
(recall that $L_a = [\lambda]$ for $a \leq 0$ and $L_a = \emptyset$ for $a > \ell$).
Then, Claim~\ref{cl:closure-locked} implies $L'' \subseteq L'$. 

Assume first $L' \neq L''$ and pick $i \in L' \setminus L''$. Consider the path $P_i$. By Claim~\ref{cl:paths-before-downward-closed},
there is no edge of $\corecut{Z}$ on $P_i$ before $C_{\leftblock^\alpha-1}$ and no edge of $Z$ on $P_i$ after $C_{\rightblock^\alpha+\lambda}$.
As $\corecut{Z}$ is a minimal $st$-cut, 
and the last one on $P_i$ among the edges of $E(P_i)\cap Z$   is in $\corecut{Z}$, 
$P_i$ goes through $\corecut{Z}$ between $C_{\leftblock^\alpha-1}$ and $C_{\rightblock^\alpha+\lambda}$.
We infer that there is an edge of $\corecut{Z}$ between $C_{\leftblock^\alpha-1}$ and $C_{\rightblock^\alpha+\lambda}$.
By Claim~\ref{cl:Z-local}, both endpoints of the said edge is between $C_{\leftblock^\alpha}$ and $C_{\rightblock^\alpha+1}$. 
Hence, $Z_0^\alpha$ contains an edge of $\corecut{Z}$.

Assume now $L' = L''$ and, by contradiction, let $e$ be any edge of $\corecut{Z} \cap Z_0^\alpha$.
That is, both endpoints of $e$ are in the $t$-side of $C_{\leftblock^\alpha}$ and in the $s$-side of $C_{\rightblock^\alpha+1}$. 

By the definition of $\corecut{Z}$, there exists an $st$-path $Q$ whose only intersection with $Z$ is the edge $e$. 
Let $f = (v,u)$ be the first edge of $C_{\leftblock^\alpha}$ on $Q$, let $i \in [\lambda]$ be such that $f \in E(P_i)$,
and let $Q_1$ be the prefix of $Q$ from $s$ to $v$. 
Since $Q_1$ is contained in the $s$-side of $C_{\leftblock^\alpha}$, while $e$ is in the $t$-side of $C_{\leftblock^\alpha}$, $Q_1$ is a path in $G-\corecut{Z}$ as well.
Since $\leftblock^\alpha-1$ is untouched, $f \notin \corecut{Z}$ and thus $u$ is in the $s$-side of $\corecut{Z}$.
This witnesses that $i \in L_{\leftblock^\alpha}$ and hence $L' \neq \emptyset$. Since $L' = L''$, it holds that
$L_{\rightblock^\alpha+\lambda} \neq \emptyset$. In particular, $\rightblock^\alpha + \lambda \leq \ell$ and thus $C_{\rightblock^\alpha+\lambda}$ is defined.

Recall that $Q$ contains $e$. Since $e$ has its head in the $s$-side of $C_{\rightblock^\alpha+1}$, $Q$ visits at least one edge of $C_{\rightblock^\alpha+\lambda}$ after $e$ 
Let $f'$ be the last edge of $C_{\rightblock^\alpha+\lambda}$ on $Q$ and let $Q_2$ be the suffix of $Q$ starting from the head of $f'$. 
Clearly, $f'$ is after $e$ on $Q$ so $e$ does not lie on $Q_2$. 
Also, $Q_2$ is contained completely on the $t$-side of $C_{\rightblock^\alpha+\lambda}$ by the choice of $f'$. 
Hence, $Q_2$ is also a path in $G-\corecut{Z}$. Consequently, as $\corecut{Z}$ is an $st$-cut, $f'$ lies on a path $P_j$ for some $j \notin L_{\rightblock^\alpha+\lambda} = L''$. 
However, then the subpath $Q_3$ of $Q$ from the head of $f$ to the tail of $f'$ contradicts Claim~\ref{cl:no-escape}. 
This is the desired contradiction.
\end{proof}

Let $\alpha_1 < \alpha_2 < \ldots < \alpha_\eta$ be the indices of $\Gamma$-blocks that are actual blocks. 
We randomly guess the integer $1 \leq \eta \leq 4k-2\lambda$.
We sample a sequence $\mathcal{L} = ([\lambda] = L_Z^0 \supseteq L_Z^1 \supseteq \ldots L_Z^\eta = \emptyset)$ of downward-closed sets,
   aiming at $L_Z^{\iota-1} = L_{\leftblock^{\alpha_\iota}-1}$ and $L_Z^\iota = L_{\rightblock^{\alpha_\iota}+\lambda}$
   for every $1 \leq \iota \leq \eta$.
We define $J \subseteq [\eta]$ to be the set of those indices $\iota \in [\eta]$ such that $L_Z^{\iota-1} \subsetneq L_Z^\iota$.
We remark that in the deterministic setting the above step is replaced by branching 
in the straightforward manner. 

While we were able to guess sets $(L_Z^\iota)_{0 \leq \iota \leq \eta}$, we cannot guess the indices $(\alpha_\iota)_{1 \leq \iota \leq \eta}$, as there are too many options.
Instead, we perform a color-coding step, for every $\Gamma$-block guessing its potential place in the sequence of actual blocks. 
More precisely, for every $1 \leq \alpha \leq \xi-1$, we sample a set $L^\alpha \in \mathcal{L}$, and further denote
$L^0 = [\lambda]$ and $L^\xi = \emptyset$. 
We aim that for every $1 \leq \iota \leq \eta$ actually
$L^{\alpha_\iota-1} = L_Z^{\iota-1}$ and $L^{\alpha_\iota} = L_Z^\iota$. 
In the deterministic setting, we use Theorem~\ref{thm:color-coding} to replace the above
color-coding step with branching into at most 
$2^{\Oh(\eta \log \eta)} \cdot \Oh(\log n) = 2^{\Oh(k \log k)} \cdot \Oh(\log n)$ options. 

In other words, for every $1 \leq \alpha \leq \xi-1$, the sampled set $L^\alpha$
is a guess for the downward-closed set $L_{\rightblock^{\alpha}+\lambda}$
(and thus $L^{\alpha-1}$ is also a guess for the downward-closed set $L_{\leftblock^{\alpha}-1}$).
We aim at being correct in this guess around (just before and just after) $\Gamma$-blocks that
are actual blocks. 

If this is the case, then consider a single $\Gamma$-block $\block^\alpha$. 
If there is an index $1 \leq \iota \leq \eta$ such that $L^{\alpha-1} = L_Z^{\iota-1}$
and $L^\alpha = L_Z^\iota$, then it is possible that $\block^\alpha$ is an actual block
and $\alpha = \alpha_\iota$.
There may be multiple such blocks $\alpha$ for a single value of $\iota$ and we will treat
them in the same way, considering them candidates for $\block^{\alpha_\iota}$. 
On the other hand, if for an index $\alpha$ no such index $\iota$ exists,
we know that $\block^\alpha$ is not an actual block and we can add some
edges to the returned set $A$ that simply bypass $\block^\alpha$.

To implement the above intuition, we need a few notions. 
We say that an index $1 \leq \alpha \leq \xi$ is \emph{good} if there exists $1 \leq \iota \leq \eta$ such that $L^{\alpha-1} = L_Z^{\iota-1}$,
and $L^\alpha = L_Z^\iota$. 
A good index $\alpha$ is \emph{excellent} if additionally this $\iota$ belongs to $J$, that is, $L^{\alpha-1} \neq L^\alpha$. 
Note that all indices $\alpha_\iota$ for $1 \leq \iota \leq \eta$ are good 
and $\alpha_\iota$ is excellent if and only if $L_Z^{\iota-1} \neq L_Z^\iota$,
which is equivalent to $\corecut{Z} \cap Z_0^{\alpha_\iota} \neq \emptyset$ by Claim~\ref{cl:L-decrease}. 
Furthermore, if $\alpha$ is excellent, then the corresponding index $\iota \in J$ is uniquely defined and we denote it $\iota(\alpha)$. 

For every excellent $1 \leq \alpha \leq \xi$,
we define an instance $\inst^\alpha = (G^\alpha, s, t, k^\alpha)$ with a maximum flow $\flow^\alpha$ as follows. 
Let $D^\alpha = L^{\alpha-1} \setminus L^\alpha = L^{\iota(\alpha)-1}_Z \setminus L^{\iota(\alpha)}_Z$ (recall that for excellent $\alpha$ we have $L^{\alpha-1} \supsetneq L^\alpha$). 
Define first a graph $G_0^\alpha$ that consists of:
\begin{itemize}
\item a subgraph of $G$ induced by all vertices $v \in V(G)$ which are on the $t$-side 
of $C_{\leftblock^\alpha-1}$ and on the $s$-side of $C_{\rightblock^\alpha+\lambda}$, except for the vertices $s$ and $t$;
\item vertices $s$ and $t$ (but so far no arcs incident with them)
\item if $\leftblock^\alpha > 1$, then for every $i \in D^\alpha$ an edge $(s,v_i)$ where $v_i$ is the head of the edge of $E(P_i) \cap C_{\leftblock^\alpha-1}$;
\item if $\leftblock^\alpha \leq 1$, then all edges of the form $(s,v_i)$ that lie on paths $\{P_i~|~i \in D^\alpha\}$; 
\item if $\rightblock^\alpha+\lambda \leq \ell$, then for every $i \in D^\alpha$ an edge $(u_i,t)$ where $u_i$ is the tail of the edge of $E(P_i) \cap C_{\rightblock^\alpha+\lambda}$;
\item if $\rightblock^\alpha+\lambda > \ell$, then all edges of the form $(u_i,t)$ that lie on paths $\{P_i~|~i \in D^\alpha\}$
\end{itemize}
The flow $\flow^\alpha$ is of size $\lambda^\alpha := |D^\alpha|$ and consists of, for every $i \in D^\alpha$, the arc $(s,v_i)$, the subpath of $P_i$ between $v_i$ and $u_i$, and the arc $(u_i,t)$.
The graph $G^\alpha$ is the subgraph of $G_0^\alpha$ induced by all vertices that are at the same time reachable from $s$ and from which one can reach $t$ in $G_0^\alpha$.
Note that the flow $\flow^\alpha$ is present in $G^\alpha$. 

We define $Z^\alpha \subseteq Z^\alpha_0$ as the set of those $(u,v) \in Z$ such that either:
\begin{itemize}
\item $u,v \in V(G^\alpha) \setminus \{s,t\}$,
\item $\leftblock^\alpha \leq 1$ and $u=s$, $v=v_i$ for some $i \in D^\alpha$, or
\item $\rightblock^\alpha+\lambda > \ell$ and $u=u_i$, $v=t$ for some $i \in D^\alpha$.
\end{itemize}
We remark that $Z_0^\alpha \setminus Z^\alpha$ may be nonempty. For example, 
$Z_0^\alpha$ may contain an edge $e$ on a path $P_i$ for some $i \notin D^\alpha$
  that has both endpoints beween $C_{\leftblock^\alpha}$ and $C_{\rightblock^\alpha}$
and is an edge of $Z \setminus \corecut{Z}$. This edge is not in $G^\alpha$ 
and hence also not in $Z^\alpha$.

For every $\iota \in J$ we guess integers $0 \leq \cutlb_Z^\iota \leq k_Z^\iota \leq k$, aiming at $\cutlb_Z^\iota = |Z^{\alpha_\iota} \cap \corecut{Z}|$ and $k_Z^\iota = |Z^{\alpha_\iota}|$.
We mandate that for every $\iota \in J$ it holds that $|L_Z^{\iota-1} \setminus L_Z^\iota| \leq \cutlb_Z^\iota \leq k_Z^\iota$
and that $\sum_{\iota \in J} k_Z^\iota \leq k$. If these conditions are not satisfied, we just return either $(A=\emptyset, \flow)$ if $\cutlb = \lambda$ or $(A=\{(s,t)\}, \{(s,t)\})$ if $\cutlb > \lambda$
from this recursive call and terminate.
(In the deterministic setting, the above guess is replaced by branching in the standard manner,
 and we ignore branches where the conditions are not satisfied.)
Furthermore, for every excellent $1 \leq \alpha \leq \xi$, we denote $\cutlb^\alpha = \cutlb_Z^{\iota(\alpha)}$ and
$k^\alpha = k_Z^{\iota(\alpha)}$ for brevity. 
This defines the parameter $k^\alpha$ in the instance $\inst^\alpha$ for an excellent $\alpha$ and a parameter $\cutlb^\alpha$ we will pass down in the recursion.

Let us lower bound the probability that all gueses are as we aim for. 
The probability of guessing $\eta$ correctly is $\Omega(k^{-1})$. 
Guessing the sequence $\mathcal{L}$ boils down to guessing, for every $i \in [\lambda]$, the maximum index $0 \leq \iota \leq \eta$ for which $i \in L_Z^\iota$. 
Hence, the success probability here is at least $(\eta+1)^{-\lambda} = 2^{-\Oh(k \log k)}$. 
The integers $\{\cutlb_Z^\iota~|~\iota \in J\}$ and $\{k_Z^\iota~|~\iota \in J\}$ are guessed correctly with probability $2^{-\Oh(k \log k)}$
and similarly correct sets $L^\alpha$ for $\alpha \in \bigcup_{\iota=1}^\eta \{\alpha_\iota-1,\alpha_\iota\}$
are guessed with probability $2^{-\Oh(k \log k)}$.
Hence, overall success probability is $2^{-\Oh(k \log k)}$. 
By a similar computation, in the deterministic counterpart we have $2^{\Oh(k \log k)} \Oh(\log^2 n)$ branches (there were two color-coding steps so far).

We observe the following.
\begin{claim}\label{cl:progress}
For every excellent $1 \leq \alpha \leq \xi$, $\flow^\alpha$ is a maximum $st$-flow in $G^\alpha$ and $G^\alpha$ has proper boundaries.
Furthermore, one of the following options hold:
\begin{itemize}
\item either $D^\alpha \neq [\lambda]$ and thus $|\flow^\alpha| = |D^\alpha| < \lambda$ and $k^\alpha \leq k-(\lambda-|D^\alpha|) < k$, or
\item $D^\alpha = [\lambda]$, $H$ is the reachability pattern of $(G^\alpha,\flow^\alpha)$, and $\delta_{G^\alpha}^+(s), C_{\leftblock^\alpha}, C_{\leftblock^\alpha+1}, \ldots, C_{\rightblock^\alpha+\lambda-1}, \delta_{G^\alpha}^-(t)$
is the $H$-sequence of mincuts in $(G^\alpha,\flow^\alpha)$ (without the first term if $\leftblock^\alpha=1$, without the last term if $\rightblock^\alpha+\lambda > \ell$, and without all terms $C_c$ for which $c > \ell$).
\end{itemize}
\end{claim}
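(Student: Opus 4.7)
The plan is to verify the four assertions of the claim in turn, mostly by unwinding the definitions of $G^\alpha$, $\flow^\alpha$, and the sets $D^\alpha$, $L^{\alpha-1}$, $L^\alpha$, and exploiting the downward-closedness properties of the $L_Z^\iota$'s guaranteed by the sampling step together with Claims~\ref{cl:no-escape}--\ref{cl:support-path-milestone}.

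First, I would show that $\delta_{G^\alpha}^+(s)$ and $\delta_{G^\alpha}^-(t)$ are both $st$-mincuts in $G^\alpha$ of cardinality exactly $|D^\alpha|$, which immediately gives both that $\flow^\alpha$ (which contains exactly $|D^\alpha|$ paths by construction) is a maximum $st$-flow and that $G^\alpha$ has proper boundaries (disjointness of these two cuts follows from $\leftblock^\alpha - 1 < \rightblock^\alpha + \lambda$, which holds because $\lambda \geq 1$ and $\leftblock^\alpha \leq \rightblock^\alpha$). Counting the arcs in $\delta_{G^\alpha}^+(s)$ and $\delta_{G^\alpha}^-(t)$ is immediate from the definition of $G_0^\alpha$; the fact that these are $st$-cuts in $G^\alpha$ (hence also $st$-mincuts, matching $|\flow^\alpha| = |D^\alpha|$) relies on Claim~\ref{cl:no-escape} applied to the downward-closed set $L^{\alpha-1} = L_Z^{\iota(\alpha)-1}$ to confirm that no path leaves the prescribed paths near the $s$-boundary, and symmetrically at the $t$-boundary using downward-closedness of the complement of $L^\alpha$.

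For the first bullet of the dichotomy, the key is an accounting argument. Since the sampled sets $L_Z^0 = [\lambda] \supseteq L_Z^1 \supseteq \ldots \supseteq L_Z^\eta = \emptyset$ satisfy $L_Z^{\iota-1} = L_Z^\iota$ for $\iota \notin J$, the sets $\{L_Z^{\iota-1} \setminus L_Z^\iota \mid \iota \in J\}$ form a partition of $[\lambda]$, so $\sum_{\iota \in J} |D^{\alpha_\iota}| = \lambda$. The constraint $|L_Z^{\iota-1} \setminus L_Z^\iota| \leq \cutlb_Z^\iota \leq k_Z^\iota$ combined with $\sum_{\iota \in J} k_Z^\iota \leq k$ then yields, for each $\iota \in J$, the inequality $k_Z^\iota \leq k - (\lambda - |D^{\alpha_\iota}|)$. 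Specializing to the excellent $\alpha$ under consideration, $k^\alpha \leq k - (\lambda - |D^\alpha|)$, which gives the stated bound and, when $D^\alpha \neq [\lambda]$, the strict inequality $k^\alpha < k$.

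For the second bullet, suppose $D^\alpha = [\lambda]$, so $\flow^\alpha$ retains one subpath of each original $P_i$. Every residual-graph path in $G^\flow$ witnessing an arc $(i,j) \in E(H)$ may be chosen to lie between two consecutive cuts $C_c, C_{c+1}$ and hence survives inside $G^\alpha$ once the endpoints are reattached via the newly added $s$- and $t$-arcs; conversely, since $G^\alpha$ is an induced subgraph of $G$ (away from $s,t$) and $\flow^\alpha$ is a restriction of $\flow$, any residual path in $G^{\alpha,\flow^\alpha}$ lifts to one in $G^\flow$. This shows $H$ is the reachability pattern of $(G^\alpha, \flow^\alpha)$. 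It then remains to argue that the $H$-sequence in $(G^\alpha, \flow^\alpha)$ is exactly the listed one. Here the hardest step is matching the leader/subsequent-mincut definitions to show that the cuts $C_c$ (for $\leftblock^\alpha \leq c \leq \rightblock^\alpha + \lambda - 1$) remain the $H$-subsequent cuts in $G^\alpha$: the $s$-side and $t$-side residual reachabilities only differ from those in $G$ in that new arcs from $s$ and to $t$ exist, but by the downward-closedness of $L^{\alpha-1}$ and $L^\alpha$ combined with Claim~\ref{cl:no-escape} these new arcs cannot create new residual reachabilities that cross any $C_c$, and thus do not alter the leader computation. The boundary corner cases ($\leftblock^\alpha = 1$ or $\rightblock^\alpha + \lambda > \ell$) are handled by observing that in these cases the corresponding boundary cut is precisely $\delta_{G^\alpha}^+(s)$ or $\delta_{G^\alpha}^-(t)$.

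I expect the main obstacle to be the second bullet: carefully verifying that the $H$-sequence of mincuts in $G^\alpha$ coincides with the inherited $C_c$'s, especially at the two endpoints of the range, and confirming that no additional residual paths appear through the newly added $s$- and $t$-arcs. The accounting and proper-boundary parts are essentially bookkeeping.
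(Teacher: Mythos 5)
The first three parts of your proposal (proper boundaries, the accounting inequality for $D^\alpha\neq[\lambda]$, and the argument that the reachability pattern of $(G^\alpha,\flow^\alpha)$ is still $H$ when $D^\alpha=[\lambda]$) are essentially the same as the paper's. The accounting in particular is exactly right: the sets $L_Z^{\iota-1}\setminus L_Z^\iota$ for $\iota\in J$ partition $[\lambda]$, so $\sum_{\iota\in J}|D^{\alpha_\iota}|=\lambda$, and together with $k_Z^{\iota'}\ge|D^{\alpha_{\iota'}}|$ and $\sum_{\iota'\in J}k_Z^{\iota'}\le k$ this yields $k^\alpha\le k-(\lambda-|D^\alpha|)$.

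However, there is a genuine problem in your treatment of the second bullet. You reach for ``downward-closedness of $L^{\alpha-1}$ and $L^\alpha$ combined with Claim~\ref{cl:no-escape}'' to argue that the added arcs do not disturb the leader/subsequent-mincut computation. But when $D^\alpha=[\lambda]$, necessarily $L^{\alpha-1}=[\lambda]$ and $L^\alpha=\emptyset$: both are vacuously downward-closed, and Claim~\ref{cl:no-escape} applied to $L=[\lambda]$ or $L=\emptyset$ says nothing. So the tool you invoke gives no traction on precisely the case you are trying to handle. (The same over-reach appears earlier: that $\delta^+_{G^\alpha}(s)$ and $\delta^-_{G^\alpha}(t)$ are $st$-cuts is immediate, not a consequence of Claim~\ref{cl:no-escape}.) The paper's route to the $H$-sequence statement is a different and cleaner observation: when $D^\alpha=[\lambda]$, the graph $G_0^\alpha$ is \emph{isomorphic} to $G$ with the $s$-side of $C_{\leftblock^\alpha-1}$ contracted onto $s$ and the $t$-side of $C_{\rightblock^\alpha+\lambda}$ contracted onto $t$, with $\flow^\alpha$ being the corresponding restriction of $\flow$. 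Under this identification $\delta^+_{G^\alpha}(s)$ is the image of $C_{\leftblock^\alpha-1}$ and $\delta^-_{G^\alpha}(t)$ is the image of $C_{\rightblock^\alpha+\lambda}$, and the leader/subsequent-mincut construction projects verbatim, which is what delivers the $H$-sequence claim. You are right to flag this step as the main obstacle, but the argument you propose to discharge it would not go through; you need the contraction viewpoint (or an equivalent direct argument that residual reachability between internal vertices is unchanged, using that no residual arc leaves $s$ or enters $t$ in $G^\alpha$).
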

\begin{proof}
The claim that $\flow^\alpha$ is a maximum $st$-flow in $G^\alpha$ and that 
$G^\alpha$ is properly boundaried follows directly from how we chose edges incident with $s$ and $t$ in $G^\alpha$. 

Consider first the simpler case $D^\alpha \neq [\lambda]$. The fact that $|\flow^\alpha| = |D^\alpha| < \lambda$ is immediate.
Since we mandate $k_Z^\iota \geq |L_Z^{\iota-1} \setminus L_Z^\iota|$ and $\sum_{\iota=1}^\eta k_Z^\iota \leq k$, 
we have $k^\alpha \leq k- (\lambda-|D^\alpha|)$. 

Consider then the case $D^\alpha = [\lambda]$.
Observe that $G_0^\alpha$ is isomorphic to the graph $G$ with the following modifications: if $\leftblock^\alpha > 1$, then contract the $s$-side of $C_{\leftblock^\alpha-1}$ onto $s$
and if $\rightblock^\alpha + \lambda \leq \ell$, contract the $t$-side of $C_{\rightblock^\alpha+\lambda}$ onto $t$. Furthermore, $\flow^\alpha$ can be then defined as the projection of $\flow$ shortened
to the part in the $t$-side of $C_{\leftblock^\alpha-1}$ and in the $s$-side of $C_{\rightblock^\alpha+\lambda}$.

To see that the reachability pattern of $(G^\alpha,\flow^\alpha)$ is still $H$, first note that every path in the residual graph of $G^\alpha$ and $\flow^\alpha$ that does not use $s$ nor $t$
is also a path in $G^\flow$, and hence the residual graph of $G^\alpha$ and $\flow^\alpha$ is a subgraph of $H$.
In the second direction, since the entire part of $G$ in the $t$-side of $C_{\leftblock^\alpha}$ and $s$-side
of $C_{\leftblock^\alpha+1}$ lies in $G_0^\alpha$ (recall $\lambda \geq 2$), and $C_{\leftblock^\alpha+1}$ is the $H$-subsequent mincut to $C_{\leftblock^\alpha}$, for every $(i,j) \in E(H)$ there is a path from $V(P_i) \setminus \{s,t\}$ to $V(P_j) \setminus \{s,t\}$ in $G^\flow$
that lies entirely in the $t$-side of $C_{\leftblock^\alpha}$ and in the $s$-side of $C_{\leftblock^\alpha+1}$.
By the definition of $G^\alpha$, this path is also present in $G^\alpha$. Hence, it is also present in the residual graph of $G^\alpha$ and $\flow^\alpha$.

To see that $\delta_{G^\alpha}^+(s), C_{\leftblock^\alpha}, C_{\leftblock^\alpha+1}, \ldots, C_{\rightblock^\alpha+\lambda-1}, \delta_{G^\alpha}^-(t)$ is the $H$-sequence of mincuts in $(G^\alpha,\flow^\alpha)$, recall that
$\delta_{G^\alpha}^+(s)$ is the projection of $C_{\leftblock^\alpha-1}$ if $\leftblock^\alpha>1$ and $\delta_{G^\alpha}^-(t)$ is the projection of $C_{\rightblock^\alpha+\lambda}$ if $\rightblock^\alpha+\lambda \leq \ell$.
\end{proof}

\begin{claim}\label{cl:excellent-reachability}
Let $1 \leq \alpha \leq \xi$ be such that $\block^\alpha$ is an actual block and $\alpha$ is excellent. 
Then, for every $v \in V(G^\alpha)$, it holds that:
\begin{itemize}
\item there is a path from $s$ to $v$ in $G^\alpha-Z^\alpha$ if and only if $v$ is in the $s$-side of $Z$;
\item there is a path from $v$ to $t$ in $G^\alpha-Z^\alpha$ if and only if there is a path from $v$ to $t$ in $G-Z$. 
\end{itemize}
\end{claim}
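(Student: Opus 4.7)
The plan is to prove both biconditionals by showing that, inside the ``local'' graph $G^\alpha - Z^\alpha$, reachability from $s$ (respectively, to $t$) matches the corresponding reachability in $G - Z$ restricted to $V(G^\alpha)$. I would split each biconditional into a closure part (the ``only if'' direction) and a realization part (the ``if'' direction) and handle them separately. Item 2 follows symmetrically from item 1 by reversing the roles of $s$ and $t$, so I focus on item 1.

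The key auxiliary fact to establish first is that every arc $e \in Z$ with both endpoints in $V(G^\alpha) \setminus \{s,t\}$ belongs to $Z^\alpha$. Since $\block^\alpha$ is an actual block after cleaning, $\leftblock^\alpha - 1$ is a milestone (when $\leftblock^\alpha > 1$) and hence untouched: this forbids any endpoint of a $Z$-edge from lying in the narrow gap between $C_{\leftblock^\alpha - 1}$ and $C_{\leftblock^\alpha}$. Consequently both endpoints of $e$ lie in the $t$-side of $C_{\leftblock^\alpha}$; combined with the $s$-side-of-$C_{\rightblock^\alpha+\lambda}$ property inherited from membership in $V(G^\alpha)$, the edge $e$ belongs to $Z_0^\alpha$ and thus to $Z^\alpha$ by its first defining bullet.

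For closure, I case on the type of arc $(u,w) \in E(G^\alpha) \setminus Z^\alpha$ with $u$ in the $s$-side of $Z$. Interior arcs are handled by the auxiliary fact; boundary arcs $(s,v_i)$ for $i \in D^\alpha$ give $v_i$ in the $s$-side because $i \in L^{\alpha-1} = L_{\leftblock^\alpha-1}$ puts the head of the corresponding $C_{\leftblock^\alpha-1}$-edge in the $s$-side (and when $\leftblock^\alpha \leq 1$, the real edge cannot be in $Z$ under our assumption, by the second defining bullet of $Z^\alpha$). The only subtle arcs are the artificial $(u_i,t)$ for $i \in D^\alpha$ when $\rightblock^\alpha + \lambda \leq \ell$. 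For these I argue that $u_i$ necessarily lies in the $t$-side of $Z$, making the case vacuous: untouchedness of the milestone $\rightblock^\alpha + \lambda$ forces the $C_{\rightblock^\alpha+\lambda}$-edge $(u_i,v_i)$ on $P_i$ to be outside $Z$, so $u_i$ and $v_i$ share a side of $Z$; since $i \notin L^\alpha$, neither is in the $s$-side.

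Realization is the delicate part. Given $v \in V(G^\alpha)$ in the $s$-side of $Z$, I take a path $Q_G$ from $s$ to $v$ in $G - Z$ and locate its last arc $(x_i, y_i)$ crossing $C_{\leftblock^\alpha - 1}$ from $s$-side to $t$-side (which exists when $\leftblock^\alpha > 1$, as $v$ is in the $t$-side of $C_{\leftblock^\alpha-1}$). Both endpoints of this arc lie in the $s$-side of $Z$, so $i \in L^{\alpha-1}$, and by maximality of the crossing the suffix from $y_i$ to $v$ stays in the $t$-side of $C_{\leftblock^\alpha-1}$; a symmetric last-crossing trick at $C_{\rightblock^\alpha+\lambda}$ keeps the suffix in the $s$-side of $C_{\rightblock^\alpha+\lambda}$, so it lives entirely inside $V(G^\alpha)$. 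When $i \in D^\alpha$ I replace the prefix of $Q_G$ by the arc $(s,v_i) \in E(G^\alpha)$, and the auxiliary fact ensures the suffix avoids $Z^\alpha$. The main obstacle I expect is the case $i \in L^{\alpha-1} \cap L^\alpha$, where $(s,v_i)$ is not present in $G^\alpha$: here one must reroute, either by choosing $Q_G$ so that its last $C_{\leftblock^\alpha-1}$-crossing lies on some $P_j$ with $j \in D^\alpha$, which should be possible by combining the excellence of $\alpha$ (so $D^\alpha \neq \emptyset$) with Claim~\ref{cl:no-escape} applied to the downward-closed $L^\alpha$ to transport the entry onto a $D^\alpha$-path, or by directly arguing $v_i$ is reachable from $s$ in $G^\alpha - Z^\alpha$ via some alternate $D^\alpha$-entry. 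The argument for item 2 is entirely symmetric.
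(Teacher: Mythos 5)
Your closure direction (the ``only if'' implications) is fine, and in fact a little cleaner than the paper's: the paper invokes Claim~\ref{cl:support-path-milestone} to explicitly build a path from $s$ to $v_i$, whereas you observe directly that $i \in D^\alpha \subseteq L^{\alpha-1} = L_{\leftblock^\alpha-1}$ already places $v_i$ in the $s$-side of $Z$, which is all that is needed. The auxiliary fact about interior $Z$-arcs lying in $Z^\alpha$ is also correct (and is essentially immediate from the first defining bullet of $Z^\alpha$).

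The realization direction (the ``if'' implications) is where the gap is, and you have correctly located the problematic case but not the fix. The difficulty is not merely the case $i \in L^\alpha$ for the last $C_{\leftblock^\alpha-1}$-crossing. There is also a second, independent problem that your ``symmetric last-crossing trick'' at $C_{\rightblock^\alpha+\lambda}$ does not handle: if the suffix of $Q_G$ after the last $C_{\leftblock^\alpha-1}$-crossing crosses $C_{\rightblock^\alpha+\lambda}$ and returns, truncating at the last return yields a path that enters $V(G^\alpha)$ through $C_{\rightblock^\alpha+\lambda}$, where $G^\alpha$ has no entry arcs $(s,\cdot)$, so you cannot project it into $G^\alpha$. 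Your proposed rescue for the first problem also does not go through as stated: Claim~\ref{cl:no-escape} forbids paths \emph{from} $L^\alpha$-paths \emph{to} non-$L^\alpha$-paths, not the reverse, so you cannot use it to ``transport the entry onto a $D^\alpha$-path''; if the last crossing sits on $P_i$ with $i \in L^\alpha$, the downward-closedness of $L^\alpha$ traps the suffix of $Q_G$ among $L^\alpha$-paths and provides no route to a $D^\alpha$ entry.

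The paper resolves both problems with a single observation that your proposal is missing: since $v \in V(G^\alpha)$, by construction there is a path $R$ from $v$ to $t$ inside $G^\alpha$, whose last arc is $(u_j,t)$ with $j \in D^\alpha$, hence $j \notin L^\alpha$. If any vertex $w \neq s$ on $Q_G$ lay on $P_i$ for some $i \in L^\alpha$, then the concatenation of the $w$-to-$v$ subpath of $Q_G$ with the $v$-to-$u_j$ subpath of $R$ would be a path in $G$ from an $L^\alpha$-path to a non-$L^\alpha$-path, contradicting Claim~\ref{cl:no-escape} applied to the downward-closed set $L^\alpha$. This at once rules out the bad case $i \in L^\alpha$ at the entry \emph{and} shows $Q_G$ contains no edge of $C_{\rightblock^\alpha+\lambda}$ at all (any such edge on $Q_G$ would have both endpoints in the $s$-side of $Z$ and hence lie on an $L^\alpha$-path), so the suffix of $Q_G$ after the last $C_{\leftblock^\alpha-1}$-crossing stays in the $s$-side of $C_{\rightblock^\alpha+\lambda}$ without any further truncation. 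Without the $R$-path argument, the realization direction does not close.
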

\begin{proof}
We first prove the $\Rightarrow$ implication for the first claim.
It is immediate for $v=s$, so assume $v \neq s$. Let $Q$ be a path from $s$ to $v$ in $G^\alpha-Z^\alpha$ and let $(s,v_i)$, $i \in D^\alpha$, be the first edge of this path.
By Claim~\ref{cl:support-path-milestone}, there exists a path $Q_i$ from $s$ to $v_i$ in $G-Z$ that visits only vertices in the $s$-side of $C_{\leftblock^\alpha}$ (except for the endpoint). 
The concatenation of $Q_i$ and $Q$ witnesses that $v$ is in the $s$-side of $Z$. 

The proof for the $\Rightarrow$ implication of the second claim is very similar. 
It is immediate for $v=t$, so assume $v \neq t$. Let $Q$ be a path from $v$ to $t$ in $G^\alpha-Z^\alpha$ and let $(u_i,t)$, $i \in D^\alpha$, be the last edge of this path.
By Claim~\ref{cl:paths-before-downward-closed}, the suffix of the path $P_i$ from $C_{\rightblock^\alpha+\lambda}$ to $t$ is disjoint with $Z$ and in the $t$-side of $Z$.
Hence, the concatenation of $Q$ and this suffix is a path from $v$ to $t$ in $G-Z$. 

We now move to the $\Leftarrow$ direction for the first claim.
Let $Q$ be a path from $s$ to $v$ in $G-Z$, witnessing that $v$ is in the $s$-side of $Z$. Note that $v \neq t$. 
By the definition of $G^\alpha$, there is a path $R$ from $v$ to $t$ in $G^\alpha$; let $(u_j,t)$ be the last edge of this path.

Since $v$ is in the $t$-side of $C_{\leftblock^\alpha-1}$, $Q$ intersects $C_{\leftblock^\alpha-1}$; let $e_1$ be the last edge of $C_{\leftblock^\alpha-1}$ on $Q$
and let $i_1 \in [\lambda]$ be such that $e_1$ lies on $P_{i_1}$. Since both endpoints of $e_1$ are in the $s$-side of $Z$, $i_1 \in L^{\alpha-1}$. 

We observe that no vertex $w$ on $Q$ can lie on a path $P_i$ for $i \in L^\alpha$.
If $w$ lies on both $Q$ and $P_i$ for $i \in L^\alpha$, then the subpath of $Q$ from $w$ to $v$ and the subpath of $R$ from $v$ to $u_j$ contradicts Claim~\ref{cl:no-escape} as $j \in D^\alpha$ so $j \notin L^\alpha$ and $L^\alpha$ is downward-closed. 
This in particular implies that $i_1 \in D^\alpha$. 

In particular, $Q$ contains no edge of $C_{\rightblock^\alpha+\lambda}$, as all edges of $C_{\rightblock^\alpha+\lambda}$ that are in the $s$-side of $Z$ lie on paths $P_i$, $i \in L^\alpha$. 
Consequently, the subpath $Q'$ of $Q$ from the head of $e_1$ to $v$ is completely contained in the $t$-side of $C_{\leftblock^\alpha-1}$ and the $s$-side of $C_{\rightblock^\alpha+\lambda}$. 
Concatenating $Q'$ with $R$ witnesses that the whole $Q'$ is in fact present in $G^\alpha$. 
Hence, $Q'$, prepended with $(s,v_{i_1})$ is a path from $s$ to $v$ in $G^\alpha-Z^\alpha$, as desired. 

We are left with the $\Leftarrow$ direction for the second claim, which is similar to the previous argumentation. 
Let $Q$ be a path from $v$ to $t$ in $G-Z$; note that $v \neq s$.
By the definition of $G^\alpha$, there is a path $R$ from $s$ to $v$ in $G^\alpha$; let $(s,v_j)$ be the first edge of this path.

Since $v$ is in the $s$-side of $C_{\rightblock^\alpha+\lambda}$, $Q$ intersects $C_{\rightblock^\alpha+\lambda}$; let $e_2$ be the first edge of $C_{\rightblock^\lambda-1}$ on $Q$ and let $i_2 \in [\lambda]$ be such that $e_2$ lies in $P_{i_2}$. We have $i_2 \notin L^\alpha$. 

We observe that no vertex $w$ on $Q$ can lie on a path $P_i$ for $i \notin L^{\alpha-1}$. 
If $w$ lies on both $Q$ and $P_i$ for $i \notin L^{\alpha-1}$, then the subpath of $R$ from $v_j$ to $v$ and the subpath of $Q$ from $w$ to $v$ contradicts Claim~\ref{cl:no-escape} as $j \in D^\alpha$ so $j \in L^{\alpha-1}$ and $L^{\alpha-1}$ is downward-closed. 
This in particular implies $i_2 \in D^\alpha$. 

In particular, $Q$ contains no edge of $C_{\leftblock^\alpha-1}$, as all edges of $C_{\leftblock^\alpha-1}$ that are in the $t$-side of $Z$ lie on paths $P_i$, $i \notin L^{\alpha-1}$. 
Consequently, the subpath $Q'$ of $Q$ from $v$ to the tail of $e_2$ is completely contained in the $t$-side of $C_{\leftblock^\alpha-1}$ and the $s$-side of $C_{\rightblock^\alpha+\lambda}$. 
Concatenating $R$ with $Q'$ witnesses that the whole $Q'$ is in fact present in $G^\alpha$. 
Hence, $Q'$ with the edge $(u_{i_1},t)$ is a path from $v$ to $t$ in $G^\alpha-Z^\alpha$, as desired. 
\end{proof}

\begin{claim}\label{cl:project-Z}
Let $1 \leq \alpha \leq \xi$ be such that $\block^\alpha$ is an actual block and $\alpha$ is excellent. 
Then $Z^\alpha$ is a star $st$-cut in $G^\alpha$. 
Furthermore, $\corecut{Z^\alpha} = \corecut{Z} \cap Z^\alpha$. 
\end{claim}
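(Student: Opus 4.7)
The plan is to derive Claim~\ref{cl:project-Z} as an almost immediate corollary of Claim~\ref{cl:excellent-reachability}, which has already carried out the nontrivial reachability analysis on the extracted subinstance. The proof breaks into three short steps: show that $Z^\alpha$ is an $st$-cut in $G^\alpha$; upgrade this to being a star $st$-cut; and show that $\corecut{Z^\alpha}$ coincides with $\corecut{Z} \cap Z^\alpha$.

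First I would verify the bookkeeping hypothesis needed to invoke Claim~\ref{cl:excellent-reachability}, namely that $Z^\alpha \subseteq E(G^\alpha)$ and both endpoints of every arc of $Z^\alpha$ lie in $V(G^\alpha)$. This amounts to inspecting each of the three bullets in the definition of $Z^\alpha$. In the first bullet, $u,v \in V(G^\alpha) \setminus \{s,t\}$, so the arc $(u,v)$ of $G$ lies in the induced subgraph used to define $G_0^\alpha$ and therefore in $G^\alpha$. In the second and third bullets, the arc $(s,v_i)$ or $(u_i,t)$ is one of the arcs explicitly added when $\leftblock^\alpha \leq 1$ or $\rightblock^\alpha+\lambda>\ell$ respectively, and its non-terminal endpoint, being a head (resp.\ tail) of a flow edge, is both reachable from $s$ and reaches $t$ in $G_0^\alpha$, hence lies in $V(G^\alpha)$.

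With these endpoints in $V(G^\alpha)$, Claim~\ref{cl:excellent-reachability} can be applied freely. Applying its first bullet at $v=t$ shows that a path from $s$ to $t$ in $G^\alpha - Z^\alpha$ would place $t$ in the $s$-side of $Z$ in $G$, contradicting that $Z$ is an $st$-cut in $G$; hence $Z^\alpha$ is an $st$-cut in $G^\alpha$. For the star property, take any $(u,v) \in Z^\alpha$: since $Z$ is a star cut in $G$, $u$ lies in the $s$-side and $v$ in the $t$-side of $Z$; applying the first bullet of Claim~\ref{cl:excellent-reachability} separately at $u$ and at $v$ transfers both statements to $G^\alpha - Z^\alpha$, which is exactly the defining property of a star $st$-cut.

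Finally, $\corecut{Z^\alpha}$ is by definition the set of arcs $(u,v) \in Z^\alpha$ for which $v$ reaches $t$ in $G^\alpha - Z^\alpha$, while $\corecut{Z} \cap Z^\alpha$ is the set of $(u,v) \in Z^\alpha$ (automatically in $Z$ because $Z^\alpha \subseteq Z$) for which $v$ reaches $t$ in $G - Z$. These two conditions are equivalent by the second bullet of Claim~\ref{cl:excellent-reachability} applied at $v$, so the two sets coincide. The only mildly subtle point in the whole argument is the initial case analysis ensuring that every endpoint of an arc of $Z^\alpha$ lies in $V(G^\alpha)$, so that Claim~\ref{cl:excellent-reachability} can be legitimately invoked; beyond that, the proof is essentially a translation through that claim.
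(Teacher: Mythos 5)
Your proposal is correct and follows essentially the same route as the paper: all three parts (that $Z^\alpha$ is an $st$-cut, that it is a star cut, and that $\corecut{Z^\alpha}=\corecut{Z}\cap Z^\alpha$) are derived directly from the two bullets of Claim~\ref{cl:excellent-reachability}, exactly as the paper does. The only addition is your upfront check that every endpoint of an arc of $Z^\alpha$ lies in $V(G^\alpha)$; the paper leaves this implicit (it is built into the definition of $Z^\alpha$ together with the observation that the $v_i$, $u_i$ sit on flow paths of $\flow^\alpha\subseteq G^\alpha$), and spelling it out is harmless good hygiene.
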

\begin{proof}
This is an easy corollary of Claim~\ref{cl:excellent-reachability}. 

Since $Z$ is an $st$-cut, there is no vertex of $G$ that has both a path from $s$ to $v$ and a path from $v$ to $t$ in $G-Z$. 
Claim~\ref{cl:excellent-reachability} implies that no vertex of $G^\alpha$ admits both a path from $s$ to $v$ and from $v$ to $t$ in $G^\alpha-Z^\alpha$. Thus, $Z^\alpha$ is an $st$-cut.
The first point of Claim~\ref{cl:excellent-reachability} implies that from $Z$ being a star $st$-cut in $G$ follows that for every $(u,v) \in Z^\alpha$, $u$ is in the $s$-side of $Z^\alpha$ in $G^\alpha$
and $v$ is in the $t$-side of $Z^\alpha$ in $G^\alpha$. 
Finally, an arc $(u,v) \in Z^\alpha$ we have that $(u,v) \in \corecut{Z^\alpha}$ if and only
if there is path from $v$ to $t$ in $G^\alpha-Z^\alpha$, which by Claim~\ref{cl:excellent-reachability} is equivalent to an existence of a path from $v$ to $t$ in $G-Z$,
which is equivalent to $(u,v) \in \corecut{Z}$. 
\end{proof}

Claim~\ref{cl:project-Z} allows the following recursive step.
For every excellent $1 \leq \alpha \leq \xi$, recurse on $\inst^\alpha$, $\cutlb^\alpha$, and $\flow^\alpha$, obtaining a pair $(A^\alpha,\witnessflow^\alpha)$ in the randomized case
and a family $\mathcal{A}^\alpha$ in the deterministic case. 
By Claim~\ref{cl:progress}, all recursive calls either consider smaller value of $k$ or
consider properly boundaried instance with the same $k$, $\lambda$, and $H$, but fall into the $\ell \leq \ellthreshold$ case. 

There is now an important delicacy in the deterministic case. 
By duplicating some elements of some sets $\mathcal{A}^\alpha$, we can assume
that for fixed $\iota \in J$, all sets $\mathcal{A}^\alpha$ with $\iota(\alpha) = \iota$
have the same size, which we denote by $N_\iota$. Furthermore, enumerate
$\mathcal{A}_\alpha$ as $\{(A^\alpha_i,\witnessflow^\alpha_i)~|~1 \leq i \leq N_{\iota(\alpha)}\}$.
We iterate over all tuples of $(i_\iota)_{\iota \in J} \in \prod_{\iota \in J} \{1,2,\ldots,N_\iota\}$; for each such tuple $(i_\iota)_{\iota \in J}$ we denote
$(A^\alpha,\witnessflow^\alpha) = (A^\alpha_{i_{\iota(\alpha)}}, \witnessflow^\alpha_{i_{\iota(\alpha)}})$ for every excellent $\alpha$ and proceed 
as follows (i.e., in the same manner as in the randomized setting).

In the randomized setting,
we aim at $(A^\alpha,\witnessflow^\alpha)$ being compatible with $Z^\alpha$ for all excellent $\alpha$ where $\block^\alpha$ is an actual block. 
(Note that in the deterministic setting, this property is guaranteed
 for at least one choice of $(i_\iota)_{\iota \in J}$.)
In this case, for any such $\alpha$, we have $\cutlb^\alpha \leq \lambda_{G^\alpha+A^\alpha}(s,t) = |\corecutG{Z^\alpha}{G^\alpha+A^\alpha}| \leq |Z^\alpha| = k^\alpha$. 
For every $\iota \in J$, we randomly guess an integer $\cutlb_Z^\iota \leq \lambda_Z^\iota \leq k_Z^\iota$, aiming at $\lambda_Z^\iota = \lambda_{G^{\alpha_\iota} + A^{\alpha_\iota}}(s,t)$. 
The guess is correct with probability $2^{-\Oh(k \log k)}$,
and a straighforward branching replacement gives $2^{\Oh(k \log k)}$ subcases
  in the deterministic step.
We say that an excellent index $\alpha$ is \emph{superb} if $\lambda_{G^\alpha+A^\alpha}(s,t) = \lambda_Z^{\iota(\alpha)}$. 
Note that if our guess is correct, for every excellent $\alpha$, if $\block^\alpha$ is an actual block, then $\alpha$ is superb.

Return a set $A$ (or insert into the constructed set $\mathcal{A}$ in the deterministic setting)
consisting of:
\begin{enumerate}
\item for every superb $1 \leq \alpha \leq \xi$, the following edges:
\begin{enumerate}
\item 
  all edges of $A^\alpha$ with the following replacement:\label{edges:11}
\begin{itemize}
\item if $\leftblock^\alpha > 1$, replace every arc $(s,v) \in A^\alpha$ with arcs $(u,v)$ where $u$ ranges over all tails of edges of $C_{\leftblock^\alpha-1}$ on paths $P_i$ for $i \in D^\alpha$;
\item if $\rightblock^\alpha + \lambda \leq \ell$, replace every arc $(v,t) \in A^\alpha$ with arcs $(v,u)$ where $u$ ranges over all tails of edges of $C_{\rightblock^\alpha+\lambda}$ on paths $P_i$ for $i \in D^\alpha$;
\end{itemize}
\item for every $i \in [\lambda] \setminus D^\alpha$ an %
arc from the tail of the edge of $C_{\leftblock^\alpha-1} \cap E(P_i)$ (or $s$ if $\leftblock^\alpha=1$) to the tail of the edge of $C_{\rightblock^\alpha+\lambda} \cap E(P_i)$
(or $t$ if $\rightblock^\alpha+\lambda > \ell$);\label{edges:12}
\end{enumerate}
\item for every good but not superb $1 \leq \alpha \leq \xi$, the following edges:\label{edges:good}
  \begin{itemize}
  \item for every $i \in [\lambda]$, an edge from the tail of the edge of $C_{\leftblock^\alpha-1} \cap E(P_i)$ (or $s$ if $\leftblock^\alpha \leq 1$)
    to the tail of the edge of $C_{\rightblock^\alpha+\lambda} \cap E(P_i)$ (or $t$ if $\rightblock^\alpha+\lambda > \ell$). 
  \end{itemize}
\item for every $1 \leq a \leq \ell$ for which there is no good $\alpha$ such that the block $\block^\alpha = (\leftblock^\alpha < \ldots < \rightblock^\alpha)$ satisfies $\leftblock^\alpha \leq a \leq \rightblock^\alpha+1$:\label{edges:2}
\begin{itemize}
\item for every $i \in [\lambda]$, an edge from the tail of the edge of $C_a \cap E(P_i)$ to the tail of the edge of $C_{a+1} \cap E(P_i)$ (or $t$ if $a=\ell$);
\end{itemize}
\item for every $\Gamma$-milestone $a$, for every $\iota \in J$, for every $i,j \in L_Z^{\iota-1} \setminus L_Z^\iota$, 
an edge from the tail of the edge of $C_a \cap E(P_i)$ to the tail of the edge of $C_a \cap E(P_j)$. \label{edges:3}
\end{enumerate}

Accompany the set $A$ with a flow $\witnessflow$ of value $\sum_{\iota \in J} \lambda_Z^\iota$ 
constructed as follows.
Intuitively, for every $\iota \in [\eta]$, we push $\lambda_Z^\iota$ units of flow along the flow paths $P_i$, $i \in L_Z^{\iota-1} \setminus L_Z^\iota$.
We go from $s$ to $t$.
\begin{itemize}
\item For every superb $1 \leq \alpha \leq \xi$, denote $\iota := \iota(\alpha)$ and:
  \begin{itemize}
  \item
    push
    $\cutlb^\alpha = \cutlb_Z^\iota$ units of flow from $C_{\max(\leftblock^\alpha-1,1)}$ to $C_{\min(\rightblock^\alpha+\lambda,\ell)}$ using the flow $\witnessflow^\alpha$
  of value $\lambda_Z^\iota$ in $G^\alpha+A^\alpha$ (and arcs added in Point~\ref{edges:11}), using the arcs added in Point~\ref{edges:3} at milestones
  $\leftblock^\alpha-1$ (if $\leftblock^\alpha>1$) and $\rightblock^\alpha+\lambda$ (if $\rightblock^\alpha+\lambda \leq \ell$) to reshuffle the flow between flow paths $P_i$, $i \in L_Z^{\iota-1} \setminus L_Z^\iota$ in $C_{\leftblock^\alpha-1}$ and $C_{\rightblock^\alpha+\lambda}$, respectively;
  \item for every $\iota' \in J \setminus \{\iota\}$ push $\lambda_Z^{\iota'}$ units of flow 
    from $C_{\max(\leftblock^\alpha-1,1)}$ to $C_{\min(\rightblock^\alpha+\lambda,\ell)}$ using edges added in Point~\ref{edges:12}.
  \end{itemize}
\item For every good but not superb $1 \leq \alpha \leq \xi$,
  for every $\iota \in J$ push $\lambda_Z^\iota$ units of flow
  along the corresponding edges from $C_{\max(\leftblock^\alpha-1, 1)}$   to $C_{\min(\rightblock^\alpha+\lambda,\ell)}$ added in Point~\ref{edges:good}.
\item For every $1 \leq a \leq \ell$ that such that there is no good $\alpha$ with $\leftblock^\alpha-1 \leq a \leq \rightblock^\alpha+\lambda$, 
  for every $\iota \in J$ push $\lambda_Z^\iota$ units of flow
  along the corresponding edges added for $a$ and $i \in L_Z^{\iota-1} \setminus L_Z^{\iota}$ in Point~\ref{edges:2}.
\end{itemize}

Recall that we work under the assumption that all random guesses in this recursive call were correct for the cut $Z$ and, furthermore,
for every $\iota \in J$, the pair $(A^{\alpha_\iota},\witnessflow^{\alpha_\iota})$ is compatible with $Z^{\alpha_\iota}$ in $G^{\alpha_\iota}$.
  (In the deterministic setting, this is guaranteed for one of the choices of $(i_\iota)_{\iota \in J}$.)
The following two observations show correctness of the recursion above.
\begin{claim}\label{cl:Z-cut-size}
Let 
\[Z' := \bigcup_{\iota \in J} \corecutG{Z^{\alpha_\iota}}{G^{\alpha_\iota}+A^{\alpha_\iota}}.\]
Then $Z'$ is an $st$-cut in $G+A$. 
\end{claim}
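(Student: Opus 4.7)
My plan is to prove the claim by exhibiting an explicit bipartition $(X,Y)$ of $V(G+A)=V(G)$ with $s\in X$, $t\in Y$, such that every arc of $G+A$ going from $X$ to $Y$ belongs to $Z'$. Under the standing assumptions of this recursive step (all guesses correct and, for every $\iota\in J$, $(A^{\alpha_\iota},\witnessflow^{\alpha_\iota})$ is compatible with $Z^{\alpha_\iota}$), the set $\corecutG{Z^{\alpha_\iota}}{G^{\alpha_\iota}+A^{\alpha_\iota}}$ is an $st$-mincut in $G^{\alpha_\iota}+A^{\alpha_\iota}$. Let $(X^\iota,Y^\iota)$ be the induced bipartition of $V(G^{\alpha_\iota})$, with $s\in X^\iota$ and $t\in Y^\iota$. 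The idea is to stitch the $(X^\iota,Y^\iota)$ together along the cut sequence using the guessed downward-closed sets $L^0\supseteq L^1\supseteq\cdots\supseteq L^\xi$.

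Concretely, I would define $X$ as follows. For $v\in V(G)$: if $v\in V(G^{\alpha_\iota})\setminus\{s,t\}$ for some $\iota\in J$, put $v\in X$ iff $v\in X^\iota$. Otherwise, $v$ sits in a region ``between'' the ranges of superb blocks; I would assign $v\in X$ iff $v$ is on the $s$-side of $Z$ in the original $G$. By Claim~\ref{cl:excellent-reachability} the two definitions agree on $V(G^{\alpha_\iota})\cap V(G^{\alpha_{\iota'}})$ boundaries, and by Claims~\ref{cl:closure-locked} and~\ref{cl:paths-before-downward-closed} the ``between'' vertices lying on a flow path $P_i$ sandwiched between cuts $C_{\rightblock^{\alpha-1}+\lambda}$ and $C_{\leftblock^\alpha-1}$ are in $X$ precisely when $i\in L^{\alpha-1}$ (using the correctness of the guesses). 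Immediately $s\in X$ and $t\in Y$.

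The verification that each arc of $G+A$ from $X$ to $Y$ lies in $Z'$ proceeds by cases on the type of arc. For arcs of $G$: if both endpoints lie in a common $G^{\alpha_\iota}$ then the arc is in $\corecutG{Z^{\alpha_\iota}}{G^{\alpha_\iota}+A^{\alpha_\iota}}\subseteq Z'$ by definition; for arcs leaving a $G^{\alpha_\iota}$ I use that $A^{\alpha_\iota}$ is compatible with $Z^{\alpha_\iota}$ together with Claim~\ref{cl:excellent-reachability} to reduce to the previous case or rule it out; for arcs outside every $G^{\alpha_\iota}$ the bipartition is the $s$-side/$t$-side of $Z$, so any crossing arc is in $Z$, hence by Claim~\ref{cl:Z-local} it sits in some $Z_0^\alpha$, and (after checking it sits in $Z^\alpha$) belongs to $\corecut{Z^\alpha}$ via Claim~\ref{cl:project-Z}, hence in $Z'$. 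Claim~\ref{cl:no-escape} together with Claim~\ref{cl:paths-before-downward-closed} rules out the remaining cross-arcs by exhibiting contradictions with the downward-closedness of the relevant $L^\alpha$. For arcs of $A$, I would enumerate by construction: arcs of Point~\ref{edges:11} come from $A^{\alpha_\iota}$ and are handled like $G^{\alpha_\iota}$-arcs (the boundary replacements move only between vertices already in $X^\iota$ or already in $Y^\iota$); arcs of Points~\ref{edges:12} and~\ref{edges:good} are ``flow-path bypasses'' along paths $P_i$ whose side under $(X,Y)$ is determined solely by $i\in L^{\alpha-1}$ or $i\in L^\alpha$, and a bypass keeps both endpoints on the same side by monotonicity $L^{\alpha-1}\supseteq L^\alpha$; arcs of Point~\ref{edges:2} similarly bypass a single segment $[C_a,C_{a+1}]$ within a region of constant $L^\alpha$; finally, the reshuffling arcs of Point~\ref{edges:3} connect two tails of cut-edges on paths $P_i,P_j$ with $i,j\in L_Z^{\iota-1}\setminus L_Z^\iota$, and both such vertices are in $X$ (they lie in the $s$-side of $Z$ by definition of $L_Z^{\iota-1}$).

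The main obstacle I expect is the last case-check for arcs of $A$, in particular verifying that the ``replacement'' for arcs of $A^{\alpha_\iota}$ incident with $s$ or $t$ in $G^{\alpha_\iota}$ (described in Point~\ref{edges:11}) does not accidentally create an $X$-to-$Y$ crossing in $G$ that is not in $Z'$. The key observation there is that the replacement endpoints (heads of cut-edges of $C_{\leftblock^{\alpha_\iota}-1}$, respectively tails of cut-edges of $C_{\rightblock^{\alpha_\iota}+\lambda}$) on flow paths $P_i$ with $i\in D^{\alpha_\iota}$ all lie in $X$ on the ``left'' side and, symmetrically, in $Y$ on the ``right'' side, by Claim~\ref{cl:support-path-milestone} and Claim~\ref{cl:paths-before-downward-closed}; hence each replaced arc crosses the bipartition exactly in the same sense as the original arc of $A^{\alpha_\iota}$, and the proof concludes by appealing to the fact that $\corecutG{Z^{\alpha_\iota}}{G^{\alpha_\iota}+A^{\alpha_\iota}}\subseteq Z'$.
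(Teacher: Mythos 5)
Your approach is genuinely different in spirit from the paper's: you try to \emph{exhibit} an explicit bipartition $(X,Y)$ certifying that $Z'$ is an $st$-cut, whereas the paper argues by \emph{contradiction}, following a hypothetical $st$-path $Q$ in $(G+A)-Z'$ through the nested vertex separators $D_\iota$ (the tails of the cuts $C_{\rightblock^{\alpha_\iota}+\lambda}$) and ultimately projecting a subpath of $Q$ into some $G^{\alpha}+A^{\alpha}-Z^{\alpha}$ to contradict the compatibility of $(A^{\alpha},\witnessflow^{\alpha})$ with $Z^{\alpha}$. Duality aside, your bipartition as described does not actually certify the claim, because there can be arcs crossing from $X$ to $Y$ that do not lie in $Z'$.

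The concrete gap is in your treatment of ``arcs outside every $G^{\alpha_\iota}$.'' You write that any such crossing arc is in $Z$, hence by Claim~\ref{cl:Z-local} it sits in some $Z_0^\alpha$, and then ``belongs to $\corecut{Z^\alpha}$ via Claim~\ref{cl:project-Z}.'' Both steps lose cases. First, Claim~\ref{cl:Z-local} applies only to $Z\setminus\longbackarcs$; long backarcs of $Z$ are excluded, and the paper explicitly remarks after Claim~\ref{cl:long-backarcs} that $\longbackarcs\cap Z\neq\emptyset$ is possible. A long backarc $(v,u)\in Z$ with $v$ on the $s$-side and $u$ on the $t$-side of $Z$ is a crossing arc in your bipartition, yet it spans a milestone and therefore is not an arc of any $G^{\alpha_\iota}$, so it cannot belong to $Z'$. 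Second, even for an arc of $Z\setminus\longbackarcs$ that lands in some $Z_0^\alpha$, the identity of Claim~\ref{cl:project-Z} is $\corecut{Z^\alpha}=\corecut{Z}\cap Z^\alpha$; a crossing arc is only guaranteed to be in $Z$, not in $\corecut{Z}$ (and not necessarily in $Z^\alpha\subseteq Z_0^\alpha$, since $Z_0^\alpha\setminus Z^\alpha$ is nonempty in general), so it may well escape $Z'$. You also implicitly equate $X^\iota$ (the $s$-side of the mincut $\corecutG{Z^{\alpha_\iota}}{G^{\alpha_\iota}+A^{\alpha_\iota}}$, which is generally a \emph{strict superset} of the $s$-side of $Z^{\alpha_\iota}$) with the $s$-side of $Z$, but Claim~\ref{cl:excellent-reachability} only links the latter to the $s$-side of $Z$; this makes the ``stitching'' across the boundary of each $G^{\alpha_\iota}$ genuinely nontrivial and can, in principle, produce additional crossing arcs (from $X^\iota\setminus(\text{$s$-side of }Z)$ out of $G^{\alpha_\iota}$) that are not in $Z'$. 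The paper's by-contradiction argument sidesteps all of these issues by never needing a clean global bipartition: it only reasons locally about where the path $Q$ first crosses the separator $D_\iota$ and applies Claim~\ref{cl:no-escape} to constrain the prefix of $Q$, which is robust to long backarcs and to edges of $Z\setminus Z'$.
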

\begin{proof}
By contradiction, let $Q$ be an $st$-path in $G+A-Z'$.

For every $\iota \in J$, if $\rightblock^{\alpha_\iota}+\lambda \leq \ell$, let $D_\iota$ be the set of all tails of edges of $C_{\rightblock^{\alpha_\iota}+\lambda}$.
Note that $D_\iota$ forms a vertex separator from $s$ to $t$ in $G+A$, that is, any path from $s$ to $t$ in $G+A$ needs to contain a vertex of $D_\iota$.

Assume that for some $\iota \in J$ for which $D_\iota$ is defined (i.e., $\rightblock^{\alpha_\iota} + \lambda \leq \ell$)
the path $Q$ contains a vertex $v \in D_\iota$ that is a tail of an edge of $C_{\rightblock^{\alpha_\iota}+\lambda}$ on path $P_i$ where $i \notin L_Z^\iota$. 
Let $\iota$, $v$, and $i$ be such that $v$ is the first such vertex on $Q$. For brevity, denote $\alpha = \alpha_\iota$.

As $L_Z^\iota$ is downward-closed, Claim~\ref{cl:no-escape} implies that no vertex on $Q$ before $v$ (except for $s$) lies on a path $P_{i'}$ for $i' \in L_Z^\iota$.
In particular, by the choice of $v$, there is no vertex of $D_\iota$ before $v$ on $Q$.

If $\leftblock^{\alpha} \leq 1$, then the prefix of $Q$ until $v$ is present in $G^\alpha + A^\alpha - Z^\alpha$, a contradiction to the assumption that $A^\alpha$ is compatible with $Z^\alpha$.
Otherwise, let $D$ be the set of tails of edges of $C_{\leftblock^{\alpha}-1}$ and observe that $Q$ needs to pass a vertex of $D$ before $v$. 
Let $w$ be the last such vertex and $j \in [\lambda]$ be such that $w$ is a tail of an edge of $C_{\leftblock^{\alpha}-1} \cap E(P_j)$.

Assume first that $j \notin L_Z^{\iota-1}$. Since $L_Z^0 = [\lambda]$ we have $\iota > 1$. 
The path $Q$ visits an edge of $C_{\rightblock^{\alpha_{\iota-1}} + \lambda}$ before $w$; let $v'$ be the tail of the last such edge before $w$ and assume this edge lies on path $P_{j'}$.
By the choice of $v$, $j' \in L_Z^{\iota-1}$. However, the subpath of $Q$ from $v'$ to $w$ contradicts Claim~\ref{cl:no-escape}.
Hence, we have $w \in L_Z^{\iota-1}$. Since $L_Z^{\iota-1}$ is downward-closed, Claim~\ref{cl:no-escape} implies that $i,j \in L_Z^{\iota-1} \setminus L_Z^\iota = D^{\alpha}$. 

Consider now the subpath of $Q$ from $w$ to $v$. By the choice of $w$, this subpath does not contain any vertex in the $s$-side of $C_{\leftblock^{\alpha_\iota}-1}$.
We have already established that $v$ is the first vertex on $Q$ that is a tail of an edge of $C_{\rightblock^{\alpha_\iota}+\lambda}$. 
We infer that the subpath of $Q$ from $w$ to $v$ is contained in the $t$-side of $C_{\leftblock^{\alpha_\iota}-1}$ and in the $s$-side of $C_{\rightblock^{\alpha_\iota}+\lambda}$ except for the last edge.
Hence, it projects to a path
from $s$ to $t$ in $G^\alpha+A^\alpha -Z^\alpha$, a contradiction to the assumption that $A^\alpha$ is compatible with $Z^\alpha$.
We infer that such a vertex $v$ does not exist. 

As $L_Z^\eta = \emptyset$, this is only possible if $\eta \in J$ and $\rightblock^{\alpha_\eta} + \lambda > \ell$. For brevity, denote $\alpha = \alpha_\eta$.
Since $\ell > \ellthreshold$, we have $\leftblock^{\alpha} > 1$. 
Let $D$ be the set of tails of edges of $C_{\leftblock^{\alpha}-1}$ and observe that $Q$ needs to pass a vertex of $D$.
Let $w$ be the last such vertex and assume $w$ is a tail of an edge of $C_{\leftblock^{\alpha}-1} \cap E(P_i)$ for some $i \in [\lambda]$.

Assume first that $i \notin L_Z^{\eta-1}$. Similarly as before, this implies $\eta > 1$ and, by Claim~\ref{cl:no-escape}, $Q$ visits an edge of $C_{\rightblock^{\alpha_{\eta-1}}+\lambda}$ on a path $P_{j'}$ with $j' \notin L_Z^{\eta-1}$.
But then $v$ and $\iota$ would have been defined, a contradiction. Hence, $i \in L_Z^{\eta-1}$.
By the choice of $w$, the subpath of $Q$ from $w$ to $t$ lies in the $t$-side of $C_{\leftblock^{\alpha}-1}$ and hence it projects to a path from $s$ to $t$ in $G^{\alpha}+A^{\alpha}-Z^{\alpha}$, 
a contradiction to the assumption that $A^{\alpha}$ is compatible with $Z^{\alpha}$. 
This finishes the proof of the claim.
\end{proof}
\begin{claim}\label{cl:Z-lifts}
$(A,\witnessflow)$ is compatible with $Z$ in $G$. 
\end{claim}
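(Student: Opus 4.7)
The plan is to verify the three conditions of compatibility in turn: (i) $A$ is compatible with $Z$ in $G$, (ii) $\corecutG{Z}{G+A}$ is an $st$-mincut in $G+A$, and (iii) $\witnessflow$ is a witnessing flow for $Z$ in $G+A$. Throughout, we work under the assumption that all random guesses align with $Z$, so that for every $\iota \in J$ the pair $(A^{\alpha_\iota},\witnessflow^{\alpha_\iota})$ is compatible with the star $st$-cut $Z^{\alpha_\iota}$ (Claim~\ref{cl:project-Z}), every $\Gamma$-block is an actual block, and every good block is superb (so Point~\ref{edges:good} contributes no arcs).

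For (i) we case-split on the type of added arc. The lifted $A^\alpha$-arcs of Point~\ref{edges:11} at superb blocks are compatible because Claim~\ref{cl:excellent-reachability} shows the $s$/$t$-sides of $Z$ restricted to $V(G^\alpha)$ coincide with those of $Z^\alpha$, so compatibility of $A^\alpha$ with $Z^\alpha$ transfers directly. All other added arcs have their two endpoints equal to tails of cut edges $C_a \cap E(P_i)$ at indices $a$ that, under the correct $\Gamma$-guess, are either true milestones or singleton untouched indices; in either case no edge of $P_i$ in the slice bordering $a$ lies in $Z$, so each such tail is on the same side of $Z$ as the corresponding head, and that side is determined by whether $i\in L_a$ (using Claims~\ref{cl:support-path-milestone} and~\ref{cl:paths-before-downward-closed}). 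For a Point~\ref{edges:12} bypass arc with $i\in [\lambda]\setminus D^\alpha$, writing $\iota=\iota(\alpha)$, the two tails sit at the milestones $\leftblock^\alpha-1$ and $\rightblock^\alpha+\lambda$ with $L$-values $L_Z^{\iota-1}$ and $L_Z^\iota$; the index $i$ is either in $L_Z^\iota\subseteq L_Z^{\iota-1}$ (both tails in the $s$-side) or outside $L_Z^{\iota-1}\supseteq L_Z^\iota$ (both in the $t$-side). For a Point~\ref{edges:2} arc crossing an untouched $a$, no edge of $P_i$ between $C_a$ and $C_{a+1}$ is in $Z$, so the two tails agree in side. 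For a Point~\ref{edges:3} reshuffling arc at a $\Gamma$-milestone $a$, Claim~\ref{cl:closure-reached} gives $L_a = L_Z^{\iota_0}$ for the unique $\iota_0$ indexing the last actual block strictly before $a$; nestedness of the chain $(L_Z^\iota)_\iota$ makes each shell $L_Z^{\iota-1}\setminus L_Z^\iota$ either fully inside $L_Z^{\iota_0}$ (if $\iota>\iota_0$) or fully disjoint from it (if $\iota\le\iota_0$), so all tails involved lie on a common side. Hence (i) holds, and by Lemma~\ref{lem:star-cut-stays} $Z$ remains a star $st$-cut in $G+A$.

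For (ii) and (iii) we first check that $\witnessflow$ is a valid $st$-flow in $G+A$ of value $\sum_{\iota\in J}\lambda_Z^\iota$: at each superb block $\alpha$, the class $\iota(\alpha)$ consumes the $\lambda_Z^{\iota(\alpha)}$ units of $\witnessflow^\alpha$ while every other class $\iota'\ne\iota(\alpha)$ uses dedicated Point~\ref{edges:12} bypass arcs, and Point~\ref{edges:3} arcs reshuffle each class at milestones; this keeps all classes edge-disjoint. By Claim~\ref{cl:Z-cut-size}, the set $Z' := \bigcup_{\iota\in J}\corecutG{Z^{\alpha_\iota}}{G^{\alpha_\iota}+A^{\alpha_\iota}}\subseteq Z$ is an $st$-cut in $G+A$ of size exactly $\sum_\iota \lambda_Z^\iota = |\witnessflow|$, so $\witnessflow$ is an $st$-maxflow and $Z'$ is an $st$-mincut. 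Because each $\witnessflow^{\alpha_\iota}$ is a witnessing flow in $G^{\alpha_\iota}+A^{\alpha_\iota}$ and the added arcs of $A$ (not belonging to $G$) are not in $Z$, every path of $\witnessflow$ crosses $Z$ in exactly one edge lying inside some block $\alpha_\iota$, and $E(\witnessflow)\cap Z = Z'$.

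It remains to identify $Z'$ with $\corecutG{Z}{G+A}$. By Lemma~\ref{lem:star-cut-stays}, $\corecutG{Z}{G+A}$ is a minimal $st$-cut in $G+A$ contained in $Z$, so the chain $|\corecutG{Z}{G+A}| \le |Z'| = \lambda_{G+A}(s,t) \le |\corecutG{Z}{G+A}|$ gives equality of sizes. For each $e = (u,v)\in \corecutG{Z^{\alpha_\iota}}{G^{\alpha_\iota}+A^{\alpha_\iota}}$, we lift a witnessing $v$-to-$t$ path in $G^{\alpha_\iota}+A^{\alpha_\iota}-Z^{\alpha_\iota}$ to a $v$-to-$t$ walk in $G+A$ via the Point~\ref{edges:11} replacements (possibly using Point~\ref{edges:3} reshuffles): internal $G$-edges on the lifted walk lie in $V(G^{\alpha_\iota})\setminus\{s,t\}$ and any such edge belonging to $Z$ would lie in $Z^{\alpha_\iota}$, contradicting the choice of path; all substituted arcs are in $A\setminus E(G)$. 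So the lifted walk lies in $G+A-Z$, yielding $e\in \corecutG{Z}{G+A}$; thus $Z'\subseteq \corecutG{Z}{G+A}$ and equality of cardinalities upgrades this to $Z' = \corecutG{Z}{G+A}$. This establishes (ii), and (iii) follows immediately since $E(\witnessflow)\cap Z = Z' = \corecutG{Z}{G+A}$. The main obstacle is the tail-side analysis in (i), especially for the Point~\ref{edges:3} reshuffling arcs, which hinges on the identification $L_a\in\{L_Z^0,\ldots,L_Z^\eta\}$ at $\Gamma$-milestones and on milestone cut edges being disjoint from $Z$.
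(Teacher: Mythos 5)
Your proof works from a faulty premise. You claim that, under correct random guesses, ``every $\Gamma$-block is an actual block, and every good block is superb (so Point~\ref{edges:good} contributes no arcs).'' Neither of these is implied by a correct guess. The aimed-at property of $\Gamma$ is only that \emph{interesting} indices are classified correctly; any non-interesting index still lands in $\Gamma$ with probability $k^{-1}$, producing $\Gamma$-blocks that are not actual blocks. Likewise, the color-coding of sets $L^\alpha$ is only aimed to be correct around actual blocks, so a non-actual $\Gamma$-block can appear good or even excellent and superb. The algorithm cannot distinguish these from actual blocks, and it emits edges for all of them. Consequently your case analysis for (i) is incomplete: for Point~\ref{edges:11} you need the observation (used by the paper) that when $\block^\alpha$ is not an actual block, no index between $C_{\leftblock^\alpha-1}$ and $C_{\rightblock^\alpha+\lambda}$ is touched, so every vertex of $V(G^\alpha)\setminus\{s,t\}$ lies entirely on one side of $Z$; for Point~\ref{edges:12} you need $L_{\leftblock^\alpha-1}=L_{\rightblock^\alpha+\lambda}$ for non-actual $\block^\alpha$ (a consequence of Claim~\ref{cl:L-decrease}); and Point~\ref{edges:good} genuinely contributes arcs, whose compatibility must be argued (the paper uses $L_{\leftblock^\alpha-1}=L^{\alpha-1}=L^\alpha=L_{\rightblock^\alpha+\lambda}$ in that case).

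There is also a smaller gap in part (ii). You assert the chain $|\corecutG{Z}{G+A}| \le |Z'| = \lambda_{G+A}(s,t) \le |\corecutG{Z}{G+A}|$, but the leftmost inequality has no justification --- minimality of $\corecutG{Z}{G+A}$ does not bound its size by that of another $st$-cut. What actually closes the argument, and what the paper implicitly uses, is that for a star $st$-cut $Z$ the core $\corecutG{Z}{G+A}$ is the \emph{unique} minimal $st$-cut contained in $Z$: for any $st$-cut $Y\subseteq Z$ and any $(u,v)\in\corecutG{Z}{G+A}$, concatenating an $s\to u$ path and a $v\to t$ path avoiding $Z$ (which exist by the star property and the core definition) shows $(u,v)\in Y$. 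Since $Z'\subseteq Z$ is an $st$-mincut, hence minimal, $Z'=\corecutG{Z}{G+A}$ follows immediately, without the lifting argument. Your lifting argument is a valid but more laborious alternative to establish $Z'\subseteq\corecutG{Z}{G+A}$; it still needs the uniqueness (or an honest reverse inclusion) to finish.
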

\begin{proof}
We first check if $A$ is compatible with $Z$ in $G$, that is, if no edge of $A$ goes from the $s$-side of $Z$ to the $t$-side of $Z$. 

For edges added in Point~\ref{edges:11}, consider two cases.
If $\block^\alpha$ is an actual block, it suffices to recall Claim~\ref{cl:excellent-reachability}: a vertex $v \in V(G^\alpha)$ is in the $s$-side of $Z^\alpha$
if and only if it is in the $s$-side of $Z$ in $G$ and,
furthermore for $i \in D^\alpha$ the endpoints of the edge of $E(P_i) \cap C_{\leftblock^\alpha-1}$ are in the $s$-side of $Z$ (if $\leftblock^\alpha > 1$)
and the endpoint of the edge of $E(P_i) \cap C_{\rightblock^\alpha+\lambda}$ (if $\rightblock^\alpha+\lambda \leq \ell$) are in the $t$-side of $Z$. 
If $\block^\alpha$ is not an actual block, we observe that all vertices of $V(G^\alpha) \setminus \{s,t\}$ lie either entirely in the $s$-side of $Z$ or entirely in the $t$-side of $Z$
due to the fact that we kept in $G^\alpha$ only vertices reachable from $s$ and from which one can reach $t$ in $G_0^\alpha$. 

For edges added in Point~\ref{edges:12}, we first observe
that for every $i \in [\lambda] \setminus D^\alpha$ it holds that $i \in L_{\leftblock^\alpha-1}$ if and only if $i \in L_{\rightblock^\alpha+\lambda}$:
if $\block^\alpha$ is an actual block it follows from the assumption that $L_{\leftblock^\alpha-1} = L^{\alpha-1}$ and $L_{\rightblock^\alpha+\lambda} = L^\alpha$
and otherwise we have $L_{\leftblock^\alpha-1} = L_{\rightblock^\alpha+\lambda}$. 
Hence, as both $\leftblock^\alpha-1$ (if $\leftblock^\alpha > 1$) and $\rightblock^\alpha+\lambda$ (if $\rightblock^\alpha+\lambda \leq \ell$) are milestones, 
we have that any edge added in Point~\ref{edges:12} lie on the same side of $Z$. 

For edges added in Point~\ref{edges:good}, note that we have $L_{\leftblock^\alpha-1} = L^{\alpha-1} = L^\alpha = L^{\rightblock^\alpha+\lambda}$, so an edge added for $i \in L^\alpha$
connects two vertices in the $s$-side of $Z$ and for $i \notin L^\alpha$ connects two vertices in the $t$-side of $Z$.

For edges added in Point~\ref{edges:2}, note that any such $a$ is untouched and hence for every $i \in [\lambda]$, the entire subpath of $P_i$ from the edge of $C_a$
to the edge of $C_{a+1}$ (or $t$ if $a=\ell$) lies in the same side of $Z$. Hence, again for any edge added in Point~\ref{edges:2}, both its endpoints are in the same side of $Z$.

Finally, for edges added in Point~\ref{edges:3}, the claim follows from the fact that every $\Gamma$-milestone $a$ is a milestone, and hence $a$ is untouched and $L_a \in \{L_Z^0, \ldots, L_Z^\eta\}$. 
Thus, both endpoints of an edge added in Point~\ref{edges:3} lies in the same side of $Z$. 

This concludes the proof that $A$ is compatible with $Z$ in $G$. 
In particular, $Z$ is a star $st$-cut in $G+A$. 

Claim~\ref{cl:Z-cut-size} asserts that $|Z'|$ is an $st$-cut in $G+A$ of cardinality $|\witnessflow|$. Hence, $\witnessflow$ is an $st$-maxflow in $G+A$ and $Z'$ is an $st$-mincut in $G+A$.
As $Z' \subseteq Z$, we have that $Z$ is a star $st$-cut and $Z' = \corecutG{Z}{G+A}$. 

It remains to check  that $\witnessflow$ is a witnessing flow for $Z$ in $G+A$.
Note that the only edges of $\witnessflow$ that are not in $A$ are edges inside subinstances $G^\alpha$ from recursive calls for superb $\alpha$.
If $\block^\alpha$ is an actual block, then $\witnessflow$ does not use any edge of $Z \setminus Z'$ in $G^\alpha$ as $\witnessflow^\alpha$ is a witnessing flow for $Z^\alpha$ in $G^\alpha+A^\alpha$.
If $\block^\alpha$ is not an actual block, there is no edge of $Z$ in $G^\alpha$.
This finishes the proof of the claim.
\end{proof}

\subsection{Probability analysis}\label{ss:prob-analysis}

Here we focus on the randomized setting only.

Fix a star $st$-cut $Z$ of $|Z| \leq k$.
We want to lower bound the probability that the recursive algorithm finds a pair $(A,\witnessflow)$ compatible with $Z$.
We say that the algorithm is \emph{correct} at some random step if it correctly guesses the properties of $Z$.

In the preprocessing phase, the algorithm either directly returns a desired pair (without any random choice), recurses on an instance with 
smaller value of $2k-\lambda$ (being correct with probability $\Omega(k^{-1})$, or recurses on an instance with the same value of $k$ and $\lambda$
and, being properly boundaried (with probability $\Omega(1)$). 

In the base case $|E(H)| = |V(H)|$, we either return a correct outcome or recurse on an instance with larger $\lambda$ (making correct choices with probability $\Omega(\lambda^{-1}) = \Omega(k^{-1})$).

In the case of $\ell \leq \ellthreshold$, the algorithm makes correct guesses about sets $B_\leftarrow$ and $B_\rightarrow$ with probability at least $2^{-\Oh(k \log k)}$. 
If $\lambda_{G+A_0}(s,t) > \lambda_{G}(s,t)$ at this point, the algorithm recurses on an instance with smaller value of $2k-\lambda$. 
If a vertex of $V(C)$ is in the $s$-side of $Z$, where $C$ is the $st$-mincut of $G+A_0$ closest to $t$, 
the algorithm makes correct guesses with probability $\Omega(k^{-1})$ and again recurses on an instance with a smaller value of $2k-\lambda$.
If $V(C)$ is contained in the $t$-side of $Z$, the algorithm guess so with probability $\Omega(1)$ and recurses on an instance with the same $k$ and $\lambda$, but smaller $|E(H)|$.
Note that $|E(H)|$ can decrease (while keeping $k$ and $\lambda$ the same) less than $k^2$ times.

In the case of $\ell > \ellthreshold$, recall the algorithm makes correct guesses with probability $2^{-\Oh(k \log k)}$. 
Recall also that the algorithm returns a correct output if it makes correct guesses and the following holds: in every $\Gamma$-block $\block^\alpha$ that is an actual block and $\alpha$ is superb, 
the returned pair $(A^\alpha,\witnessflow^\alpha)$ from the recursive call is compatible with $Z^\alpha$ in $G^\alpha$.
By Claim~\ref{cl:progress}, there is either one such recursive call (if $L_Z^{\iota-1} \setminus L_Z^\iota = [\lambda]$ for some $1 \leq \iota \leq \eta$)
with the same $k$, $\lambda$, $H$, but falling into case $\ell \leq \ellthreshold$, 
or a number of such calls with strictly smaller values $2k^\alpha-|\flow^\alpha|$ summing up to at most $2k-|\flow|$. 

Hence, the depth of the recursion is $\Oh(k^3)$, as $2k-\lambda$ can decrease only $2k$ times and, between these decreases, $|E(H)|$ can decrease less than $k^2$ times.
Furthermore, in the entire recursion tree we care only about being correct on $\Oh(k^3)$ recursive calls, as even if in the last case there are multiple subcalls $(G^\alpha,\flow^\alpha)$
with $Z^\alpha \neq \emptyset$, the values of $2k^\alpha-|\flow^\alpha|$ for these calls sum up to at most $2k-|\flow|$. 
At each recursive call, we make correct guesses with probability $2^{-\Oh(k \log k)}$. 
Hence, the overall success probability is $2^{-\Oh(k^4 \log k)}$, as desired.

\subsection{Branching analysis}\label{ss:det-analysis}

Here we focus on the deterministic setting only.

We want to upper bound the number of elements output by the algorithm.
In the preprocessing phase, the algorithm either directly returns, or branches
into $\Oh(k)$ instance with 
smaller value of $2k-\lambda$ and a single instance with the same value of $k$ and $\lambda$
and, being properly boundaried.

In the base case $|E(H)| = |V(H)|$, we insert one element into $\mathcal{A}$
in one branch and furthermore branch into $\lambda \leq k$ instances with larger $\lambda$.

In the case of $\ell \leq \ellthreshold$, the algorithm branches into $2^{\Oh(k \log k)}$
choices for the sets $B_\leftarrow$ and $B_\rightarrow$.
In some branches $\lambda_{G+A_0}(s,t) > \lambda_{G}(s,t)$ or an edge of the sought cut $Z$ is found
and the algorithm recurses on an instance with smaller value of $2k-\lambda$.
In one branch, the algorithm recurses
on an instance with the same values of $k$ and $\lambda$, but smaller number of edges of $H$.
Then, the algorithm branches in the $s$-/$t$-assignment of the vertices of $V(C)$. 
In all but one branches where a vertex of $V(C)$ is in the $s$-side of $Z$, where $C$ is the $st$-mincut of $G+A_0$ closest to $t$, 
the algorithm again recurses on an instance with a smaller value of $2k-\lambda$.
In the final branch when $V(C)$ is contained in the $t$-side of $Z$, 
the algorithm recurses on an instance with the same $k$ and $\lambda$, but smaller $|E(H)|$.
Note that $|E(H)|$ can decrease (while keeping $k$ and $\lambda$ the same) less than $k^2$ times.

In the case of $\ell > \ellthreshold$, recall the algorithm takes
$2^{\Oh(k \log k)} \cdot \Oh(\log^2 n)$ branches
and, furthermore, iterates over all choices of $(i_\iota)_{\iota \in J}$ where
$1 \leq i_\iota \leq N_\iota$ for every $\iota \in J$. 
By Claim~\ref{cl:progress},
we have either $|J| = 1$ and the recursive calls for excellent indices $\alpha$
has the same $k$, $\lambda$, $H$, but falling into case $\ell \leq \ellthreshold$, 
or $|J| > 2$
  and 
\begin{equation}\label{eq:det}
  \sum_{\iota \in J} 2k_Z^\iota-|L_Z^{\iota-1} \setminus L_Z^\iota| \leq 2k-\lambda.
\end{equation}
  Note that $k^\alpha = k_Z^\iota$ and $\lambda^\alpha = |D^\alpha| = |L_Z^{\iota-1} \setminus L_Z^\iota|$ for excellent $\alpha$ with $\iota(\alpha) = \iota$.

Hence, the depth of the recursion is $h = \Oh(k^3)$, as $2k-\lambda$ can decrease only $2k$ times and, between these decreases, $|E(H)|$ can decrease less than $k^2$ times.
Consequently, by standard bottom-up induction using~\eqref{eq:det},
  the set $\mathcal{A}$ returned at depth $d$
of the recursion has size bounded by $2^{\Oh((2k-\lambda) (h-d) \log k)} \cdot \log^{2(h-d)} n$.
The desired bound
on the running time and the final size of the returned family $\mathcal{A}$ follows.

\section{Applications}
\subsection{\textsc{Weighted Bundled Cut} with pairwise linked deletable edges}

Recall that
an instance of \textsc{Bundled Cut} consists of a directed multigraph $G$, vertices $s,t \in V(G)$, a nonnegative integer $k$, and a family $\bundles$ of pairwise disjoint subsets of $E(G)$.
An element $B \in \bundles$ is called a \emph{bundle}. An edge that is part of a bundle is
\emph{soft}, otherwise it is \emph{crisp}.
A \emph{cut} in a \textsc{Bundled Cut} instance $\inst = (G,s,t,k,\bundles)$ is an $st$-cut $Z$ that does not contain any crisp edge, that is, $Z \subseteq \bigcup \bundles$. 
A cut $Z$ \emph{touches} a bundle $B \in \bundles$ if $Z \cap B \neq \emptyset$. 
The \emph{cost} of a cut $Z$ is the number of bundles it touches.
A cut $Z$ is a \emph{solution} if its cost is at most $k$. 
The \textsc{Bundled Cut} problem asks if there exists a solution to the input instance. 

An instance of \textsc{Weighted Bundled Cut} consists of a \textsc{Bundled Cut} instance $\inst = (G,s,t,k,\bundles)$ and additionally a weight function $\weight : \bundles \to \mathbb{Z}_+$ and an integer $W \in \mathbb{Z}_+$. 
The \emph{weight} of a cut $Z$ in $\inst$ is the total weight of all bundles it touches. 
A solution $Z$ to $\inst$ is a solution to the \textsc{Weighted Bundled Cut} instance $(\inst,\weight,W)$ if additionally the weight of $Z$ is at most $W$. 
The \textsc{Weighted Bundled Cut} problem asks if there is a solution to the input instance.
Note that any \textsc{Bundled Cut} instance can be treated as a \textsc{Weighted Bundled Cut} instance by setting $\weight$ uniformly equal $1$ and $W=k$. 

An edge $e \in E(G)$ is \emph{deletable} if it is soft and there is no parallel arc to $e$
that is crisp, and \emph{undeletable} otherwise.
An instance $(G,s,t,k,\bundles)$ of \textsc{Bundled Cut}
has \emph{pairwise linked deletable edges} if for every
$B \in \bundles$ and every two deletable edges $e_1,e_2 \in B$, there exists a path in $G$
from an endpoint of $e_1$ to an endpoint of $e_2$ that uses only edges of $B$
and undeletable edges. 

In this subsection we prove the following theorem.
\begin{theorem}\label{thm:bundled-cut}
\textsc{Weighted Bundled Cut}, restricted to instances with pairwise linked deletable
edges, can be solved in time $2^{\Oh(k^4 d^4 \log(kd))} n^{\Oh(1)}$, 
where $d$ is the maximum number of deletable edges in a single bundle.
\end{theorem}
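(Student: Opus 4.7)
The plan is to invoke directed flow-augmentation to make the (hypothetical) optimal solution a minimum $st$-cut, then apply a color-coding step to align flow paths with bundles, and finally solve the resulting restricted problem by a polynomial-time subroutine that exploits the pairwise linked deletable edges condition.

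First, observe that any feasible solution $Z \subseteq \bigcup \bundles$, viewed as an edge set, satisfies $|Z| \leq kd$: it touches at most $k$ bundles and each bundle contributes at most $d$ deletable edges. Hence I apply Theorem~\ref{thm:dir-flow-augmentation} (or its deterministic counterpart Theorem~\ref{thm:dir-flow-augmentation-det}) with parameter $k' := kd$, obtaining a pair $(A,\witnessflow)$. With probability $2^{-\Oh((kd)^4 \log(kd))} = 2^{-\Oh(k^4 d^4 \log(kd))}$, the unknown minimum-weight solution $Z^*$ becomes a minimum $st$-cut in $G+A$ with witnessing flow $\witnessflow = \{P_1,\ldots,P_\lambda\}$, $\lambda \leq kd$; in particular, each flow path $P_i$ contains exactly one edge of $Z^*$.

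Next, I apply a color-coding step: sample a function $c : [\lambda] \to [k]$ uniformly at random, aiming at $c(i) = c(j)$ if and only if the unique edges $Z^* \cap E(P_i)$ and $Z^* \cap E(P_j)$ lie in the same bundle. The success probability is at least $k^{-\lambda} \geq 2^{-\Oh(kd \log k)}$, which is absorbed into the dominant flow-augmentation overhead; the derandomization is standard via Theorem~\ref{thm:color-coding}. Assuming the guess is correct, the coloring induces a partition $T_1,\ldots,T_q$ of $[\lambda]$ with $q \leq k$ and $|T_j| \leq d$, such that the bottleneck edges of $Z^*$ on the flow paths $\{P_i : i \in T_j\}$ all come from a single bundle $B_j$. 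The remaining task is then to find, for each color class $T_j$, the cheapest bundle $B_j$ together with a choice of one deletable edge per path $P_i$, $i \in T_j$, such that globally the selected edges form an $st$-cut of $G+A$ of total bundle-weight at most $W$. The pairwise linked property is meant to reduce this per-class search to a polynomial-time subroutine: intuitively, it forces the deletable edges of a single bundle that appear in $Z^*$ to form a rigid, ``locally consecutive'' pattern with respect to $\witnessflow$ and the residual DAG of $G+A$, thereby avoiding a blow-up over bundle choices.

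The main obstacle is this last step. Without the pairwise linking assumption, matching a set of flow paths to a single bundle seems to require costly enumeration, since a bundle may contain many deletable edges distributed across different flow paths of $\witnessflow$. The assumption, by connecting any two deletable edges of a bundle via a path of edges in $B$ together with undeletable edges, should translate into a structural guarantee that the per-class subproblem decouples (up to bounded bookkeeping) and can be solved either by a direct min-cut or min-cost-flow computation in $G+A$, or by enumeration localized within each bundle. Formalizing this rigidity---and thereby avoiding a naive $|\bundles|^{k}$ enumeration over bundle choices---is the core technical challenge in this proof. Given such a polynomial-time subroutine, iterating $2^{\Oh(k^4 d^4 \log(kd))}$ times for amplification (or using the deterministic variant throughout) yields the claimed running time of $2^{\Oh(k^4 d^4 \log(kd))} n^{\Oh(1)}$.
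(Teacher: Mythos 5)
Your setup is right and matches the paper's opening moves: bound $|Z|\leq kd$, invoke flow-augmentation with parameter $kd$ (which accounts for the dominant $2^{\Oh(k^4d^4\log(kd))}$ overhead), and guess, for each flow path $P_i$, which of the (at most $k$) violated bundles contains the unique cut edge $f_i = Z\cap E(P_i)$. But you explicitly stop short of the step you yourself identify as ``the core technical challenge,'' namely how to avoid a $|\bundles|^{k}$ enumeration over candidate bundles per color class, and the paragraph where you gesture at the pairwise-linked property producing ``rigidity'' is not an argument. That step is the actual content of the theorem, so this is a genuine gap rather than a detail to be filled in.

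What the paper does at that point is the following. After the $\alpha(i)$ guess, it performs two \emph{further} color-coding steps you do not have: guessing a label $\gamma(B)\in[\kappa]$ for every bundle $B$ (so that the $\kappa$ truly violated bundles receive distinct labels), and then guessing, for every bundle $B$ and every flow path $P_i$ with $\alpha(i)=\gamma(B)$, a specific candidate deletable edge $e(B,i)\in B$ aiming at $f_i = e(B_{\alpha(i)},i)$; all other edges of $B$ are made undeletable. It then uses the pairwise-linked property concretely: for every ordered pair $(i_1,i_2)$ of flow paths mapped to the same bundle index $j$, exactly one of four ``relation types'' holds between $f_{i_1}$ and $f_{i_2}$ (one has tail at $s$, one has head at $t$, or there is a path inside $B_j$ plus undeletable edges from an endpoint of $f_{i_1}$ to one of $f_{i_2}$, or the reverse). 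This type is guessed per pair, giving an additional $2^{\Oh(k^3 d^2)}$ overhead. The crucial filtering step then uses the guessed type to discard any \emph{crossing} pair of candidate bundles $B,B'$ with $\gamma(B)=\gamma(B')=j$: if $e(B,i_1)$ precedes $e(B',i_1)$ on $P_{i_1}$ but $e(B,i_2)$ follows $e(B',i_2)$ on $P_{i_2}$, and the guessed relation for $(j,i_1,i_2)$ is ``path from an endpoint of $f_{i_1}$ to an endpoint of $f_{i_2}$,'' then $B'$ cannot be the true violated bundle---otherwise the internal path of $B$ together with a prefix of $P_{i_1}$ and a suffix of $P_{i_2}$ would yield an $st$-path avoiding $Z$. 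After exhaustive filtering, the surviving candidate bundles with $\gamma(B)=j$ are linearly ordered consistently across all flow paths $P_i$ with $\alpha(i)=j$. This linear order is what enables the final polynomial-time step: an auxiliary weighted digraph $H$ is built with one $s$--$t$ path per $j\in[\kappa]$ whose $a$-th edge represents the $a$-th candidate bundle, plus infinity-capacity arcs encoding reachability in $G$ between endpoints of the candidate edges; a minimum-weight $st$-cut of cardinality $\kappa$ in $H$ then recovers the optimal solution. Without the filtering argument and the resulting linear ordering, no such polynomial-time subroutine is available, and the proof does not go through.
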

\begin{proof}
We can restrict ourselves to search for cuts $Z$ that contain
only deletable edges and $|Z| \leq kd$.
We invoke the deterministic version of the directed flow-augmentation for $(G,s,t,kd)$. 
By iterating over the resulting sets $A$,
we can end up with an instance where $Z$ is actually an $st$-mincut. 
Thus, by somehow abusing the notation, 
  we assume that $Z$ is already an $st$-mincut in $\inst$.
Let $\witnessflow = \{P_1,P_2,\ldots,P_\lambda\}$ be any maximum flow in $G$
for $\lambda = |Z| \leq kd$.

Note that any $st$-mincut $Z$ contains exactly one deletable edge on 
every path $P_i$. Let $f_i$ be the unique edge of $E(P_i) \cap Z$ for a fixed
hypothethical solution $Z$.
We perform the following branching and color coding steps.

First, branch by guessing the number $\kappa \leq k$ of bundles violated by $Z$;
let $B_1,B_2,\ldots,B_\kappa$ be the violated bundles. 
For every $1 \leq i \leq \lambda$ we guess the index $\alpha(i) \in [\kappa]$
such that $f_i$ lies in $B_{\alpha(i)}$; this gives $2^{\Oh(k \log (kd))}$ subcases.
For every $B \in \bundles$, we guess $\gamma(B) \in [\kappa]$, aiming
at $\gamma(B_j) = j$ for $1 \leq j \leq \kappa$. 
Using Theorem~\ref{thm:color-coding}, this can be done by branching into $2^{\Oh(k \log k)} \cdot \Oh(\log n)$ cases. 
For every $B \in \bundles$ and every $1 \leq i \leq \lambda$
such that $\alpha(i) = \gamma(B)$, we guess a deletable edge $e(B,i) \in B$, aiming
at $f_i = e(B_{\alpha(i)},i)$.
Again using Theorem~\ref{thm:color-coding}, this can be done by branching into $2^{\Oh(k^2 d \log(kd))} \cdot \Oh(\log n)$ cases.
For every $B \in \bundles$, we make every edge of $B$ that is not an edge
$e(B,i)$ for some $i \in [\lambda]$ undeletable (i.e., we add a crisp copy if there is none).

We make a sanity check: we expect that every edge $e(B,i)$ actually lies on the path $P_i$
and is a deletable edge; otherwise we delete the bundle $B$ (making all its edges crisp).

Finally, we make the following guessing step. For every $1 \leq j \leq \kappa$
and every two distinct $1 \leq i_1,i_2 \leq \lambda$ such that $\alpha(i_1) = \alpha(i_2) = j$, we guess the relation between $f_{i_1}$ and $f_{i_2}$ in the bundle $B_j$
according to the definition of having pairwise linked deletable arcs.
That is, note that one of the following holds:
\begin{enumerate}
\item $f_{i_1}$ or $f_{i_2}$ has a tail in $s$;
\item $f_{i_1}$ or $f_{i_2}$ has a head in $t$;
\item there is a path from an endpoint of $f_{i_1}$ to an endpoint of $f_{i_2}$ that does not use arcs of other bundles;
\item there is a path from an endpoint of $f_{i_2}$ to an endpoint of $f_{i_1}$ that does not use arcs of other bundles.
\end{enumerate}
We guess the first case that applies and, additionally, if one of the first two cases applies,
we guess which of the four subcases applies (whether $f_{i_1}$ or $f_{i_2}$ and
whether tail in $s$ or head in $t$). 
This guessing step results in is $2^{\Oh(k\lambda^2)} = 2^{\Oh(k^3d^2)}$ subcases.

We delete all bundles from $\bundles$ that do not comply with the guess above. Note that
for every $1 \leq j \leq \kappa$ for which one of the two first options is guessed,
at most one bundle $B$ with $\gamma(B) = j$ remains, as there is at most one deletable
arc with tail in $s$ (head in $t$) on a single flow path.

\begin{figure}[tb]
\begin{center}
\begin{tikzpicture}[scale=1.25]
\draw (-5,0) node {$s$};
\draw (5,0) node {$t$};
\foreach \y in {1,2,3} {
  \draw[->] (-4.5, 1-\y*0.5) -- (4.5, 1-\y*0.5) node[pos=0.1,above] {\tiny $P_\y$};
}
\draw[very thick,black!50!green,dashed,->] (0.5, 0.5) -- (1.0, 0.5) -- (0, 0) -- (0.5, 0);
\draw[very thick,black!50!magenta,dashed,->] (-0.5, 0.5) -- (0, 0.5) -- (1, 0) -- (1.5, 0);
\end{tikzpicture}
\caption{The crucial filtering step in the proof of Theorem~\ref{thm:bundled-cut}
  in the special case of 3-\textsc{Chain SAT} problem where the bundles are paths consisting of at most three edges.
  The algorithm already guessed that there is a violated bundle that has its first edge on $P_1$ and its last edge on $P_2$. 
  The green candidate bundle cannot be the one violated by the solution, as the violet candidate
  provides a bypass from a vertex on $P_1$ before the green bundle to a vertex on $P_2$ after the green bundle.}\label{fig:bundled-cut}
\end{center}
\end{figure}
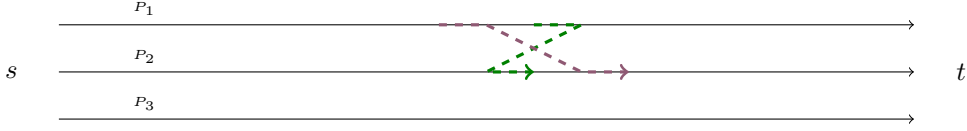

We now make the following (crucial) filtering step.
Iterate over all $1 \leq j \leq \kappa$ and indices $1 \leq i_1, i_2 \leq \lambda$ such that $\alpha(i_1) = \alpha(i_2) = j$ and $i_1 \neq i_2$. 
Consider two bundles $B,B'$ with $\gamma(B) = \gamma(B') = j$ such that 
$e(B,i_1)$ is before $e(B',i_1)$ on $P_{i_1}$ but $e(B,i_2)$ is after $e(B',i_2)$ on $P_{i_2}$.
If the guessed relation for $(j,i_1,i_2)$ is of the third type, then
it cannot hold that $B' = B_j$, as otherwise the said connection for the bundle $B$,
together with a path from $s$ to $e(B,i_1)$ along $P_{i_1}$ and a path from $e(B,i_2)$
to $t$ along $P_{i_2}$ form an $st$-path avoiding $Z$ (see Figure~\ref{fig:bundled-cut}).
Thus, we can delete $B'$ from $\bundles$.
Symmetrically, if the guessed relation for $(j,i_1,i_2)$ is of the fourth
type, then it cannot hold that $B = B_j$, so we can delete $B$ from $\bundles$.

Once we perform the above filtering step exhaustively, for every $1 \leq j \leq \kappa$ the
bundles $B$ with $\gamma(B) = j$ can be enumerated as $B_{j,1}, \ldots, B_{j,n_j}$ such that for every $1 \leq \xi < \zeta \leq n_j$
and for every $i \in [\lambda]$, if $\alpha(i) = j$, then $e(B_{j,\xi},i)$ is before $e(B_{j,\zeta},i)$ on $P_i$. 
Let $a_j^Z$ be such that $B_j = B_{j,a_j^Z}$. 

Note that at this point the deletable edges are exactly the edges $e(B,i)$ for some $i \in [\lambda]$ and $B \in \bundles$.
If all guesses are successful, $Z$ is still a solution and all edges of $Z$ are deletable.

We now construct an auxiliary weighted directed graph $H$ as follows.
Start with $H$ consisting of two vertices $s$ and $t$.
For every $1 \leq j \leq \kappa$, add a path $P^H_j$ from $s$ to $t$ with $n_j$ edges; denote the $a$-th edge as $e_{j,a}$ and set its weight as $\weight(e_{j,a}) = \weight(B_{j,a})$. 
Furthermore, for every $1 \leq i_1,i_2 \leq \lambda$, denote $j_1 = \alpha(i_1)$, $j_2 = \alpha(i_2)$, 
for every $1 \leq a_1 \leq n_{j_1}$ and $1 \leq a_2 \leq n_{j_2}$,
for every endpoint $u_1$ of $e(B_{j_1,a_1}, i_1)$, 
for every endpoint $u_2$ of $e(B_{j_2,a_2}, i_2)$,
if $G$ contains a path from $u_1$ to $u_2$ consisting only of crisp edges,
then add to $H$ an edge of weight $+\infty$
from the corresponding endpoint of $e_{j_1,a_1}$ (i.e., tail if and only if $u_1$ is a tail of $e(B_{j_1,a_1},i_1)$)
to the corresponding endpoint of $e_{j_2,a_2}$ (i.e., tail if and only if $u_2$ is a tail of $e(B_{j_2,a_2},i_2)$).

Observe that $Z' := \{e_{j,a_j^Z}~|~1 \leq j \leq \kappa\}$ is an $st$-cut in $H$. Indeed, if $H$ would contain an arc $(v,u)$ with $v$ before $e_{j,a_{j}^Z}$ and $u$ after $e_{j',a_{j'}^Z}$ for some $j,j' \in [\kappa]$,
then this arc was added to $H$ because of some path between the corresponding endpoints in $G$ and such a path would lead from the $s$-side to $t$-side of $Z$. 
Also, in the other direction, observe that if $Y' = \{e_{j,a_j}~|~1 \leq j \leq \kappa\}$ is an $st$-mincut in $H$, then 
\[ Y = \left\{ e(B_{\alpha(i),a_{\alpha(i)}},i)~|~i \in [\lambda]\right\} \]
is a solution to $\inst$ of the same weight. 

Hence, it suffices to find in $H$ an $st$-cut of cardinality $\kappa$ and minimum possible weight.
Since $\kappa \leq \lambda_H(s,t)$,
this can be done in polynomial time by a reduction to the task of finding an $st$-cut of minimum capacity: we set the capacity of an edge $e$ as $\weight(e) + 1 + \sum_{f \in E(H)} \weight(f)$. 

This finishes the proof of Theorem~\ref{thm:bundled-cut}.
\end{proof}

\subsection{Weighted Directed Feedback Vertex Set}\label{ss:wdfvs}

First, by standard reductions between the edge-deletion versions and vertex-deletion versions,
for Corollary~\ref{cor:wdfvs} we can actually solve the edge-deletion version.
\textsc{Weighted Directed Feedback Arc Set} (\textsc{Weighted DFAS}).
By standard approach, \textsc{Weighted DFAS} can be solved using a subroutine for \textsc{Weighted Skew Multicut}.
Here, we are given a directed graph $G$, a tuple $(s_i,t_i)_{i=1}^b$ of terminal pairs, a weight function $\weight: E(G) \to \mathbb{Z}_+$, and integers $k,W$.
The goal is to find a set $Z \subseteq E(G)$ of cardinality at most $k$, weight at most $W$, and such that there is no path from $s_i$ to $t_j$ in $G-Z$ for any $1 \leq i \leq j \leq b$. 
We observe the following reduction.
\begin{lemma}\label{lem:wdfvs}
Given a \textsc{Weighted Skew Multicut} instance $\inst = (G,(s_i,t_i)_{i=1}^b,\weight,k,W)$, one can in polynomial time construct an equivalent
\textsc{Weighted Bundled Cut} instance $\inst' = (G',\bundles,\weight,k,W)$ with the same $k$ and $W$, where each bundle has at most $b$ deletable edges
and the instance has pairwise linked deletable edges.
\end{lemma}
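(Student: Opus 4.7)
The plan is to encode the skew multicut instance $\inst$ as a single $st$-cut problem on a layered graph, where each edge of $E(G)$ becomes a bundle consisting of its $b$ layer-copies, and a per-vertex crisp ``ladder'' across layers furnishes pairwise linkedness without disturbing the skew structure.

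Concretely, I will set $V(G') = \{v^{(i)} : v \in V(G),\ i \in [b]\} \cup \{s, t\}$ and populate $G'$ with three groups of arcs. First, for every $e = (u,v) \in E(G)$ and $i \in [b]$, I add a soft deletable arc $e^{(i)} = (u^{(i)}, v^{(i)})$, and let the bundle be $B_e := \{e^{(i)} : i \in [b]\}$ of size exactly $b$, with weight $\weight(B_e) := \weight(e)$. Second, for the terminals I add crisp arcs $s \to s_i^{(i)}$ for each $i \in [b]$ and $t_j^{(i)} \to t$ for each $1 \leq i \leq j \leq b$; these force any $st$-path to enter $G'$ at some layer $i_0$ and exit via $t_j^{(i_f)}$ with $i_f \leq j$, matching exactly the ``$i \leq j$'' condition of skew multicut. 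Third, for every $v \in V(G)$ and $i \in [b-1]$, I add a crisp ``advance'' arc $(v^{(i)}, v^{(i+1)})$; their sole role is to witness pairwise linkedness, since for $e = (u,v)$ and $1 \leq i < j \leq b$ the crisp chain $v^{(i)} \to v^{(i+1)} \to \cdots \to v^{(j)}$ connects an endpoint of $e^{(i)}$ to an endpoint of $e^{(j)}$ using only undeletable arcs. The parameters $k$ and $W$ are carried over verbatim, and the bundle-size bound, weight preservation, and pairwise linkedness are immediate from the construction.

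Equivalence I will establish in two directions by a simple layer-projection argument. In the forward direction, given a skew multicut $Z_0 \subseteq E(G)$ I take $Z' := \bigcup_{e \in Z_0} B_e$, whose bundle-cost and weight equal $|Z_0|$ and $\sum_{e \in Z_0} \weight(e)$; any hypothetical $st$-path in $G' - Z'$, stripped of layer superscripts, yields a walk in $G - Z_0$ from some $s_{i_0}$ to some $t_j$ with $i_0 \leq j$, contradicting the skew property. In the backward direction, given a bundled-cut solution $Z'$ of cost $\leq k$ and weight $\leq W$, I set $Z_0 := \{e \in E(G) : Z' \cap B_e \neq \emptyset\}$; any alleged $s_i$-to-$t_j$ path in $G - Z_0$ with $i \leq j$ lifts verbatim into layer $i$ and is closed into an $st$-path using the terminal arcs $s \to s_i^{(i)}$ and $t_j^{(i)} \to t$, while avoiding $Z'$ because each traversed edge $e \notin Z_0$ satisfies $B_e \cap Z' = \emptyset$.

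The main obstacle I anticipate is checking that the crisp advance ladder does not create spurious $st$-reachability that would let a bundled-cut solution ``cheat'' relative to the skew multicut. The key observation on which everything hinges is a monotonicity statement: every non-terminal crisp arc strictly increases the layer index, while the terminal arcs pin the start layer to $i_0$ and the end layer to some $i_f \leq j$. Consequently the layer index is nondecreasing along any $st$-walk in $G'$, so the projected walk in $G$ faithfully exhibits the forbidden ordering $i_0 \leq j$. With monotonicity in hand, the rest of the verification is routine bookkeeping.
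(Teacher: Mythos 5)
Your construction is correct and essentially matches the paper's: both build $b$ layered copies of $G$, bundle the $b$ copies of each edge together, and add crisp layer-advancing ``ladder'' arcs so that pairwise linkedness holds and the layer index is non-decreasing along any $st$-path, which is exactly the monotonicity that encodes the skew constraint $i\le j$. The only cosmetic differences are that you use consecutive advance arcs $v^{(i)}\to v^{(i+1)}$ where the paper uses all-pairs ladders $v^i\to v^j$ (same transitive closure), and you add exit arcs $t_j^{(i)}\to t$ for every $i\le j$, whereas the paper exits only via $t_j^j\to t$ and uses the ladder to climb from $t_j^i$ to $t_j^j$ in the backward direction.
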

\begin{proof}
To construct the graph $G'$, we start with $b$ disjoint copies $G^1, \ldots, G^b$ of the graph $G$. 
By $v^i$, $e^i$, etc., we denote the copy of vertex $v$ or edge $e$ in the copy $G^i$. 
We set $B_e = \{e^i~|~i \in [b]\}$ for $e \in E(G)$ and $\bundles = \{B_e~|~e \in E(G)\}$, that is, all $b$ copies of one edge of $G$ form a bundle. 
We set weights of bundles as $\weight'(B_e) = \weight(e)$. 
There will be no more bundles, so all arcs introduced later to $G'$ are crisp. 

\begin{figure}[tb]
\begin{center}
\begin{tikzpicture}[]
\tikzstyle{vertex}=[circle,fill=black,minimum size=0.2cm,inner sep=0pt]
\node[vertex] (s) at (-1, -3) {};
\node[vertex] (t) at (7, -3) {};
\draw[left] (s) node {$s$};
\draw[right] (t) node {$t$};
\foreach \n in {1,2,3,4,5} {
  \begin{scope}[shift={(0,-\n)}]
  \draw[rounded corners=5pt] (-0.2, -0.2) rectangle (6.2, 0.2);
  \draw (4, 0) node {$G^\n$};
    \node[vertex] (s\n) at (-0.2+\n*0.2, 0) {};
    \node[vertex] (t\n) at (5+\n*0.2, 0) {};
    \draw[right] (s\n) node {$s_\n^\n$};
    \draw[left] (t\n) node {$t_\n^\n$};
    \draw[->] (s) -- (s\n);
    \draw[->] (t\n) -- (t);
    \node[vertex] (x\n) at (2, 0) {};
    \node[vertex] (y\n) at (3, 0) {};
    \draw[->,very thick,red] (x\n) -- (y\n);
  \end{scope}
}
\foreach \a/\b in {1/2,2/3,3/4,4/5} {
  \draw[->] (x\a) -- (x\b);
  \draw[->] (y\a) -- (y\b);
}
\foreach \a/\b in {1/3,2/4,3/5} {
  \draw[->] (x\a) edge[out=260,in=100] (x\b);
  \draw[->] (y\a) edge[out=280,in=80] (y\b);
}
\foreach \a/\b in {1/4,2/5} {
  \draw[->] (x\a) edge[out=250,in=110] (x\b);
  \draw[->] (y\a) edge[out=290,in=70] (y\b);
}
\foreach \a/\b in {1/5} {
  \draw[->] (x\a) edge[out=240,in=120] (x\b);
  \draw[->] (y\a) edge[out=300,in=60] (y\b);
}
\end{tikzpicture}
\caption{The construction used in Lemma~\ref{lem:wdfvs}.
The copies of a single edge of $G$, marked in red, form a single bundle.}\label{fig:wdfvs}
\end{center}
\end{figure}
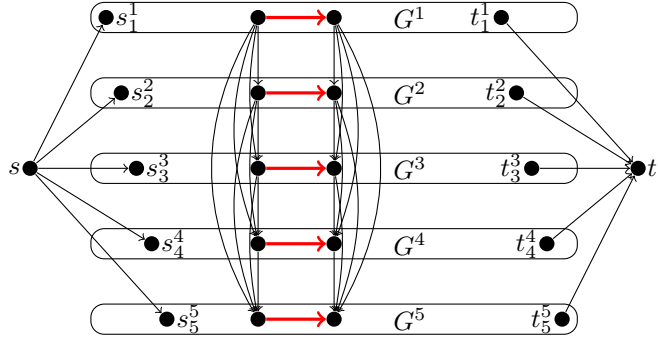

For every $1 \leq i < j \leq b$ and $v \in V(G)$, we add to $G'$ an arc $(v^i,v^j)$. Note that these arcs make the instance satisfy the pairwise linked deletable edges property.
Furthermore, we add to $G'$ vertices $s$ and $t$ and arcs $(s,s_i^i)$ and $(t_i^i,t)$ for every $i \in [b]$. This finishes the description of the instance $\inst' = (G',\bundles,\weight',k,W)$. See Figure~\ref{fig:wdfvs} for an illustration. 

It is straightforward to observe that if $Z$ is a solution to the instance $\inst$, then $\bigcup_{e \in Z} B_e$ is a solution to $\inst'$ of the same weight and touching $|Z|$ bundles.
In the other direction, note that if $Z'$ is a solution to $\inst'$ then $Z = \{e \in E(G)~|~Z' \cap B_e \neq \emptyset\}$ is a solution to $\inst$.
\end{proof}

We deduce the following.
\begin{theorem}
\textsc{Weighted DFAS} and \textsc{Weighted DFVS} can be solved
in time $2^{\Oh(k^8 \log k)} n^{\Oh(1)}$, that is, are FPT when parameterized by $k$.
\end{theorem}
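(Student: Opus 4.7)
The plan is to chain three standard reductions with Theorem~\ref{thm:bundled-cut} as the heavy hammer. First, I would reduce \textsc{Weighted DFVS} to \textsc{Weighted DFAS} via the classical vertex-splitting construction: replace each vertex $v$ with two vertices $v^-, v^+$ joined by a single soft arc of weight $\weight(v)$, and replace every original arc $(u,v)$ by a crisp (undeletable) arc $(u^+,v^-)$. A set of vertices of weight $W$ hitting all cycles in $G$ corresponds bijectively to a set of soft arcs of weight $W$ hitting all cycles in the new graph, preserving cardinality.

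Second, I would reduce \textsc{Weighted DFAS} to \textsc{Weighted Skew Multicut} by the standard iterative compression approach of Chen et al.~\cite{ChenLLOR08}: process the vertices one by one, maintaining a feedback arc set of size at most $k+1$; in each compression step, guess how the sought solution of size $\leq k$ splits into the old solution of size $k+1$ (at most $2^{k+1}$ choices), guess a topological ordering of the $k+1$ ``kept'' edges (at most $(k+1)!$ choices), and produce an equivalent \textsc{Weighted Skew Multicut} instance with $b \leq k+1$ terminal pairs. Weights transfer unchanged because the reduction only partitions the original arcs; the overall multiplicative blowup is $2^{\Oh(k \log k)} \cdot n^{\Oh(1)}$.

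Third, I would apply Lemma~\ref{lem:wdfvs} to each produced \textsc{Weighted Skew Multicut} instance, obtaining an equivalent \textsc{Weighted Bundled Cut} instance with pairwise linked deletable edges in which each bundle contains at most $b \leq k+1$ deletable edges, with the same $k$ and $W$. Finally, I would invoke Theorem~\ref{thm:bundled-cut} with parameter $k$ and $d \leq k+1$; its running time $2^{\Oh(k^4 d^4 \log(kd))} n^{\Oh(1)}$ specialises to $2^{\Oh(k^8 \log k)} n^{\Oh(1)}$, absorbing the $2^{\Oh(k \log k)} n^{\Oh(1)}$ factor from iterative compression and the polynomial factor from vertex splitting.

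The substantive content is entirely pushed into Lemma~\ref{lem:wdfvs} and Theorem~\ref{thm:bundled-cut}; the only things to verify are that all intermediate reductions preserve both $k$ (up to $+1$) and $W$ exactly, and that the Skew Multicut instances produced by iterative compression genuinely certify a \textsc{Weighted DFAS} solution of the claimed weight. The main (minor) obstacle is bookkeeping the weight function faithfully through vertex-splitting and through the topological-ordering step of iterative compression, so that the final application of Theorem~\ref{thm:bundled-cut} returns a solution of weight at most $W$ rather than merely of cardinality at most $k$; but since both reductions are weight-preserving on the nose, this is routine.
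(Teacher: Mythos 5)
Your proposal is correct and matches the paper's (terse) argument exactly: the paper invokes the same chain Weighted DFVS $\to$ Weighted DFAS $\to$ Weighted Skew Multicut (via iterative compression) $\to$ Weighted Bundled Cut via Lemma~\ref{lem:wdfvs}, then applies Theorem~\ref{thm:bundled-cut} with $d\leq k+1$ to get $2^{\Oh(k^8\log k)}n^{\Oh(1)}$. You simply spell out the ``standard'' reductions that the paper leaves implicit.
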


\bibliographystyle{alpha}
\bibliography{../references.bib}
\end{document}